\documentclass[runningheads,11pt]{llncs}
\usepackage[T1]{fontenc}

\usepackage{geometry}
\geometry{
  a4paper,         
  textwidth=14cm,  
  textheight=21cm, 
  heightrounded,   
  hratio=1:1,      
  vratio=1:1,      
}
\linespread{1}

\usepackage{amsmath,amssymb,bm,amsbsy,subfigure}
\usepackage{array,xspace,makeidx,verbatim,epsf,psfrag,calc,fancybox}
\usepackage{color,ifthen,graphicx,enumerate,float,url}
\usepackage{mathrsfs}
\usepackage{epstopdf}
\usepackage{nicefrac}
\usepackage{pstool}
\usepackage{indentfirst}
\usepackage{appendix}

\DeclareMathOperator*{\argmax}{arg\,max}

\renewcommand{\a}{\alpha}

\newcommand{\rw}[1]{{\color{red}#1}}

\newcommand{\dif}{\mathrm{d}}

\newcommand{\bmu}{\boldsymbol{\mu}}
\newcommand{\bx}{\bold{x}}
\newcommand{\bw}{\bold{w}}
\newcommand{\by}{\bold{y}}
\newcommand{\bz}{\bold{z}}
\newcommand{\btau}{\boldsymbol{\tau}}
\newcommand{\barcF}{\bar{\mathcal{F}}}
\newcommand{\cFbeta}{\mathcal{F}^\beta}
\newtheorem{assumption}{Assumption}


\begin{document}
\title{On the Effect of Time Preferences on the Price of Anarchy}
%
%
\author{Yunpeng Li\inst{1} \and
Antonis Dimakis\inst{2} \and
Costas A. Courcoubetis\inst{1}}
\authorrunning{Y. Li et al.}
%
\institute{The Chinese University of Hong Kong, Shenzhen, China\\
\email{\{liyunpeng,costas\}@cuhk.edu.cn} \and
Athens University of Economics and Business, Athens, Greece\\
\email{dimakis@aueb.gr}}

\maketitle
\begin{abstract}

This paper examines the impact of agents' myopic optimization on the efficiency of systems comprised by many selfish agents. In contrast to standard 
congestion games where agents interact in a one-shot fashion, in our model each agent chooses an {\em infinite sequence} of actions and maximizes the total reward stream {\em discounted over time} under different ways of computing present values. 
Our model assumes that actions consume common resources that get congested, and the action choice by an agent affects the completion times of actions chosen by other agents, which in turn affects the time rewards are accrued and their discounted value.
This is a {\em mean-field game}, where an agent's reward depends on the decisions of the other agents through the resulting action completion times. For this type of game we define stationary equilibria, and analyze their existence and {\em price of anarchy} (PoA). 

Overall, we find that the PoA depends entirely on the type of discounting rather than its specific parameters. 
For {\em exponential discounting}, myopic behaviour leads to extreme inefficiency: the PoA is $\infty$ for any value of the discount parameter. 
For {\em power law discounting}, such inefficiency is greatly reduced and the PoA is 2 whenever stationary equilibria exist. This matches the PoA when there is no discounting and players maximize long-run average rewards.  
Additionally, we observe that exponential discounting may introduce unstable equilibria in learning algorithms, if action completion times are interdependent.
In contrast, under no discounting all equilibria are stable.
\end{abstract}


\section{Introduction}

In crowdsourcing, access, and sharing economies, a large number of individuals is repeatedly interacting to exchange goods and services, with each individual pursuing his or her own interest. 
At least as early as in \cite{Pigou}, it  has been observed that selfish behavior may lead to inefficient outcomes (from the point of view of the system as a whole), e.g., see \cite{agt} and references therein, where the {\em price of anarchy (PoA)} is used to quantify the degree of inefficiency.
While the efficiency of a system is usually measured in the long term, e.g., by the profit over a long time period, individual agents prefer earlier gratification than one later on.
An aspect that, to the best of our knowledge, has not been considered before in the PoA literature, is
the effect the {\em time preference}, i.e., way of discounting future revenue, of the individual agents have on the PoA.

The price of anarchy has been  studied extensively for selfish routing games in networks, where sharp results exist \cite{roughgarden}. Since agents interact in a limited `one-shot' way, the concept of time preferences is not relevant in this type of games. Similarly, in games between dynamic flows (also known as flows over time) \cite{correa,skutella}, where the notion of time is explicit, the focus is on performance measures such as throughput where discounting is again not meaningful.

In contrast, exponential discounting of rewards has been considered in {\em mean-field games (MFG)}, e.g., in \cite{adlakha,hopenhayn,huang,jovanovic}, where repeated interactions between a continuum of agents of a constant mass take place. The existing literature is primarily focused on existence of equilibria and only a few works tackle PoA \cite{tomlin,carmona_poa,cd23}, but without discounting.

We consider a finite action mean field  game  between a population of agents of constant mass. Each agent chooses actions from a finite set, where each
action has its own reward and completion time. In contrast to standard 
congestion games where agents interact in a one-shot fashion, in our model each agent chooses an {\em infinite sequence} of actions and maximizes the total reward stream {\em discounted over time} under different ways of computing present values for rewards obtained further in the future.
A basic feature of our model is that the completion time of an action chosen by an agent depends in a general way on the number of agents that execute other actions. This is a case of systems where actions consume common resources that get congested, and thus the action choice by an agent affects the completion times of actions chosen by other agents and the amount their rewards get discounted.

In contrast to  finite state finite action mean field games \cite{guo2022},  the game we consider is stateless, i.e., can be represented using a single state, 
and is classified in the literature as a `stateless mean field game' \cite{Yardim2023}, also related to multi-armed bandit games \cite{Gummadi2013,WangJ21}. Unlike stateful mean field games, our model does not incorporate state transitions. When we consider the average reward criterion, our model can be reduced to the one investigated in \cite{Yardim2023}. However, when considering a more realistic discounted reward criterion, our problem exhibits a unique feature: the population distribution over actions affects both the frequency of receiving rewards {\em and} the amount of discounting these actions receive.
These aspects, although fundamental to {\em understand myopic behaviour in the case of congestion}, have not been explored so far in the literature.

 Although it may seem simple,  stateless mean field games have a range of applications in areas such as computer networks, wireless communications, and transportation, where the communication or travel delay of a link is influenced by the number of users accessing it. Specifically,
our MFG formulation is inspired by the following examples, with the first two adapted from \cite{Banez2021,Yardim2023}.

\textbf{Resource  allocation in cloud-edge systems.}
Consider a distributed computing architecture with \(n\) virtualized resource pools (e.g., edge computing nodes) serving  a large population of concurrent users. These nodes exhibit heterogeneous processing capacities and load-dependent latency penalties, compounded by virtualization-induced interference effects. Each user dynamically selects a node for task offloading, experiencing congestion-induced latency that scales superlinearly with the instantaneous user density on the chosen node. The resultant latency functions defy explicit modeling due to complex interdependencies in power allocation, thermal constraints, and memory bandwidth sharing across virtual machine.  This complexity necessitates an MFG-based analysis to explore these intricate dynamics effectively.

\textbf{Spectrum sharing in 5G-UAV networks.} Multi-agent spectrum access challenges in 5G-UAV networks go beyond traditional collision-avoidance frameworks. When a large number of unmanned aerial vehicles (UAVs) compete for 
\(n\) licensed spectrum bands, simultaneous transmissions on the same frequency result in probabilistic interference rather than complete signal loss. This behavior resembles time-division protocols in 5G-NR networks, where overlapping transmissions lead to throughput degradation governed by stochastic backoff mechanisms. MFG can effectively model such `soft collision' dynamics, facilitating distributed coordination among agents without the need for explicit message passing.

\textbf{Spatial competition among food delivery couriers.} In urban food delivery systems, a large population of couriers dynamically select service zones (e.g., shopping mall hotspots) based on real-time order frequency and competitor distribution. Each courier's revenue depends nonlinearly on the spatial order density: high-density zones offer a high frequency of orders but intensify competition, while low-density areas reduce contention at the cost of increased idle time and delivery distances. The absence of centralized coordination necessitates MFG-based analysis where agents' spatial distributions are  formed through strategical interactions. 
 
We distill the above examples into a simplified toy model, presented in Appendix \ref{Motivating Example}.
The existing formulations of finite state finite action  mean field games, some of which also addresses scenarios similar to those mentioned above, are typically modeled in discrete time (e.g., \cite{guo2022,Yardim2023}), with delay or latency incorporated into the reward structure. In contrast, we propose a continuous-time formulation of mean field games. This approach naturally aligns with the applications described and is exceptionally rare in the literature, with the only comparable example being \cite{neumann}. Our novel yet intuitive formulation of mean field games highlights the necessity and relevance of studying the impact of time preference on game efficiency.

We consider the effect of two types of devaluation over time: {\em exponential} and {\em power-law discounting}.
The former is a widely used model owing to its tractability due to representing {\em consistent} time preferences: the preference between rewards received at different times does not change under equal time shifts. Power-law discounting allows inconsistent preferences, and in certain contexts it is a more plausible model.  (E.g., hyperbolic discounting \cite{Prelec2004} --a particular type of power law discounting-- is supported by evidence from neuroscience \cite{HAMPTON2017336}.)

Each agent is assumed to maximize the total present value of all future rewards accrued to her, and we assess the average rate of rewards generated over time by all agents in a system equilibrium (if such an equilibrium exists; the equilibirum analysis of this system, although simple to define, is in general highly non-trivial). We compare this against the maximum reward rate obtainable if the agents were assigned actions optimally by a central planner. 

Our main findings are:
\begin{itemize}
 \item Under exponential discounting, the mean field game is equivalent to a {\em stable population game} \cite{Sandholm2009}, and an equilibrium always exists. The PoA is $\infty$ for any value of the discount factor. An explicit expression of PoA is provided, in terms of appropriately defined indices for each action. This is useful in obtaining the PoA in restricted classes of instances but more importantly, it reveals how the parameters determine the PoA. 


\item Under power law discounting, stationary equilibria do not always exist. When the discounting parameter $\alpha>1$, there may be no stationary best-response to any stationary distribution of agents and hence no stationary equilibrium. But, if such an equilibrium exists, the PoA is $2$. When $\alpha\leq1$, a stationary equilibrium always exists. In this case, the mean field game is equivalent to a stable population game (as in the case of exponential discounting) and the  PoA is $2$.

\item Exponential discounting may introduce unstable equilibria in learning dynamics. We show this by analyzing a system with two actions but with more intricate dependencies between their completion times. In contrast, we show that under no discounting all equilibria are stable, as also observed to hold, e.g., for the replicator dynamics in selfish routing \cite{Fischer2004}.

\end{itemize}

The paper is organized as follows.
Section 2 defines the class of mean-field games under consideration and the equilibrium concepts that we study. The case of exponential discounting is examined in Section 3, where we examine the existence  and stability of a stationary equilibrium, and calculate the PoA. In Section 4, we deal with power law discounting. We discuss when stationary equilibria exist and derive the PoA when such equilibria exist. In Section 5, we discuss how discounting can add more equilibria and affect stability. We conclude with a brief discussion on some open questions in Section 6.

\section{Game Model}
\label{sec:model}

We consider a mass $m$ of nonatomic agents where each agent chooses from a set $A=\{1,2,\ldots,n\}$ of \emph{actions}, with $n\ge2$. 
Action $i\in A$ takes $\tau_i$ time units to complete and {\em after that time} a reward $r_i$ is accrued to the acting agent. 
We call $r_i$ and $\tau_i$ the \emph{reward} and \emph{sojourn time} associated with action $i$, respectively, and let $\boldsymbol{r}=(r_i)_{i\in A}$ and $\boldsymbol{\tau}=(\tau_i)_{i\in A}$.
After an action completes, agents choose their next action and continue to do so over an infinite time horizon. 
A key feature of our model is that the sojourn times are not fixed and depend on the fraction of the population that choose the different actions. The form of this dependence is described in the following subsection. 


Let ${\mathcal P}(A)$ be the set of probability measures over the set of actions $A$, i.e., all $n$-dimensional probability vectors. Then, $\sigma_k \in {\mathcal P}(A)$ is the strategy of the agent to choose its $k$th action, i.e.,
$\sigma_k(i)$ is the probability that the agent chooses $i\in A$ as his $k$th action.
A \emph{strategy} $\boldsymbol{\sigma}$ is defined as an infinite sequence of probability measures, $\boldsymbol{\sigma}=(\sigma_k)_{k\in \mathbb{N}}\in \mathcal{P}(A)^{\mathbb{N}}$. A strategy $\boldsymbol{\sigma}$ is \emph{stationary} if $\sigma_k$ is the same for all $k$; we use $\mathcal{P}(A)$ to denote the set of stationary strategies.
A stationary strategy $\sigma\in {\mathcal P}(A)$ is \emph{deterministic} if $\sigma(i)\in \{0, 1\}$ for all $i\in A$. 
With a slight abuse of notation, we denote by $i$ a deterministic stationary strategy with $\sigma(i) = 1$ and $\sigma(j) = 0, i\neq j$.


%
We assume that the utility (net present value) of receiving an amount of reward $r$ at time $t$ is $u(r, t)$, and that all agents use the same utility function.
For exponential discounting, $u(r,t)=re^{-\beta t}$ where $\beta$ is a positive constant, and
$u(r,t)=rt^{-\a}$ with $\alpha>0$ for power law discounting.  
Higher values of 
$\alpha, \beta$ indicate more myopic agents, focusing more on their immediate payoff rather than the future. 

We consider a single agent out of the mass of the remaining agents. His objective is to maximize the sum of utilities for each reward received over an infinite time horizon. 
 We assume that the system is stationary, i.e., the rest of the agents use a stationary strategy to choose their actions. This implies some stationary value $\boldsymbol{\tau}$ for the action sojourn times, which is the one observed by the given agent when he makes decisions.
 
 Given the agent's strategy $\boldsymbol{\sigma}=(\sigma_k)_{k\in \mathbb{N}}$, let $a_j$ be the $j$th action the agent chooses. Then, at time $\sum_{j=1}^k\tau_{a_j}$, when the $k$th action completes, the agent receives reward $r_{a_k}$.
 The {\em (expected) total utility} under $\boldsymbol{\sigma}$  is
 \begin{equation}\label{total_reward_individual_agent}
V(\boldsymbol{\sigma},\boldsymbol{\tau})=\mathbb{E}_{\boldsymbol{\sigma}}\left[\sum_{k=1}^\infty u\bigg(r_{a_k},\sum_{j=1}^k\tau_{a_j}\bigg)\right],
 \end{equation}
given the sojourn times $\boldsymbol{\tau}$, where the expectation is taken with respect to $\boldsymbol{\sigma}$.

 A strategy $\boldsymbol{\sigma}^*$ is optimal if 
 \begin{equation}\label{condition_optimal_strategy}
     V(\boldsymbol{\sigma}^*,\boldsymbol{\tau})=\sup\limits_{\boldsymbol{\sigma}\in \mathcal{P}(A)^{\mathbb{N}}}
     V(\boldsymbol{\sigma}, \boldsymbol{\tau}).
 \end{equation}

Note that an agent's total utility $V(\boldsymbol{\sigma}, \boldsymbol{\tau})$ is determined by his own strategy $\boldsymbol{\sigma}$ as well as those of the other agents
through the effect they have on the vector of sojourn times $\boldsymbol{\tau}$. 
This is a mean field game\footnote{A mean field game (MFG) is a game-theoretical framework that models the strategic decision-making of a large number of interacting agents \cite{jovanovic,huang,hopenhayn,adlakha}. In MFGs, each agent interacts with the  `mean field', which represents the average effect of the strategies taken by  all the other agents. In our model, the mean field corresponds to the distribution over the set of actions (defined in the following section), which directly determines the values of sojourn time $\tau$.}, where the interactions among a large number of agents can be approximated by their average effect on each agent, i.e., the {\em mean-field value}, which in our case is $\boldsymbol{\tau}$.


In summary, the game we consider is of a mean field type with a mass $m$ of nonatomic agents as players, $\mathcal{P}(A)^{\mathbb{N}}$ as the set of strategies, $V(\boldsymbol{\sigma},\boldsymbol{\tau})$ in \eqref{total_reward_individual_agent} as payoffs, and
$\boldsymbol{\tau}$ as the mean field term.  We denote the stateless mean field game investigated in this work by $\mathcal{G}=(m,A,\boldsymbol{r},\boldsymbol{\tau},u)$. In this paper we make no assumptions that the reader is familiar with  MFG theory, and our analysis is self-contained. 

\subsection{Sojourn Times}



For clarity and simplicity in both presentation and analysis, we assume that sojourn times are deterministic for a given agent distribution over actions. This assumption is sensible when the system operates in a steady state. A more realistic and general approach would involve sojourn times following a general probability distribution that evolves dynamically based on the agent distribution over actions. However, deterministic sojourn times can be interpreted as a one-point mass distribution, representing an extreme point in the set of all possible sojourn time distributions.
This assumption aligns with our focus on studying the price of anarchy, which examines worst-case scenarios. In fact, under exponential discounting, the one-point mass distribution indeed represents the worst-case scenario, a result that can be demonstrated using Jensen's Inequality. In future work, we plan to extend our analysis to accommodate general sojourn time distributions, allowing for a broader and more realistic exploration of the system dynamics. In this work, deterministic sojourn times serve our purpose of investigating the effects of time preference issues.

We assume stationarity in how agents make decisions, and sojourn times are determined by the \emph{stationary} distribution $\boldsymbol{\mu} = (\mu_i)_{i\in A}$ of agents over the actions, i.e., for each action $i\in A$, the mass $\mu_i$ of agents who continuously choose $i$. Note that since the mass of the agents is infinitely divisible, this definition is equivalent with agents choosing actions with fixed probabilities.
For any total agent mass $m>0$, let $\mathcal{U}_m(A)=\{\boldsymbol{\mu}\in\mathbb{R}_+^{n}\mid \sum_{i\in A}\mu_i$$=m\}$, i.e, the set of all possible mass distributions over $A$.
Thus, we treat $\tau_i$ as functions from $\mathcal{U}_m(A)$ to ${\mathbb{R}_+}$,
with
$\boldsymbol{\tau}(\boldsymbol{\mu}) = (\tau_i(\boldsymbol{\mu}))_{i\in A}$.
 Using $\tau_i$, we define the {\em action rate} function 
\begin{equation}
    \label{eq:little_law}
x_i(\boldsymbol{\mu}) = 
    \frac{\mu_i}{\tau_i(\boldsymbol{\mu})},
\end{equation}
on $\mathcal{U}_m(A)$, and let $\boldsymbol{x}(\boldsymbol{\mu}) = (x_i(\boldsymbol{\mu}))_{i\in A}$. 
This condition expresses Little's law: at any time, the average mass of agents in a service facility where the action is executed, is equal to the average rate of agents entering the facility  multiplied by the average time an agent spends in the facility. 
Thus,
$x_i(\boldsymbol{\mu})$ is interpreted as the rate (in mass per unit of time) of agents executing action $i\in A$.

We assume $\boldsymbol{\tau}$ satisfies the following monotonicity property that relates sojourn times to action rates\footnote{Assumption \ref{assum_sojourntime} stems from closed queuing network analysis \cite{kelly1989} and is also a common assumption in the congestion games literature (e.g., see \cite{roughgarden}): as a resource is used at a higher rate the congestion delay is nondecreasing.}:
\begin{assumption}
\label{assum_sojourntime}
The function $\boldsymbol{\tau}: \mathcal{U}_m(A)\rightarrow \mathbb{R}_+^{n}$
is continuous and rate monotonous, i.e.,
     for any $\boldsymbol{\mu},\boldsymbol{\mu}^\prime\in\cup_{m>0}\, \mathcal{U}_m(A)$ s.t. $x_i(\boldsymbol{\mu})< x_i(\boldsymbol{\mu}^\prime)$ for some $i$, then $\tau_{i}(\boldsymbol{\mu})\leq \tau_i(\boldsymbol{\mu}^\prime)$.
\end{assumption}

To make the connection of Assumption \ref{assum_sojourntime} with congestible resources, we provide the next example of sojourn times arising in a fluid queueing model. \\

{\bf A Fluid Sojourn Time Model}. 
We consider a resource model with independent actions and `parallel' resources.  More specifically, we assume each action $i$ consumes one unit of resource $l_i$ per unit time, and the  replenishment rate of resource $l_i$ is $b_i$.  The resources $l_i$'s,  $i\in A$, are `in parallel', i.e., independent of each other (if resources are not parallel, actions may require subsets of  resources that intersect). 
Assume for each action $i$, there is an action execution time for this action to be completed,  which is an increasing function of the rate $x_i$, denoted by $t_i(x_i)$.
Given any mass distribution $\bmu$, because of the resource constraint we need $x_i(\bmu)\le b_i$, and $x_i(\bmu)$ is the consumption rate of resource $l_i$.
If $x_i(\bmu)<b_i$, the sojourn time is simply $\tau_i(\bmu)=t_i(x_i(\bmu))$  since there is no waiting for the resource. If $x_i(\bmu)=b_i$, the sojourn time is  $\tau_i(\bmu)=t_i(x_i(\bmu))+w_i(\bmu)$  where 
$w_i$ is the time agents need to wait for the resource $i$ to become available before they use it. To compute the sojourn time vector $\btau(\bmu)$, we need to solve a system of equations  with unknowns $x_i$'s and $w_i$'s using Little's law \eqref{eq:little_law} and the resource constraint. In fact,  $\btau(\bmu)$ is uniquely determined by the mass distribution $\bmu$ as one can easily check that $(x_i(\boldsymbol{\mu}))_{i\in A}$ is the optimal solution
of the convex optimization problem:
\begin{align}
\max_{\bx\in\mathbb{R}^{n}_+}& \quad \sum_{i\in A} {\mu}_{i} \log(x_{i})   - \sum_{i\in A}\int_0^{x_i}t_i(y)\dif y \label{convex_optimization} \\
\text{s.t.} &\quad   x_i\le b_i,\quad\forall\>i\in A.\label{constraint:resource_i}
\end{align}
Since the objective function is strictly concave, the optimal solution is unique. Furthermore, the optimality conditions imply
that the $w_i(\boldsymbol{\mu})$'s are the optimal Lagrange multipliers for the resource constraints  \eqref{constraint:resource_i}. A 
more detailed description of this fluid queueing model and a formal proof can be found in Appendix \ref{appendix:fluid_model}. Note that in this resource model, $x_i(\boldsymbol{\mu}) < x_i(\boldsymbol{\mu}')$ implies $x_i(\boldsymbol{\mu}) < b_i$ so $\tau_i(\boldsymbol{\mu}) = t_i(x_i(\bmu))\leq t_i(x_i(\bmu')) \leq \tau_i(\boldsymbol{\mu}')$. Thus, Assumption~\ref{assum_sojourntime} is satisfied.

In the results which follow, the sojourn times are only assumed to satisfy Assumption \ref{assum_sojourntime}, which generalizes the `parallel' resources of the fluid queueing model introduced above.
Indeed, the upper bounds for the PoA rest entirely on Assumption \ref{assum_sojourntime}; no underlying resource model is necessary. 
Nevertheless, the fluid queueing model is convenient for providing lower bounds.
The special case where every action has a constant execution time, i.e., the $t_i$'s are constant, is particularly useful for this purpose; we refer to this as the \emph{constant action execution time model}.

\subsection{Stationary Equilibrium}
The following definition is similar to that of a stationary mean-field equilibrium in \cite{lions,adlakha}, but more appropriate in our context.

  \begin{definition}\label{def:stationary_ne}
 A {\em stationary equilibrium} of the game $\mathcal{G}$ is a pair of a mass distribution $\boldsymbol{\mu}^\dagger\in\mathcal{U}_m(A)$ and a stationary strategy $\boldsymbol{\sigma}^\dagger\in\mathcal{P}(A)$ 
such that
\begin{enumerate}
    \item[(a)] 
   $\boldsymbol{\mu}^\dagger$ is compatible with the stationary strategy $\boldsymbol{\sigma}^\dagger$, i.e., 
   $\mu^\dagger_i = \theta \sigma^\dagger(i) \tau_i(\boldsymbol{\mu}^\dagger)$ for some  normalizing $\theta$ such that $\sum_{j\in A}\mu^\dagger_j=m$.
    \item[(b)] $\boldsymbol{\sigma}^\dagger$ is optimal among all possible strategies, i.e.,
    \begin{equation}
     \label{eq:equilibrium_condition}
     V(\boldsymbol{\sigma}^\dagger,\boldsymbol{\tau}(\boldsymbol{\mu}^\dagger))=\sup\limits_{\boldsymbol{\sigma}\in \mathcal{P}(A)^{\mathbb{N}}}V(\boldsymbol{\sigma},\boldsymbol{\tau}(\boldsymbol{\mu}^\dagger)).
 \end{equation}
\end{enumerate}

 \end{definition}

If $(\boldsymbol{\sigma}^\dagger,\boldsymbol{\mu}^\dagger)$ is a stationary equilibrium, then all agents use the same stationary strategy $\boldsymbol{\sigma}^\dagger$ resulting in a  mass of agents proportional to $\sigma^\dagger(i) \tau_i(\boldsymbol{\mu}^\dagger)$ choosing action $i$ where the scale coefficient is $ \theta=\sum_{j\in A}\frac{\mu^\dagger_j}{\tau_j(\boldsymbol{\mu}^\dagger)}$.
Furthermore, given the mass distribution $\boldsymbol{\mu}^\dagger$ and the resulting sojourn times $\boldsymbol{\tau}(\boldsymbol{\mu}^\dagger)$, no agent wants to deviate from the strategy $\boldsymbol{\sigma}^\dagger$, i.e.,
 \eqref{eq:equilibrium_condition} holds. Note that  it suffices to specify a stationary equilibrium by the equilibrium distribution $\boldsymbol{\mu}^\dagger$ as the stationary strategy $\sigma^\dagger$ can be computed as a function of $\boldsymbol{\mu}^\dagger$ according to Definition \ref{def:stationary_ne}.(a) and the above formula of $\theta$. 
We denote the set of stationary equilibria of $\mathcal{G}$ by $\mathcal{S}(\mathcal{G})$ which is a subset of $\mathcal{P}(A)$.
 


In Section \ref{section:exponential_discounting}, we show that a stationary equilibrium always exists under exponential discounting. Although it does not always exist under power law discounting,  we will provide conditions where it does exist in Section \ref{section:nonexponential_discounting}.

\subsection{Social Optimum and Price of Anarchy}
We define the {\em social welfare} as the long-term  average (undiscounted) reward per unit time generated by the $m$ mass of agents.
Given a mass distribution $\boldsymbol{\mu}\in\mathcal{U}_m(A)$, the social welfare is given by 
\begin{equation}
    \label{eq:social_welfare}
    \text{SW}(\boldsymbol{\mu})=\sum_{i=1}^{n}r_ix_i(\boldsymbol{\mu}).
\end{equation}
A distribution $\boldsymbol{\mu}^*$ is \emph{optimal} if
\begin{equation}
    \label{eq:social_welfare_optimal}
    \boldsymbol{\mu}^*\in\mathcal{O}(\mathcal{G})=\argmax_{\boldsymbol{\mu}\in\cup_{m^\prime\le m}\mathcal{U}_{m^\prime}(A)}\text{SW}(\boldsymbol{\mu}),
\end{equation}
where we denote the set of optimal distributions by $\mathcal{O}(\mathcal{G})\subseteq\cup_{m^\prime\le m}$
$\mathcal{U}_{m^\prime}(A)$. Note that we do not require $\boldsymbol{\mu}^*\in\mathcal{U}_m(A)$ as sometimes adding mass may increase congestion and in fact decrease the average reward rate.

The PoA is defined as the largest possible ratio between  the optimal social welfare $\text{SW}(\boldsymbol{\mu}^*)$ and the social welfare $\text{SW}(\boldsymbol{\mu}^\dagger)$ of a stationary equilibrium $\boldsymbol{\mu}^\dagger$, i.e.,
\begin{equation}
    \label{eq:PoA_definition}
    \text{PoA}= \sup_{{\substack{\{(\mathcal{G}, \boldsymbol{\mu}^*, \boldsymbol{\mu}^\dagger) | 
    \boldsymbol{\mu}^* \in \mathcal{O}(\mathcal{G}),
\boldsymbol{\mu}^\dagger\in\mathcal{S}(\mathcal{G}) \}}}} 
    \frac{\text{SW}(\boldsymbol{\mu}^*)}{\text{SW}(\boldsymbol{\mu}^\dagger)}\,,
\end{equation}
where the supremum is taken over all instances of game  $\mathcal{G}$
and corresponding stationary equilibria $\boldsymbol{\mu}^\dagger$, and optimal distributions $\boldsymbol{\mu}^*$. Note that social welfare itself does not involve discounting, as it involves the aggregate benefit.

\section{Exponential Discounting}\label{section:exponential_discounting}
In this section the reward each agent receives is exponentially discounted over time, i.e., $u(r,t)=re^{-\beta t}$. 

\subsection{Existence of Stationary Equilibrium }\label{subsection:se_exponential_discounting}
In a stationary equilibrium, any action which is taken by a positive mass of agents must generate the greatest total utility.
The existence of a stationary equilibrium is established in the following theorem.

\begin{theorem}\label{thm:stationary_equilibrium}
There exists a stationary equilibrium for the game $\mathcal{G}$ with exponential discounting. Furthermore, $\boldsymbol{\mu}^\dagger$ is a stationary equilibrium of $\mathcal{G}$ if and only if 
\begin{equation}
\label{thm:stationary_equilibrium_eq}
    \frac{r_i}{e^{\beta \tau_i(\boldsymbol{\mu}^\dagger)}-1}\ge \frac{r_j}{e^{\beta \tau_j(\boldsymbol{\mu}^\dagger)}-1},\quad\forall\>i\in A^\dagger,\>j\in A,
\end{equation}
where $A^\dagger=\{i\in A\mid \mu^\dagger_i>0\}$.
\end{theorem}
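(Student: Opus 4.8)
The plan is to decouple the argument into (i) solving the single agent's best-response problem against a \emph{fixed} sojourn-time vector $\boldsymbol{\tau}$, and (ii) closing the loop via the compatibility condition to obtain the characterization and, separately, existence. For step (i), fix $\boldsymbol{\tau}$ with all $\tau_i>0$ and set $I:=\max_{i\in A} r_i/(e^{\beta\tau_i}-1)$. For \emph{any} strategy $\boldsymbol{\sigma}$ and any realized action sequence $a_1,a_2,\dots$ with partial completion times $T_k=\sum_{j\le k}\tau_{a_j}$ (and $T_0=0$), I would use the elementary identity $e^{\beta\tau_{a_k}}e^{-\beta T_k}=e^{-\beta T_{k-1}}$ to telescope:
\[
\sum_{k\ge1} r_{a_k}e^{-\beta T_k}
= I\sum_{k\ge 1}\bigl(e^{-\beta T_{k-1}}-e^{-\beta T_k}\bigr)
-\sum_{k\ge1}\delta_{a_k}e^{-\beta T_k}
= I-\sum_{k\ge1}\delta_{a_k}e^{-\beta T_k},
\]
where $\delta_i:=I(e^{\beta\tau_i}-1)-r_i\ge0$ and the telescoping uses $T_k\to\infty$ (guaranteed by $\tau_i>0$). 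Taking expectations gives $V(\boldsymbol{\sigma},\boldsymbol{\tau})=I-\mathbb{E}_{\boldsymbol{\sigma}}[\sum_k\delta_{a_k}e^{-\beta T_k}]\le I$, with equality for the deterministic strategy always playing a maximizer of the index; hence $\sup_{\boldsymbol{\sigma}}V=I$. Since the correction term is a sum of nonnegative contributions, a \emph{stationary} $\sigma$ is optimal if and only if $\delta_i=0$ for every $i$ in its support, i.e. $\operatorname{supp}(\sigma)\subseteq\argmax_{i\in A} r_i/(e^{\beta\tau_i}-1)$. In particular, the value of always playing $i$ is exactly $r_i/(e^{\beta\tau_i}-1)$.

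For step (ii), observe that for any $\boldsymbol{\mu}^\dagger\in\mathcal{U}_m(A)$ the compatibility condition (a) is automatically satisfiable: set $\sigma^\dagger(i)\propto \mu^\dagger_i/\tau_i(\boldsymbol{\mu}^\dagger)=x_i(\boldsymbol{\mu}^\dagger)$ with normalizer $\theta=\sum_j x_j(\boldsymbol{\mu}^\dagger)$, so that $\operatorname{supp}(\sigma^\dagger)=A^\dagger$ (using $\tau_i>0$, $\theta>0$). Applying step (i) with $\boldsymbol{\tau}=\boldsymbol{\tau}(\boldsymbol{\mu}^\dagger)$, optimality of $\sigma^\dagger$ (condition (b)) is equivalent to $A^\dagger\subseteq\argmax_i r_i/(e^{\beta\tau_i(\boldsymbol{\mu}^\dagger)}-1)$, which is precisely \eqref{thm:stationary_equilibrium_eq}. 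This yields both directions of the ``if and only if'': the forward direction reads \eqref{thm:stationary_equilibrium_eq} off from optimality, and the backward direction constructs the equilibrium strategy $\sigma^\dagger$ from $\boldsymbol{\mu}^\dagger$ as above.

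For existence, I would phrase the requirement ``$\boldsymbol{\mu}^\dagger$ is supported on $\argmax_i r_i/(e^{\beta\tau_i(\boldsymbol{\mu}^\dagger)}-1)$'' as a Nash equilibrium of the population game on the compact convex set $\mathcal{U}_m(A)$ with payoffs $F_i(\boldsymbol{\mu})=r_i/(e^{\beta\tau_i(\boldsymbol{\mu})}-1)$. By Assumption~\ref{assum_sojourntime} each $\tau_i$ is continuous, and with $\tau_i>0$ the $F_i$ are continuous on $\mathcal{U}_m(A)$; the best-response correspondence $\boldsymbol{\mu}\mapsto\{\boldsymbol{\nu}\in\mathcal{U}_m(A):\operatorname{supp}(\boldsymbol{\nu})\subseteq\argmax_i F_i(\boldsymbol{\mu})\}$ then has nonempty, convex, compact values and is upper hemicontinuous, so Kakutani's theorem yields a fixed point, which by the equivalence above is a stationary equilibrium. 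Equivalently, one may invoke the standard existence of Nash equilibria for continuous population games, noting that the rate-monotonicity of $\boldsymbol{\tau}$ makes this a \emph{stable} game.

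The main obstacle is step (i): recognizing the exponential telescoping structure that collapses the infinite-horizon discounted objective onto the single per-action index $r_i/(e^{\beta\tau_i}-1)$, and, crucially, that the bound $\sum_k r_{a_k}e^{-\beta T_k}\le I$ is \emph{pathwise}, so it dominates all non-stationary, randomized competitors in \eqref{condition_optimal_strategy} rather than merely the stationary ones. The only technical care needed is the standing positivity $\tau_i>0$, which ensures $T_k\to\infty$, finiteness of the index, and continuity of $F_i$ up to the boundary of the simplex; with this in hand both the characterization and existence are routine.
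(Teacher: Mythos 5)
Your proof is correct, and its overall architecture --- reduce to the single-agent best-response problem against a fixed $\boldsymbol{\tau}$, characterize optimal stationary strategies as those supported on $\argmax_i r_i/(e^{\beta\tau_i}-1)$, then obtain existence via Kakutani applied to the best-response correspondence on $\mathcal{U}_m(A)$ --- matches the paper's proof. The one genuinely different ingredient is your step (i): the paper proves that the supremum in \eqref{condition_optimal_strategy} is attained by a deterministic stationary strategy via a Bellman-type argument (condition on the first action, bound $V(\boldsymbol{\sigma},\boldsymbol{\mu})$ by $\max_i\{r_ie^{-\beta\tau_i}+e^{-\beta\tau_i}V^*(\boldsymbol{\mu})\}$, and sandwich $V^*=v$), whereas you use the pathwise telescoping identity $r_{a_k}e^{-\beta T_k}=I\,(e^{-\beta T_{k-1}}-e^{-\beta T_k})-\delta_{a_k}e^{-\beta T_k}$ with $\delta_i=I(e^{\beta\tau_i}-1)-r_i\ge0$. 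Your identity is valid (Tonelli justifies the termwise expectation since both series have nonnegative summands, and $T_k\to\infty$ needs only $\min_i\tau_i>0$, which the paper implicitly assumes throughout), and it buys something the Bellman argument does not state explicitly: an exact expression $V(\boldsymbol{\sigma},\boldsymbol{\tau})=I-\mathbb{E}_{\boldsymbol{\sigma}}[\sum_k\delta_{a_k}e^{-\beta T_k}]$ for \emph{every} (possibly nonstationary, randomized) strategy, from which the full set of optimal strategies, not merely the optimal value, is read off at once. The paper's route, in exchange, makes visible the connection to Bellman's principle of optimality, which the authors later contrast with its failure under power-law discounting. Your steps (ii) and (iii) coincide with the paper's (the paper verifies closedness of the graph of $\phi$ by an explicit sequence argument where you invoke the standard upper hemicontinuity of the argmax correspondence of continuous payoffs; both are fine, and as you note, stability is not needed for existence).
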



\emph{Sketch of Proof}: Because of exponential discounting, given any mass distribution $\mu$ over the action space, there exists a stationary strategy which is a best response for the representative agent. This is also known as Bellman's principle of optimality. Now by representing mean field equilibria as fixed points of the best response correspondence, we can apply Kakutani's fixed point theorem to show the existence of stationary equilibrium.
The detailed proof is given in Appendix \ref{appendix:proof_thm:stationary_equilibrium}.\qed

In certain cases, the characterization of equilibrium in Theorem \ref{thm:stationary_equilibrium} is sufficient to completely determine the equilibrium, e.g., this is done in Appendix \ref{appendix:se_example_1} for the constant action execution time model.

\subsection{Stability}
In view of Theorem \ref{thm:stationary_equilibrium}, the stateless mean field game $\mathcal{G}$ is in fact equivalent to a \emph{population game}. Moreover, if Assumption \ref{assum_sojourntime} on the sojourn time vector $\bold{\tau}$ holds, $\mathcal{G}$ is a \emph{stable population game}. 
\footnote{A population game is a type of nonatomic game where  agents aims to optimize their \emph{immediate payoffs} over a finite set of strategies. Population games are of particular interest because various well-known evolutionary dynamics—such as best response, replicator, and projection dynamics—have been developed to compute equilibria and explain equilibrium formation \cite{Sandholm2010}. If a population game is \emph{stable}, various evolutionary dynamics are guaranteed to converge to equilibria (Sandholm, 2009).
Population game can be defined by a payoff vector, with each element representing the payoff associated with a specific strategy. These payoffs are determined as functions of the distribution over the set of strategies.}
Consider a population game $\mathcal{F}$ with $m$ nonatomic players, set of strategies $A$, and the payoff functions given by,
\[\boldsymbol{F}(\boldsymbol{\mu})=\left(\frac{r_i}{e^{\beta\tau_i(\boldsymbol{\mu})}-1}\right)_{i\in A},\]
which coincide with the total discounted reward vector of the deterministic stationary strategies in $\mathcal{G}$. 
By the definition of  stable game  \cite{Sandholm2009}, $\mathcal{F}$ is stable if
\begin{equation}
    \sum_{i=1}^{|A|}\Big(\frac{r_i}{e^{\beta\tau_i(\boldsymbol{\mu})}-1}-\frac{ r_i}{e^{\beta\tau_i(\boldsymbol{\mu}^\prime)}-1}\Big)(\mu_i-\mu_i^\prime)\le0.\label{condition:stable}
\end{equation}
It turns out that rate monotonicity of $\boldsymbol{\tau}$ (Assumption \ref{assum_sojourntime}) is a sufficient condition for $\mathcal{F}$ being stable.
\begin{proposition}\label{prop:stable_game}
    The stateless mean field game $\mathcal{G}$ with exponential discounting is equivalent to a population game $\mathcal{F}$, i.e., the set of stationary equilibria of $\mathcal{G}$ coincide with the set of Nash equilibria of $\mathcal{F}$. If Assumption \ref{assum_sojourntime}  holds, $\mathcal{F}$ is a stable game.
\end{proposition}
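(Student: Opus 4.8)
The plan is to treat the two assertions separately: the equivalence of $\mathcal{G}$ and $\mathcal{F}$ follows almost immediately from Theorem~\ref{thm:stationary_equilibrium}, while stability requires a termwise analysis of~\eqref{condition:stable} in which Assumption~\ref{assum_sojourntime} is the only ingredient used.

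For the equivalence, recall that a Nash equilibrium of the population game $\mathcal{F}$ is, by definition, a distribution $\boldsymbol{\mu}^\dagger\in\mathcal{U}_m(A)$ at which every strategy carrying positive mass earns the maximal payoff, i.e. $F_i(\boldsymbol{\mu}^\dagger)\ge F_j(\boldsymbol{\mu}^\dagger)$ for all $i\in A^\dagger$ and all $j\in A$, where $A^\dagger=\{i\in A:\mu_i^\dagger>0\}$. Since the payoff components are exactly $F_i(\boldsymbol{\mu})=r_i/(e^{\beta\tau_i(\boldsymbol{\mu})}-1)$, this is verbatim the characterization~\eqref{thm:stationary_equilibrium_eq} of the stationary equilibria of $\mathcal{G}$. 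Hence $\mathcal{S}(\mathcal{G})$ and the Nash equilibrium set of $\mathcal{F}$ coincide, with no further work; both live in $\mathcal{U}_m(A)$, so there is no mismatch in the domains.

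For stability I would prove the stronger statement that each summand of~\eqref{condition:stable} is nonpositive, since a sum of nonpositive terms is nonpositive. Fix $i\in A$ and two distributions $\boldsymbol{\mu},\boldsymbol{\mu}'$. Because $r_i>0$, the map $t\mapsto r_i/(e^{\beta t}-1)$ is strictly decreasing on $(0,\infty)$, so the factor $F_i(\boldsymbol{\mu})-F_i(\boldsymbol{\mu}')$ always carries the opposite sign to $\tau_i(\boldsymbol{\mu})-\tau_i(\boldsymbol{\mu}')$. Consequently the $i$th summand is nonpositive if and only if $\tau_i$ and $\mu_i$ are co-monotone across the two distributions, namely $(\tau_i(\boldsymbol{\mu})-\tau_i(\boldsymbol{\mu}'))(\mu_i-\mu_i')\ge0$, and the task reduces to establishing this co-monotonicity.

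This co-monotonicity step is the crux, and I expect it to be the main obstacle, because Assumption~\ref{assum_sojourntime} controls how $\tau_i$ responds to the \emph{action rate} $x_i$, whereas stability asks how $\tau_i$ responds to the \emph{mass} $\mu_i$; the two are linked only indirectly through Little's law $x_i=\mu_i/\tau_i$ in~\eqref{eq:little_law}. I would bridge them by contradiction: suppose $\tau_i(\boldsymbol{\mu})>\tau_i(\boldsymbol{\mu}')$ while $\mu_i<\mu_i'$ (the reverse case is symmetric after swapping $\boldsymbol{\mu}$ and $\boldsymbol{\mu}'$). The contrapositive of rate monotonicity forces $x_i(\boldsymbol{\mu})\ge x_i(\boldsymbol{\mu}')$; but $x_i(\boldsymbol{\mu})=\mu_i/\tau_i(\boldsymbol{\mu})$ has strictly smaller numerator and strictly larger denominator than $x_i(\boldsymbol{\mu}')=\mu_i'/\tau_i(\boldsymbol{\mu}')$, giving $x_i(\boldsymbol{\mu})<x_i(\boldsymbol{\mu}')$, a contradiction. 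Note $\mu_i<\mu_i'$ guarantees $\mu_i'>0$, so $x_i(\boldsymbol{\mu}')>0$ and the comparison is legitimate; care is likewise needed to ensure $\tau_i>0$ so that $F_i$ is finite. Co-monotonicity then holds for every $i$, each summand of~\eqref{condition:stable} is nonpositive, and summing over $i$ yields the stability inequality.
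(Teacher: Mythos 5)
Your proposal is correct and follows essentially the same route as the paper: the equivalence is read off from the characterization \eqref{thm:stationary_equilibrium_eq} in Theorem~\ref{thm:stationary_equilibrium}, and stability is reduced to the termwise co-monotonicity of $\tau_i$ and $\mu_i$, established by the same contradiction combining Little's law \eqref{eq:little_law} with Assumption~\ref{assum_sojourntime}. The only difference is cosmetic — you invoke the contrapositive of rate monotonicity before deriving the rate inequality, whereas the paper derives the rate inequality first and then applies the assumption directly.
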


\emph{Sketch of Proof}: 
 The equilibria of $\mathcal{G}$ belongs to  $\mathcal{P}(A)$ and the same is true for $\mathcal{F}$. 
 By Theorem \ref{thm:stationary_equilibrium}, $\mathcal{F}$ and $\mathcal{G}$ have exactly the same equilibrium condition. Hence,
 $\mathcal{G}$ is equivalent to $\mathcal{F}$.
 Now we only need to show that if Assumption \ref{assum_sojourntime}  holds, $\mathcal{F}$ is stable, i.e., \eqref{condition:stable} holds. 
 Note that a sufficient condition for  \eqref{condition:stable} is that for any $\boldsymbol{\mu}$ and $\boldsymbol{\mu}^\prime$ and any $i\in A$,
 \[\mu_i\le\mu_i^\prime\implies\tau_{i}(\boldsymbol{\mu})\le\tau_i(\boldsymbol{\mu}^\prime).\]
 We prove that Assumption \ref{assum_sojourntime} implies this sufficient condition by contradiction.
Assume for some $\boldsymbol{\mu}$ and $\boldsymbol{\mu}^\prime$, $\mu_i\le\mu_i^\prime$, but $\tau_{i}(\boldsymbol{\mu})>\tau_i(\boldsymbol{\mu}^\prime)>0$, then $x_i(\boldsymbol{\mu})<x_i(\boldsymbol{\mu}^\prime)$, and hence by Assumption \ref{assum_sojourntime}, $\tau_{i}(\bmu)\leq \tau_i(\bmu^\prime)$, contradiction! \qed

Since $\mathcal{F}$ is a stable game, the set of Nash equilibria of $\mathcal{F}$ is convex. Moreover, it is globally asymptotically stable under many interesting dynamics like best response dynamic. For more details, we refer the readers to \cite{Sandholm2009}. The stability of  $\mathcal{F}$ relies on Assumption \ref{assum_sojourntime}. In Section \ref{section:stability}, we will see that for a more general sojourn time model violating Assumption \ref{assum_sojourntime}, the equilibrium could even be not locally asymptotically stable.

\subsection{Price of Anarchy}\label{subsection:poa_exponential_discounting}
For any instance of game $\mathcal{G}$ with exponential discouting, we define 
\begin{equation}\label{parameter_mi}
    \chi_i\big(\mathcal{G}\big)=\chi_i(m,A,\boldsymbol{r},\boldsymbol{\tau},u)=\sup_{\boldsymbol{\mu}\in\mathcal{U}_m(A)}\frac{e^{\beta \tau_i(\boldsymbol{\mu})}-1}{\beta \tau_i(\boldsymbol{\mu})},\quad\forall\,i\in A\,.
\end{equation}
Note that as $f(x)=\frac{e^x-1}{x}$ is an increasing function of $x$ for $x>0$, $\chi_i(\mathcal{G})$ is increasing in $\beta$ and $\tau_i$.
We define a characteristic number of $\mathcal{G}$ by
\begin{equation}\label{parameter_m}
    \chi(\mathcal{G})=\max_{i\in A} \chi_i(\mathcal{G}).
\end{equation}
Note that $\chi(\mathcal{G})$ is a function of game parameters $m$, $A$, $\boldsymbol{\tau}$, and $\beta$, but independent of the reward vector $\boldsymbol{r}$.
The PoA of $\mathcal{G}$ is determined by the 
characteristic number  $\chi(\mathcal{G})$. 
\begin{proposition}\label{prop:poa_m+1}
 Given Assumption \ref{assum_sojourntime} on the sojourn time vector $\boldsymbol{\tau}$,
the PoA is $M+1$ when confined to set of instances of  $\mathcal{G}$ with characteristic number $\chi(\mathcal{G}) = M$, i.e.,
\begin{equation*}
    \sup_{{\substack{\{(\mathcal{G}, \boldsymbol{\mu}^*, \boldsymbol{\mu}^\dagger) \mid\chi(\mathcal{G}) = M, 
    \boldsymbol{\mu}^* \in \mathcal{O}(\mathcal{G}),
\boldsymbol{\mu}^\dagger\in\mathcal{S}(\mathcal{G}) \}}}} \frac{\text{SW}(\boldsymbol{\mu}^*)}{\text{SW}(\boldsymbol{\mu}^\dagger)} = M + 1.
\end{equation*}
\end{proposition}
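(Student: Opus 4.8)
The plan is to prove matching bounds: $\text{PoA}\le M+1$ for every instance with $\chi(\mathcal{G})=M$, and a sequence of instances whose welfare ratio tends to $M+1$. Throughout, write $V^\dagger=\max_{i\in A}\frac{r_i}{e^{\beta\tau_i(\boldsymbol{\mu}^\dagger)}-1}$ for the common equilibrium utility, abbreviate $\tau_i^\dagger=\tau_i(\boldsymbol{\mu}^\dagger)$, $\tau_i^*=\tau_i(\boldsymbol{\mu}^*)$, $x_i^\dagger=x_i(\boldsymbol{\mu}^\dagger)$, $x_i^*=x_i(\boldsymbol{\mu}^*)$, and keep $f(x)=\frac{e^x-1}{x}$. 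Theorem~\ref{thm:stationary_equilibrium} gives $r_j\le V^\dagger(e^{\beta\tau_j^\dagger}-1)$ for all $j\in A$, with equality on $A^\dagger$. Two one-line facts anchor the argument: since $e^x-1\ge x$, we have $\text{SW}(\boldsymbol{\mu}^\dagger)=V^\dagger\sum_{i\in A^\dagger}\frac{e^{\beta\tau_i^\dagger}-1}{\tau_i^\dagger}\mu_i^\dagger\ge V^\dagger\beta m$; and since $\chi_j(\mathcal{G})\le M$ gives $e^{\beta\tau_j^\dagger}-1\le M\beta\tau_j^\dagger$, every reward satisfies $r_j\le V^\dagger M\beta\tau_j^\dagger$.

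For the upper bound I would use linearity of welfare to write $\text{SW}(\boldsymbol{\mu}^*)-\text{SW}(\boldsymbol{\mu}^\dagger)=\sum_j r_j(x_j^*-x_j^\dagger)$ and discard the nonpositive terms, retaining only indices with $x_j^*>x_j^\dagger$. For such $j$, rate monotonicity (Assumption~\ref{assum_sojourntime}) yields $\tau_j^*\ge\tau_j^\dagger$, hence $\tau_j^\dagger x_j^*=\tau_j^\dagger\mu_j^*/\tau_j^*\le\mu_j^*$ and so $\tau_j^\dagger(x_j^*-x_j^\dagger)\le\mu_j^*-\mu_j^\dagger$. Multiplying the reward bound $r_j\le V^\dagger M\beta\tau_j^\dagger$ by $(x_j^*-x_j^\dagger)>0$ then gives $r_j(x_j^*-x_j^\dagger)\le V^\dagger M\beta(\mu_j^*-\mu_j^\dagger)$; summing and using $\sum_j\mu_j^*\le m$ produces $\text{SW}(\boldsymbol{\mu}^*)-\text{SW}(\boldsymbol{\mu}^\dagger)\le V^\dagger M\beta m\le M\,\text{SW}(\boldsymbol{\mu}^\dagger)$, the last step being the welfare floor above. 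Hence $\text{SW}(\boldsymbol{\mu}^*)\le(M+1)\,\text{SW}(\boldsymbol{\mu}^\dagger)$.

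For the lower bound I would build a family of two-action instances in the constant action execution time model. Fix $t_2$ with $f(\beta t_2)=M$ and give action $2$ a capacity so large it never congests, so $\tau_2\equiv t_2$ and $\chi_2=M$. Give action $1$ a short execution time $t_1$ and capacity $b_1=m/T_1$, so that concentrating all mass on it saturates the resource with sojourn $T_1$ and throughput $b_1$; setting $r_1=e^{\beta T_1}-1$ and $r_2=e^{\beta t_2}-1$ (so $V^\dagger=1$) makes ``all mass on action $1$'' a stationary equilibrium with $\text{SW}(\boldsymbol{\mu}^\dagger)=r_1 b_1=\beta m\,f(\beta T_1)$. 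The crucial observation is that the planner can sustain the same throughput $b_1$ of action $1$ with the negligible mass $b_1 t_1$ (running it uncongested), thereby freeing almost all of $m$ for action $2$; evaluating welfare at this distribution gives $\text{SW}(\boldsymbol{\mu}^*)\ge\beta m\,f(\beta T_1)+\beta m\,f(\beta t_2)(1-t_1/T_1)$. Taking $T_1\to0$ with $t_1/T_1\to0$ sends $f(\beta T_1)\to1$ while $\chi(\mathcal{G})=\max(f(\beta T_1),M)=M$ stays fixed, so the ratio tends to $1+M$.

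I expect the lower bound to be the main obstacle, specifically pinpointing where the extra $+1$ comes from. The upper bound decomposes welfare into an equilibrium part (at least $V^\dagger\beta m$) and a planner's gain (at most $M V^\dagger\beta m$), and realizing both bounds at once cannot be done by making a single inequality tight: it requires an action that is congested in equilibrium---wasting mass as pure waiting time---yet delivers the same reward rate under the planner at negligible mass, so that the reclaimed mass can be redeployed on the high-$\chi$ action. Recognizing that this double use of the mass budget is what yields $M+1$ rather than $M$ is the conceptual crux. I would also verify the feasibility of the constructed distributions and the genuineness of the equilibrium through the convex program~\eqref{convex_optimization} of the fluid model, and dispose of the degenerate cases $V^\dagger=0$ (welfare zero) and $M=1$ (attained only in the vanishing-discount limit).
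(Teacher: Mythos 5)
Your proof is correct and takes essentially the same route as the paper's: the upper bound isolates the actions the planner uses more heavily, bounds their contribution by $M\,\text{SW}(\boldsymbol{\mu}^\dagger)$ via the equilibrium condition, the inequality $e^{\beta\tau}-1\le M\beta\tau$, and $e^{x}-1\ge x$, and the lower bound is the identical two-action constant-execution-time instance with $w_1\to0$, $t_1=o(w_1)$. The one difference worth flagging is that by applying the $\chi$-bound at $\boldsymbol{\mu}^\dagger\in\mathcal{U}_m(A)$ (via $r_j\le V^\dagger M\beta\tau_j(\boldsymbol{\mu}^\dagger)$) rather than at $\boldsymbol{\mu}^*$, which may carry total mass below $m$, you sidestep the paper's Lemma~\ref{lem:M_increasing} on the monotonicity of $\chi$ in $m$ — a small but genuine streamlining.
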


\emph{Sketch of Proof}: 
Using Assumption \ref{assum_sojourntime} and the monotonicity property of function $f(x)=\frac{e^x-1}{x}$, one can establish that  $M+1$ is an upper pound of the PoA.
To show $M+1$ is also a lower bound, we give a sequence of instances  for which the ratio inside the supremum in \eqref{eq:PoA_definition} approaches $M+1$.
The complete proof can be found in Appendix~\ref{appendix:proof_prop:poa_m+1}.

Note that $\chi(\mathcal{G})\rightarrow1$ as $\beta\rightarrow0$,
which is consistent with PoA$=2$ in \cite{cd23} where agents maximize their long-run average reward. 
It is surprising that the PoA is independent of other parameters, especially the reward function. Instead, the sojourn time function plays an important role. 
\begin{corollary}\label{coro:infinite_poa}
For the game $\mathcal{G}$ with exponential discounting, the PoA is infinite. 
\end{corollary}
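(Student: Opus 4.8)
The plan is to read the corollary off Proposition~\ref{prop:poa_m+1} almost immediately, by exploiting the fact that the PoA in~\eqref{eq:PoA_definition} is a supremum over \emph{all} instances of $\mathcal{G}$, whereas Proposition~\ref{prop:poa_m+1} only computes the supremum restricted to instances whose characteristic number takes a fixed value $M$. Partitioning the instance space according to the value of $\chi(\mathcal{G})$, I would write
\[
\text{PoA}\;=\;\sup_{M}\ \sup_{\{(\mathcal{G},\boldsymbol{\mu}^*,\boldsymbol{\mu}^\dagger)\,\mid\,\chi(\mathcal{G})=M\}}\frac{\text{SW}(\boldsymbol{\mu}^*)}{\text{SW}(\boldsymbol{\mu}^\dagger)}\;=\;\sup_{M}\,(M+1),
\]
so that the whole statement reduces to checking that $\chi(\mathcal{G})$ can be made arbitrarily large as $\mathcal{G}$ varies, whence $\sup_M(M+1)=\infty$.

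To show that the range of $\chi(\mathcal{G})$ is unbounded above, I would recall that $\chi_i(\mathcal{G})=\sup_{\boldsymbol{\mu}}f(\beta\tau_i(\boldsymbol{\mu}))$ with $f(x)=\frac{e^x-1}{x}$, and that $f$ is increasing with $f(x)\to\infty$ as $x\to\infty$ (and $f(x)\downarrow 1$ as $x\downarrow 0$). The cleanest way to realize any prescribed value is the constant action execution time model, where each $\tau_i$ reduces to a fixed positive constant $t_i$ in the uncongested regime; then $\chi(\mathcal{G})=\max_i f(\beta t_i)$, which traverses all of $(1,\infty)$ as $\beta$ ranges over $(0,\infty)$. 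Hence for every target $M$ there is a genuine instance with $\chi(\mathcal{G})=M$. Since Theorem~\ref{thm:stationary_equilibrium} guarantees that a stationary equilibrium exists for every instance under exponential discounting, the set $\mathcal{S}(\mathcal{G})$ is nonempty for each such $\mathcal{G}$, so these instances are admissible in the supremum~\eqref{eq:PoA_definition}.

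Combining the two pieces, for arbitrarily large $M$ there exist admissible instances with $\chi(\mathcal{G})=M$, and along them the welfare ratio can be pushed to $M+1$ by Proposition~\ref{prop:poa_m+1}; letting $M\to\infty$ forces $\text{PoA}=\infty$.

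I do not anticipate any genuine analytic obstacle here: the corollary is essentially a repackaging of Proposition~\ref{prop:poa_m+1} together with the elementary unboundedness of $f(x)=(e^x-1)/x$. The only point deserving a line of care is the bookkeeping — verifying that the union over $M$ of the restricted instance classes is exactly the full instance class in~\eqref{eq:PoA_definition}, and that each restricted class is nonempty (so that sending $M\to\infty$ is legitimate rather than vacuous), which is precisely what Theorem~\ref{thm:stationary_equilibrium} secures.
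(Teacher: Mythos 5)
Your proposal is correct and follows essentially the same route as the paper, which itself describes the corollary as obtained by ``letting $M\to\infty$'' in Proposition~\ref{prop:poa_m+1}; the paper's appendix merely makes this concrete by exhibiting the diverging two-action instances (fixing $t_1$, $b_1$, $m$ and sending $t_2\to\infty$ so that $\chi(\mathcal{G})=\frac{e^{\beta t_2}-1}{\beta t_2}\to\infty$), which is exactly the unboundedness-of-$\chi$ step you supply via the constant action execution time model. Your bookkeeping points --- that every instance has a finite characteristic number so the partition over $M$ is exhaustive, and that Theorem~\ref{thm:stationary_equilibrium} guarantees $\mathcal{S}(\mathcal{G})\neq\emptyset$ so the instances are admissible --- are both valid and consistent with the paper's argument.
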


We can simply let $M\rightarrow\infty$ in Proposition \ref{prop:poa_m+1} to obtain  the corollary. A more detailed proof is given in Appendix \ref{appendix:proof_coro:infinite_poa}.
With exponential discounting the efficiency loss can be as large as possible. In fact, the worst case occurs with two actions in the constant action execution time model: the first action corresponds to the scenario when the agent receives a small amount of reward $r_1$ for a short duration of time $t_1$, while the second action corresponds to the scenario when the agent receives a large amount of reward $r_2$ for a long duration of time $t_2$. The PoA goes to infinity as both  $r_2$ and $t_2$ go to infinity. The myopic agents will clearly value action 1 more than action 2 due to exponential discounting. They ignore the action 2 which can give them an average reward which  we can choose to go to infinity. We will see that consistency of the time preferences is crucial in obtaining an infinite PoA in the following section.

Proposition \ref{prop:poa_m+1} has numerous implications. PoA is determined by the action with the largest $\chi_i(\mathcal{G})$ even if the $n$ parallel actions are independent of each other. It indicates  that  to increase system efficiency, we should improve the `worst action' with the largest  $\chi_i(\mathcal{G})$.   By doing so,  we can  at most reduce the price of anarchy to 2 since the minimal value of $\chi(\mathcal{G})$ is 1. Furthermore, to achieve the best possible price of anarchy, it may require reducing the sojourn time of the bottleneck action as much as possible,  taking into account that the discounting coefficient $\beta$ is usually fixed.   

\section{Power Law Discounting}\label{section:nonexponential_discounting}
In this section, we assume $u(r,t)=rt^{-\alpha}$ where $\alpha\ge 0$. We purposely allow $\alpha=0$ since it corresponds to the case of no discounting, i.e., when considering average reward. 
Since the Bellman's principle of optimality does not necessarily always hold, the best response of an agent may not be a stationary strategy. We will see that this is the case when $\alpha>1$. When $0\le\alpha\le1$, the best response of an agent can always be a stationary strategy and stationary equilibrium exists. Nevertheless,  we are able to establish PoA results whenever a stationary equilibrium exists. 


\subsection{Stationary vs Non-stationary Strategies}
 We wonder when an agent's best response against a stationary mass distribution can be a stationary strategy. This is necessary for the existence of a stationary equilibrium.
Since for every randomized stationary strategy there is a deterministic one which is at least as good (in terms of total utility), it suffices to only consider  deterministic stationary strategies. The total utility under a deterministic strategy which chooses action $i$ is
\[V(i,\boldsymbol{\tau})=\sum_{n=1}^\infty u(r_i,n\tau_i)
= r_i\tau_i^{-\alpha}\sum_{n=1}^\infty n^{-\alpha}\,,\]
which is finite if $ \alpha > 1$. If $0\le\alpha\le 1$, the series diverges and hence we compare the {\em total utility generated by time} $T$, given by the approximation 
\begin{equation}\label{eq:reward_VT}
 V_T(i, \boldsymbol{\tau}) =   \begin{cases}
r_i\tau_i^{-\alpha}\sum_{n=1}^{\left \lfloor \frac{T}{\tau_i} \right \rfloor}n^{-\alpha}\sim
\frac{r_i}{\tau_i}  T^{-\alpha+1},&\alpha<1\,,\\  
r_i\tau_i^{-1}\sum_{n=1}^{\left \lfloor \frac{T}{\tau_i} \right \rfloor}n^{-1}
\sim \frac{r_i}{\tau_i} \log{T},&\alpha=1\,,\\
\end{cases}
\end{equation}
as $T$ grows large. The condition $\alpha>1$ or $\alpha\le 1$ determines whether best response can always be a stationary strategy.
\begin{proposition}\label{prop:switchingornot}
    Consider game  $\mathcal{G}$ with power law discounting.
    \begin{enumerate}
        \item[1.] When $\alpha>1$, there exist a game instance of $\mathcal{G}$ and a stationary mass distribution $\boldsymbol{\mu}$ for which 
        the best response of an agent against $\boldsymbol{\mu}$ cannot be stationary.
        
        \item[2.] When $\alpha\leq 1$,  the best response of an agent against any stationary mass distribution can be
        a deterministic stationary strategy.
    \end{enumerate}
\end{proposition}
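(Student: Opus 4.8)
The plan is to exploit the dichotomy between two ranking criteria for actions that power-law discounting induces: the \emph{fresh-start} value of repeating action $i$ forever, which is proportional to $r_i\tau_i^{-\alpha}$ (since $V(i,\boldsymbol{\tau})=r_i\tau_i^{-\alpha}\sum_{n\ge1}n^{-\alpha}$), versus the \emph{rate} criterion $r_i/\tau_i$ that governs the continuation value once a large amount of time has already elapsed. For $\alpha\le1$ these two criteria asymptotically agree, whereas for $\alpha>1$ they can be made to disagree, and this disagreement is exactly what forces a stationary best response to fail. Throughout, the best-response problem is against a fixed sojourn-time vector $\boldsymbol{\tau}$, since the mass distribution of the other agents is held fixed.

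For Part 1 I would work in the constant action execution time model with two actions, so that sojourn times $\tau_1<\tau_2$ are fixed constants realized by any compatible $\boldsymbol{\mu}$. The crucial observation is that since $\alpha>1$ and $\tau_1/\tau_2<1$ we have $(\tau_1/\tau_2)^\alpha<\tau_1/\tau_2$, so the rewards can be chosen with $(\tau_1/\tau_2)^\alpha<r_1/r_2<\tau_1/\tau_2$. This makes action $1$ strictly better from a fresh start ($r_1\tau_1^{-\alpha}>r_2\tau_2^{-\alpha}$, so it is the best pure stationary strategy) yet gives action $2$ the higher rate ($r_2/\tau_2>r_1/\tau_1$). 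I would then compare the strategy ``play $1$ for $K$ completions, then play $2$ forever'' against ``play $1$ forever.'' Both agree on the first $K$ rewards, so the comparison reduces to the continuation values from time $T_K=K\tau_1$, which for $\alpha>1$ satisfy $\sum_{n\ge1}r_i(T_K+n\tau_i)^{-\alpha}\sim\frac{r_i}{\tau_i}\frac{T_K^{1-\alpha}}{\alpha-1}$; since $r_2/\tau_2>r_1/\tau_1$, the action-$2$ continuation strictly dominates for $K$ large. Hence ``play $1$ forever''—which by the domination of deterministic over randomized stationary strategies is the best stationary strategy—is not optimal, so no stationary strategy is a best response.

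For Part 2 ($\alpha\le1$) I would show that the deterministic stationary strategy $i^\star\in\argmax_i r_i/\tau_i$ is (weakly) optimal, where optimality is measured through the growth of $V_T$ since the total utility diverges. The key step is a uniform upper bound valid for \emph{every} strategy: writing $0=t_0<t_1<\cdots$ for the completion times of an arbitrary action sequence, each reward satisfies $r_{a_k}t_k^{-\alpha}\le\big(\max_i\tfrac{r_i}{\tau_i}\big)(t_k-t_{k-1})t_k^{-\alpha}$, and since $t^{-\alpha}$ is nonincreasing, $(t_k-t_{k-1})t_k^{-\alpha}\le\int_{t_{k-1}}^{t_k}t^{-\alpha}\,\dif t$ for $k\ge2$. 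Summing telescopes the integral and gives $V_T(\boldsymbol{\sigma},\boldsymbol{\tau})\le\big(\max_i\tfrac{r_i}{\tau_i}\big)\int_{t_1}^{T}t^{-\alpha}\,\dif t+O(1)$, i.e. $\le\frac{r_{i^\star}}{\tau_{i^\star}}\frac{T^{1-\alpha}}{1-\alpha}(1+o(1))$ for $\alpha<1$ and $\le\frac{r_{i^\star}}{\tau_{i^\star}}(\log T)(1+o(1))$ for $\alpha=1$. This bound holds on every realization, hence in expectation for randomized and non-stationary strategies alike. Comparing with the matching asymptotics of $V_T(i^\star,\boldsymbol{\tau})$ in \eqref{eq:reward_VT} shows that $i^\star$ attains the maximal leading-order growth, so it is an optimal deterministic stationary best response.

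The main obstacle is Part 2's upper bound, because it must hold against \emph{all} strategies—including non-stationary and randomized ones—and because the relevant objective is the divergent $V_T$ rather than a finite value; the integral-comparison inequality above is what makes this uniform control possible, and care is needed with the $k=1$ term and the $\alpha=1$ logarithmic case so that the $o(1)$ errors do not swamp the leading term. A secondary point to handle cleanly in Part 1 is the reduction from ``no stationary strategy is optimal'' to ``play $1$ forever is not optimal,'' which relies on the already-stated fact that deterministic stationary strategies dominate randomized ones.
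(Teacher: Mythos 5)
Your proposal is correct and follows essentially the same route as the paper's proof: Part 1 exploits exactly the same dichotomy between the fresh-start ranking $r_i\tau_i^{-\alpha}$ and the rate ranking $r_i/\tau_i$ via a switch-once strategy (the paper picks explicit parameters through $\zeta(\alpha)$ and switches at time $Nt_1$, while you switch after $K$ completions and compare continuation values asymptotically — a cosmetic difference), and Part 2 is the paper's argument verbatim in spirit: bound each reward term by $\bigl(\max_i r_i/\tau_i\bigr)$ times the inter-completion increment and control the resulting sum by an integral comparison, matching the leading-order growth of $V_T(i^\star,\boldsymbol{\tau})$.
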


The proof can be found in Appendix \ref{appendix:proof_prop:switchingornot}.
When $\alpha>1$, the large discounting makes it more profitable for the agent to adopt a nonstationary strategy. 
To see this, with a slight loss of rigorousness, consider the constant action execution time model with two actions: the first one with a small reward and a  short sojourn time and the second one with a large reward and a long sojourn time.  The large discounting ratio makes it more profitable for the agents to choose action 1 in the beginning. As time goes on, their preference for immediate rewards diminishes, and it becomes more profitable for the agents to switch to action 2.

When $\alpha\le1$, the inconsistent time preferences do not have an important role; what is important is what happens in the distant future, due to the slow discounting. As a result, it is more profitable for the agent to adopt a stationary strategy. 
In the next section we show that a stationary equilibrium always exists if $\alpha \le 1$.

 \subsection{Existence of Stationary Equilibrium}

In the regime of low discounting, i.e., $\alpha\le1$, the best response against a mass distribution can always be  stationary. Using this fact, we are able to show that  there always exists a stationary equilibrium for the game $\mathcal{G}$ with $\alpha\le1$.

\begin{theorem}\label{thm:se_nonexpo}
There exists a stationary equilibrium for the game $\mathcal{G}$ with $\alpha\in[0,1]$.  $\boldsymbol{\mu}^\dagger$ is a stationary equilibrium if and only if 
\begin{equation}\label{eq:se_nonexpo_cond}
    \frac{r_i}{ \tau_i(\boldsymbol{\mu}^\dagger)}\ge \frac{r_j}{\tau_j(\boldsymbol{\mu}^\dagger)},\quad\forall\>i\in A^\dagger,\>j\in A,
\end{equation}
where $A^\dagger=\{i\in A\mid \mu^\dagger_i>0\}$.
\end{theorem}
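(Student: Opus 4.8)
The plan is to reduce the theorem to a best-response characterization that is already foreshadowed by the asymptotics in \eqref{eq:reward_VT}, and then obtain existence by a fixed-point argument paralleling Theorem \ref{thm:stationary_equilibrium}. First I would pin down the best response against a fixed sojourn-time vector $\boldsymbol{\tau}$. By Proposition \ref{prop:switchingornot}(2), when $\alpha\le1$ it suffices to restrict attention to deterministic stationary strategies, i.e.\ repeatedly playing a single action $i$. Since the total utility $V(\boldsymbol{\sigma},\boldsymbol{\tau})$ diverges for $\alpha\le1$, optimality must be read off the growth of $V_T(i,\boldsymbol{\tau})$: by \eqref{eq:reward_VT} (together with the elementary case $\alpha=0$, where $V_T(i,\boldsymbol{\tau})\sim\frac{r_i}{\tau_i}T$) every action satisfies $V_T(i,\boldsymbol{\tau})\sim\frac{r_i}{\tau_i}\,g(T)$ with the \emph{same} increasing, diverging factor $g(T)=T^{1-\alpha}$ (resp.\ $\log T$ for $\alpha=1$). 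Hence for large $T$ the ordering of $V_T$ across actions is governed solely by the coefficient $r_i/\tau_i$, so the deterministic stationary best responses are exactly the actions in $\argmax_i r_i/\tau_i$. A short renewal-reward computation then shows that any stationary strategy supported on $\argmax_i r_i/\tau_i$ attains the same leading rate and is therefore also optimal.

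Second, I would translate this into the equilibrium characterization. A pair $(\boldsymbol{\sigma}^\dagger,\boldsymbol{\mu}^\dagger)$ satisfying compatibility (Definition \ref{def:stationary_ne}(a)) has $\sigma^\dagger(i)\propto \mu_i^\dagger/\tau_i(\boldsymbol{\mu}^\dagger)$, so using $\tau_i>0$ the support of $\boldsymbol{\sigma}^\dagger$ coincides with $A^\dagger$. By Step~1 and the optimality condition (b), $\boldsymbol{\sigma}^\dagger$ is optimal iff $A^\dagger\subseteq\argmax_i r_i/\tau_i(\boldsymbol{\mu}^\dagger)$, which is precisely \eqref{eq:se_nonexpo_cond}; conversely any $\boldsymbol{\mu}^\dagger$ obeying \eqref{eq:se_nonexpo_cond} yields, via the recovered $\boldsymbol{\sigma}^\dagger$, a stationary equilibrium. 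This establishes the ``if and only if''.

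For existence I would mirror the argument of Theorem \ref{thm:stationary_equilibrium} and Proposition \ref{prop:stable_game}, with $\frac{r_i}{\tau_i(\boldsymbol{\mu})}$ in place of $\frac{r_i}{e^{\beta\tau_i(\boldsymbol{\mu})}-1}$. Consider the population game $\mathcal{F}$ on $\mathcal{U}_m(A)$ with payoffs $F_i(\boldsymbol{\mu})=r_i/\tau_i(\boldsymbol{\mu})$; since $\tau_i$ is continuous and strictly positive, each $F_i$ is continuous on the compact simplex, so $\mathcal{F}$ admits a Nash equilibrium by the standard fixed-point argument (Kakutani applied to the best-response correspondence). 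The Nash condition of $\mathcal{F}$ is exactly \eqref{eq:se_nonexpo_cond}, so by Step~2 this Nash equilibrium is a stationary equilibrium of $\mathcal{G}$; moreover, Assumption \ref{assum_sojourntime} makes $\mathcal{F}$ stable by the same monotonicity argument as in Proposition \ref{prop:stable_game}.

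I expect the main obstacle to be Step~1, namely making the best-response comparison rigorous when $V(\boldsymbol{\sigma},\boldsymbol{\tau})=\infty$: one must work with the truncated utilities $V_T$, justify that the common divergent factor $g(T)$ cancels so that only $r_i/\tau_i$ determines the ordering, treat the boundary case $\alpha=1$ (logarithmic growth) separately, and extend the comparison from single actions to arbitrary stationary strategies via renewal-reward. Once this is in place, the characterization and the fixed-point existence are routine adaptations of the exponential-discounting arguments.
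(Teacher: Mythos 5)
Your proposal follows essentially the same route as the paper: reduce to deterministic stationary best responses via Proposition \ref{prop:switchingornot}(2) (whose proof in the appendix is exactly the rigorous version of your ``Step 1,'' comparing truncated utilities $V_T$ and showing the common divergent factor cancels so only $r_i/\tau_i$ matters), and then obtain existence by mirroring the Kakutani fixed-point argument of Theorem \ref{thm:stationary_equilibrium} with $r_i/\tau_i(\boldsymbol{\mu})$ in place of $r_i/(e^{\beta\tau_i(\boldsymbol{\mu})}-1)$. The characterization \eqref{eq:se_nonexpo_cond} then falls out of the best-response description exactly as you describe, so the proposal is correct and matches the paper's argument.
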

\begin{corollary}\label{cor:se_nonexpo}
    The game $\mathcal{G}$ with $\alpha\in[0,1]$ is equivalent to a population game $\bar{\mathcal{F}}$, defined by the payoff vector,
\[\bar{F}(\boldsymbol{\mu})=\left(\frac{r_i}{ \tau_i(\boldsymbol{\mu})}\right)_{i\in A,}\]
i.e., the set of stationary equilibria of $\mathcal{G}$ coincides with the set of Nash equilibria of $\bar{\mathcal{F}}$. If Assumption \ref{assum_sojourntime} holds, $\bar{\mathcal{F}}$ is a stable game.
\end{corollary}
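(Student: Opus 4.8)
The plan is to mirror the argument for Proposition \ref{prop:stable_game}, since the only change is that the deterministic-strategy payoff $r_i/(e^{\beta\tau_i}-1)$ is replaced by $r_i/\tau_i$, and both are strictly decreasing functions of $\tau_i$ for fixed $r_i>0$.

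First, for the equivalence, I would recall that a Nash equilibrium of the population game $\bar{\mathcal{F}}$ is a distribution $\boldsymbol{\mu}^\dagger$ whose support consists of best responses, i.e., $\bar F_i(\boldsymbol{\mu}^\dagger)\ge \bar F_j(\boldsymbol{\mu}^\dagger)$ for all $i\in A^\dagger$ and $j\in A$, where $A^\dagger=\{i:\mu_i^\dagger>0\}$. Since $\bar F_i(\boldsymbol{\mu})=r_i/\tau_i(\boldsymbol{\mu})$, this is exactly the equilibrium condition \eqref{eq:se_nonexpo_cond} of Theorem \ref{thm:se_nonexpo}. As the equilibria of both $\mathcal{G}$ and $\bar{\mathcal{F}}$ live in $\mathcal{P}(A)$ and satisfy identical conditions, the set of Nash equilibria of $\bar{\mathcal{F}}$ coincides with $\mathcal{S}(\mathcal{G})$, giving the claimed equivalence.

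Next, for stability, by the definition of a stable game \cite{Sandholm2009} it suffices to verify
\[\sum_{i\in A}\Big(\frac{r_i}{\tau_i(\boldsymbol{\mu})}-\frac{r_i}{\tau_i(\boldsymbol{\mu}^\prime)}\Big)(\mu_i-\mu_i^\prime)\le 0\quad\text{for all }\boldsymbol{\mu},\boldsymbol{\mu}^\prime.\]
I would reduce this to a termwise condition: because $r_i>0$ and $t\mapsto r_i/t$ is decreasing, each summand is nonpositive provided $\mu_i\le\mu_i^\prime\Rightarrow\tau_i(\boldsymbol{\mu})\le\tau_i(\boldsymbol{\mu}^\prime)$ (and symmetrically). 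This monotonicity implication is precisely the one established in Proposition \ref{prop:stable_game} from Assumption \ref{assum_sojourntime}: assuming $\mu_i\le\mu_i^\prime$ but $\tau_i(\boldsymbol{\mu})>\tau_i(\boldsymbol{\mu}^\prime)>0$ yields $x_i(\boldsymbol{\mu})=\mu_i/\tau_i(\boldsymbol{\mu})<\mu_i^\prime/\tau_i(\boldsymbol{\mu}^\prime)=x_i(\boldsymbol{\mu}^\prime)$, whence Assumption \ref{assum_sojourntime} forces $\tau_i(\boldsymbol{\mu})\le\tau_i(\boldsymbol{\mu}^\prime)$, a contradiction. Summing the nonpositive terms gives the stable-game inequality.

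There is no genuine obstacle here: the entire content is inherited from Theorem \ref{thm:se_nonexpo} together with the reuse of the contradiction argument in Proposition \ref{prop:stable_game}. The only point worth stating explicitly is the structural reason the proof transfers verbatim, namely that stability depends only on the payoff being decreasing in $\tau_i$ and on the rate-monotonicity of $\boldsymbol{\tau}$, both of which hold here exactly as in the exponential case.
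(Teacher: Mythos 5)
Your proposal is correct and follows essentially the same route as the paper, which simply notes that the corollary is proved exactly as Proposition \ref{prop:stable_game}: the equivalence comes from matching the Nash condition of $\bar{\mathcal{F}}$ with the equilibrium characterization \eqref{eq:se_nonexpo_cond} of Theorem \ref{thm:se_nonexpo}, and stability follows from the termwise monotonicity $\mu_i\le\mu_i^\prime\Rightarrow\tau_i(\boldsymbol{\mu})\le\tau_i(\boldsymbol{\mu}^\prime)$ established by the same contradiction argument via Assumption \ref{assum_sojourntime}. Your explicit remark that the argument transfers because the payoff is decreasing in $\tau_i$ is exactly the structural point the paper relies on implicitly.
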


\emph{Sketch of Proof}: 
    By Proposition \ref{prop:switchingornot}, given any distribution $\bmu$,  there always exists a stationary strategy which is the best response agains $\bmu$. Similar to the proof of of Theorem \ref{thm:stationary_equilibrium}, one can represent mean field equilibria as fixed points of the best response correspondence and apply Kakutani's fixed point theorem to prove that stationary equilibrium exists.
   The proof of Corollary \ref{cor:se_nonexpo} is similar to  the proof of Proposition \ref{prop:stable_game}.

 In Appendix \ref{appendix:se_example_2} we utilize the characterization in Theorem \ref{thm:se_nonexpo} to
compute the equilibrium for the constant action execution time model. Note that $\alpha=0$ corresponds to the case of no discounting. Corollary \ref{cor:se_nonexpo} establishes an interesting result, that the regime of low discounting $(0<\alpha\le1)$ is equivalent to the  regime of no discounting if we are only concerned with stationary equilibria. When $\alpha>1$, it is not always true that a stationary equilibrium exists. 

\begin{proposition}\label{coro:se_nonexistence}
There does not always exist a stationary equilibrium for the game $\mathcal{G}$ with $\alpha>1$. More precisely, for any $\alpha>1$, there exists at least one instance of game  $\mathcal{G}$  for which there exists no stationary equilibrium, i.e., $\mathcal{S}(\mathcal{G})=\emptyset$.
\end{proposition}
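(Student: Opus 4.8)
\emph{Proof proposal.} The plan is to exhibit, for each $\alpha>1$, a two-action instance in the constant action execution time model for which \emph{no} stationary strategy can ever be optimal, so that condition (b) of Definition \ref{def:stationary_ne} fails for every candidate and $\mathcal{S}(\mathcal{G})=\emptyset$. The decisive simplification is to work in a regime where the sojourn times are \emph{constant}: take two actions with base execution times $t_1<t_2$ and resource capacities $b_i$ large enough (e.g.\ $b_i>m/t_i$) that no capacity constraint ever binds. Then $x_i(\boldsymbol{\mu})=\mu_i/t_i\le m/t_i<b_i$ for every $\boldsymbol{\mu}\in\mathcal{U}_m(A)$, so $\tau_i(\boldsymbol{\mu})\equiv t_i$ is independent of the distribution (and Assumption \ref{assum_sojourntime} holds trivially). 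Consequently the total utility $V(\boldsymbol{\sigma},\boldsymbol{\tau})$ in \eqref{total_reward_individual_agent}, and hence the best response, do not depend on $\boldsymbol{\mu}$ at all; it therefore suffices to produce a single instance in which some non-stationary strategy strictly beats every stationary one against the fixed vector $(t_1,t_2)$.

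Next I would choose the rewards. Writing $\tau_i=t_i$, pick $r_1,r_2>0$ with
\[
\frac{r_1}{\tau_1}<\frac{r_2}{\tau_2}\qquad\text{and}\qquad r_1\tau_1^{-\alpha}>r_2\tau_2^{-\alpha}.
\]
These are simultaneously satisfiable \emph{precisely because} $\alpha>1$: they read $r_2/r_1\in\big(\tau_2/\tau_1,\,(\tau_2/\tau_1)^{\alpha}\big)$, a nonempty interval since $\tau_2/\tau_1>1$ and $\alpha>1$. Because a randomized stationary strategy is dominated by a deterministic one (as noted before Proposition \ref{prop:switchingornot}) and there are only two deterministic stationary strategies, with values $r_i\tau_i^{-\alpha}\zeta(\alpha)$ where $\zeta(\alpha)=\sum_{n\ge1}n^{-\alpha}$, the second inequality guarantees that the best stationary strategy is ``always action $1$'', with value $r_1\tau_1^{-\alpha}\zeta(\alpha)$.

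The heart of the argument is to beat ``always $1$'' by a switching strategy. For $k\in\mathbb{N}$ let $\boldsymbol{\sigma}^{(k)}$ play action $1$ for its first $k$ actions and action $2$ forever after. Its value minus that of ``always $1$'' is
\[
D(k)=r_2\sum_{m=1}^{\infty}\big(k\tau_1+m\tau_2\big)^{-\alpha}-r_1\sum_{j=k+1}^{\infty}\big(j\tau_1\big)^{-\alpha}.
\]
Using $\sum_{j>k}j^{-\alpha}\sim k^{1-\alpha}/(\alpha-1)$ and comparing the first sum with $\int_0^\infty(k\tau_1+\tau_2 x)^{-\alpha}\,\dif x=(k\tau_1)^{1-\alpha}/\big(\tau_2(\alpha-1)\big)$, one obtains, with $T=k\tau_1$,
\[
D(k)=\frac{T^{1-\alpha}}{\alpha-1}\left(\frac{r_2}{\tau_2}-\frac{r_1}{\tau_1}\right)+o\big(T^{1-\alpha}\big),
\]
which is strictly positive for all large $k$ by the first inequality. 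Hence $\sup_{\boldsymbol{\sigma}}V(\boldsymbol{\sigma},\boldsymbol{\tau})\ge V(\boldsymbol{\sigma}^{(k)},\boldsymbol{\tau})>r_1\tau_1^{-\alpha}\zeta(\alpha)$, strictly exceeding the value of every stationary strategy. Thus no stationary strategy attains the supremum in \eqref{eq:equilibrium_condition}, so condition (b) of Definition \ref{def:stationary_ne} fails for every pair $(\boldsymbol{\sigma}^\dagger,\boldsymbol{\mu}^\dagger)$, and $\mathcal{S}(\mathcal{G})=\emptyset$.

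The main obstacle is making the tail estimate of $D(k)$ rigorous and \emph{uniform in} $\alpha>1$: the leading-order contributions of the two tails cancel exactly, so the sign is decided by the lower-order term $\big(r_2/\tau_2-r_1/\tau_1\big)T^{1-\alpha}$, and one must control the $o\big(T^{1-\alpha}\big)$ remainder via explicit integral (Euler--Maclaurin type) bounds rather than bare asymptotics, especially as $\alpha\downarrow1$ where $1/(\alpha-1)$ blows up. A secondary point is to verify that the instance genuinely lies in the model of Section \ref{sec:model} with constant sojourn times, i.e.\ that the unconstrained action rates keep all capacities slack, which is exactly why I fix $b_i>m/t_i$ at the outset.
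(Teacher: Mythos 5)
Your proposal is correct, but it reaches the conclusion by a genuinely different route than the paper. You remove congestion entirely (slack capacities, so $\tau_i(\boldsymbol{\mu})\equiv t_i$ for every $\boldsymbol{\mu}$) and show that against \emph{every} distribution the best response is a switching strategy that strictly beats every stationary one, so condition (b) of Definition~\ref{def:stationary_ne} fails universally. This is essentially a sharpening of the paper's own Proposition~\ref{prop:switchingornot}.1 --- whose proof likewise uses constant execution times with unbounded capacities --- into a nonexistence statement; your window $r_2/r_1\in\bigl(\tau_2/\tau_1,(\tau_2/\tau_1)^{\alpha}\bigr)$ and the tail estimate $D(k)=\frac{T^{1-\alpha}}{\alpha-1}\bigl(\tfrac{r_2}{\tau_2}-\tfrac{r_1}{\tau_1}\bigr)+o(T^{1-\alpha})$ are sound and can be made rigorous with the elementary bounds $\frac{(k+1)^{1-\alpha}}{\alpha-1}\le\sum_{j>k}j^{-\alpha}\le\frac{k^{1-\alpha}}{\alpha-1}$ and $\sum_{m\ge1}(k\tau_1+m\tau_2)^{-\alpha}\ge\frac{(k\tau_1+\tau_2)^{1-\alpha}}{\tau_2(\alpha-1)}$. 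Two of your stated worries are non-issues: uniformity in $\alpha$ is not needed since $\alpha$ is fixed in the statement, and there is no leading-order cancellation --- the two tails carry the distinct constants $r_2/\tau_2\neq r_1/\tau_1$, so the sign of $D(k)$ is decided at leading order. The paper instead works with a \emph{congested} instance ($b_1$ finite, $m$ large) and leans on the tie-breaking result of Proposition~\ref{prop:alpha_largerthan1}: if a stationary equilibrium existed, only actions that simultaneously maximize the reward rate \emph{and} minimize the sojourn time could carry mass, and a three-case analysis (all mass on action 1, all on action 2, or split) rules out every candidate unless $r_1=r_2$. The paper's route demonstrates nonexistence even when queues form and the sojourn times genuinely depend on $\boldsymbol{\mu}$; yours is more elementary and self-contained on that front, but it does rely on the paper's asserted (not separately proved) claim that every randomized stationary strategy is dominated by a deterministic one --- you should flag that dependence explicitly, since without it "always action 1" being the best \emph{stationary} strategy is not immediate.
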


The proof is given in Appendix \ref{appendix:proof_coro:se_nonexistence}.
The nonexistence of a stationary equilibrium poses challenges in analyzing system efficiency. Computing the social welfare, defined as the long-run average reward, becomes seemingly intractable. However, we can still compute the PoA in cases where a stationary equilibrium exists.

\subsection{Price of Anarchy of Stationary Equilibria}
In computing the PoA, we don't need to distinguish between the cases where $\alpha>1$ and $0\le \alpha\le1$ when stationary equilibria exist. 


\begin{proposition}\label{prop:poa=2}
Given Assumption \ref{assum_sojourntime} on the sojourn time vector $\boldsymbol{\tau}$,
 for all game instances $\mathcal{G}$ with  $\alpha\ge0$, 
 if a stationary equilibrium exists then the PoA  is 2, i.e.,
\[ \sup_{{\substack{\{(\mathcal{G}, \boldsymbol{\mu}^*, \boldsymbol{\mu}^\dagger) |\mathcal{S}(\mathcal{G})\not=\emptyset, 
    \boldsymbol{\mu}^* \in \mathcal{O}(\mathcal{G}),
\boldsymbol{\mu}^\dagger\in\mathcal{S}(\mathcal{G}) \}}}} 
    \frac{\text{SW}(\boldsymbol{\mu}^*)}{\text{SW}(\boldsymbol{\mu}^\dagger)}=2.\]
\end{proposition}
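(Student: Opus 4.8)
The plan is to establish a matching upper bound ($\text{PoA}\le 2$) and lower bound ($\text{PoA}\ge 2$); the lower bound is already attainable within the regime $\alpha\in[0,1]$, where a stationary equilibrium is guaranteed to exist by Theorem~\ref{thm:se_nonexpo}. The backbone of the upper bound is the fact that whenever a stationary equilibrium $\boldsymbol{\mu}^\dagger$ exists it satisfies the reward-rate characterization \eqref{eq:se_nonexpo_cond}. Writing $\rho=\max_{j\in A} r_j/\tau_j(\boldsymbol{\mu}^\dagger)$, this says $r_i/\tau_i(\boldsymbol{\mu}^\dagger)=\rho$ for every $i\in A^\dagger$ and $r_j/\tau_j(\boldsymbol{\mu}^\dagger)\le\rho$ for every $j\in A$. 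Using $x_i(\boldsymbol{\mu}^\dagger)=\mu_i^\dagger/\tau_i(\boldsymbol{\mu}^\dagger)$ together with $\mu_j^\dagger=0$ off the support, this yields at once $\text{SW}(\boldsymbol{\mu}^\dagger)=\sum_{i\in A^\dagger}\tfrac{r_i}{\tau_i(\boldsymbol{\mu}^\dagger)}\mu_i^\dagger=\rho\, m$.

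For the upper bound I would fix an optimal $\boldsymbol{\mu}^*\in\mathcal{O}(\mathcal{G})$, of total mass $m'\le m$, and split the actions according to how the action rates compare. For actions with $x_i(\boldsymbol{\mu}^*)\le x_i(\boldsymbol{\mu}^\dagger)$ I bound $r_i x_i(\boldsymbol{\mu}^*)\le r_i x_i(\boldsymbol{\mu}^\dagger)$, so that their total contribution is at most $\text{SW}(\boldsymbol{\mu}^\dagger)$. For actions with $x_i(\boldsymbol{\mu}^*)>x_i(\boldsymbol{\mu}^\dagger)$, Assumption~\ref{assum_sojourntime} gives $\tau_i(\boldsymbol{\mu}^*)\ge\tau_i(\boldsymbol{\mu}^\dagger)$, hence $r_i/\tau_i(\boldsymbol{\mu}^*)\le r_i/\tau_i(\boldsymbol{\mu}^\dagger)\le\rho$; therefore $r_ix_i(\boldsymbol{\mu}^*)=\tfrac{r_i}{\tau_i(\boldsymbol{\mu}^*)}\mu_i^*\le\rho\,\mu_i^*$, and summing over this set gives at most $\rho\sum_i\mu_i^*\le\rho\, m=\text{SW}(\boldsymbol{\mu}^\dagger)$. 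Adding the two parts yields $\text{SW}(\boldsymbol{\mu}^*)\le 2\,\text{SW}(\boldsymbol{\mu}^\dagger)$, i.e. $\text{PoA}\le 2$.

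The crux, and the step I expect to be the main obstacle, is justifying \eqref{eq:se_nonexpo_cond} at an arbitrary stationary equilibrium when $\alpha>1$, since there Bellman optimality fails and the equilibrium strategy is not pinned down by a per-period indifference condition. For $\alpha\le 1$ this is exactly Theorem~\ref{thm:se_nonexpo}, so I would concentrate on $\alpha>1$ via a tail/renewal argument. Under the stationary strategy $\boldsymbol{\sigma}^\dagger$ the completion times form a random walk with i.i.d.\ increments, and by the key renewal theorem the utility accrued after a large time $T$ behaves like $\tfrac{1}{\alpha-1}\,\bar\rho\,T^{1-\alpha}$, governed by the effective reward rate $\bar\rho:=\tfrac{\sum_i\sigma^\dagger(i)r_i}{\sum_i\sigma^\dagger(i)\tau_i(\boldsymbol{\mu}^\dagger)}$; by the compatibility condition in Definition~\ref{def:stationary_ne}(a) this equals $\text{SW}(\boldsymbol{\mu}^\dagger)/m$ and is a weighted average of the support reward rates, so $\bar\rho\le\rho$. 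If some support action had reward rate strictly below $\rho$, then $\bar\rho<\rho$, and the deviation that follows $\boldsymbol{\sigma}^\dagger$ up to time $T$ and thereafter always plays a maximizer $i^*$ of $r_i/\tau_i(\boldsymbol{\mu}^\dagger)$ would raise the total utility for all large $T$ (the tails differ by $\sim\tfrac{\rho-\bar\rho}{\alpha-1}T^{1-\alpha}>0$), contradicting optimality of $\boldsymbol{\sigma}^\dagger$. Hence every support action attains $\rho$, which is precisely \eqref{eq:se_nonexpo_cond}; note that only the reward-rate statement is needed, not the finer shortest-sojourn-time tie-break used elsewhere to establish nonexistence.

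Finally, for the lower bound I would exhibit a family of instances whose welfare ratio tends to $2$. Since it suffices to remain in the regime $\alpha\in[0,1]$, where existence is free, I would use the constant action execution time model with two parallel actions in a Pigou-type configuration: one action saturates its capacity in equilibrium so that its congested reward rate drops to that of the second action, while the social optimum spreads mass to keep both actions uncongested. A direct computation drives $\text{SW}(\boldsymbol{\mu}^*)/\text{SW}(\boldsymbol{\mu}^\dagger)\to 2$, matching the upper bound and recovering the no-discounting value of \cite{cd23} at $\alpha=0$. The routine verification of the equilibrium and optimum for this family, and the rigorous renewal estimate above, would be deferred to the appendix.
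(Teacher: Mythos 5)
Your proposal is correct, and its skeleton --- the split of actions by whether $x_i(\boldsymbol{\mu}^*)$ exceeds $x_i(\boldsymbol{\mu}^\dagger)$, the bound of each part by $\text{SW}(\boldsymbol{\mu}^\dagger)$ using Assumption~\ref{assum_sojourntime} and the equilibrium reward-rate condition, and the Pigou-type two-action family with $t_1\to0$ for the lower bound --- is exactly the paper's proof in Appendix~\ref{appendix:proof_prop:poa=2}. The one place you genuinely diverge is the step you correctly identify as the crux: establishing \eqref{eq:se_nonexpo_cond} at a stationary equilibrium when $\alpha>1$. The paper does this via Proposition~\ref{prop:alpha_largerthan1}, a deterministic argument that compares the step function of an arbitrary strategy against that of the always-choose-the-max-rate action interval by interval (using convexity of $t^{-\alpha}$) to show the latter strictly dominates after a finite time $T$; combined with the requirement that an equilibrium strategy be stationary and optimal from time zero, this forces the support onto max-rate actions. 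Your renewal/tail-asymptotics argument --- tail utility $\sim\frac{\bar\rho}{\alpha-1}T^{1-\alpha}$ with $\bar\rho=\text{SW}(\boldsymbol{\mu}^\dagger)/m\le\rho$, and a switching deviation that strictly improves the tail whenever $\bar\rho<\rho$ --- reaches the same conclusion and is arguably lighter, since it needs only an asymptotic comparison of expected tails rather than pathwise per-interval domination; the price is that the renewal estimate must be made rigorous (the paper's version is already exact), and the paper's stronger statement (uniqueness of the asymptotic best response and the shortest-sojourn tie-break) is reused elsewhere for the nonexistence result, so it is not wasted effort there. One small technical point to fix when writing up: since strategies in this model are open-loop sequences indexed by the action count $k$, define your deviation as ``follow $\boldsymbol{\sigma}^\dagger$ for the first $K$ actions, then play $i^*$'' for a deterministic index $K$ (with $S_K\to\infty$ by the law of large numbers), rather than switching at a deterministic clock time $T$.
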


We have previously established in Corollary \ref{cor:se_nonexpo} that the low discounting regime ($0<\alpha\le1$) is equivalent to the regime of no discounting ($\alpha=0$). Hence, it is not surprising that $PoA=2$ for $\alpha\in(0,1]$, the same as the regime of no discounting. However,  the price of anarchy is still 2 in a high discounting regime ($\alpha>1$). Time-inconsistency plays a significant role in this. Myopic agents tend to prioritize immediate rewards. However, their inclination for immediate reward diminishes over time. Consequently, their actions and decision-making in the distant future may differ from their more short-sighted initial preferences. Power law discounting leads to the same behavior as no discounting in the distant future.
The mathematical explanation is that the power law decay is much slower than the exponential decay, which leads to an infinite PoA. 


\section{Learning Dynamics and Stability}\label{section:stability}
We investigate how discounting introduces new equilibria and affects their stability under certain learning dynamics in the context of a more general game model. 

We have seen in Proposition \ref{prop:stable_game} that under Assumption \ref{assum_sojourntime} the stateless mean field game $\mathcal{G}$ is a stable population game. As a result, all the mean field equilibria are globally asymptotically stable under many interesting dynamics, such as best response dynamics. This may not be the case when Assumption \ref{assum_sojourntime} fails. To explore this further, we consider a two-action and one-resource game model where both actions require access to a common resource, causing  externalities between the agents choosing different actions.
We will see that for this more general game model and no discounting, the mean field equilibrium is always locally asymptotically stable. But, under exponential discounting new equilibria appear due to myopic behavior, some of them not locally asymptotically stable.

\subsection{Two-Action and One-Resource Model}

\begin{figure*}
  \centering
  \subfigure[Dynamic System \eqref{dynamic:average} (Undiscounted)]{
    \label{figure:dynamic:stable} 
    \includegraphics[width=0.38\textwidth]{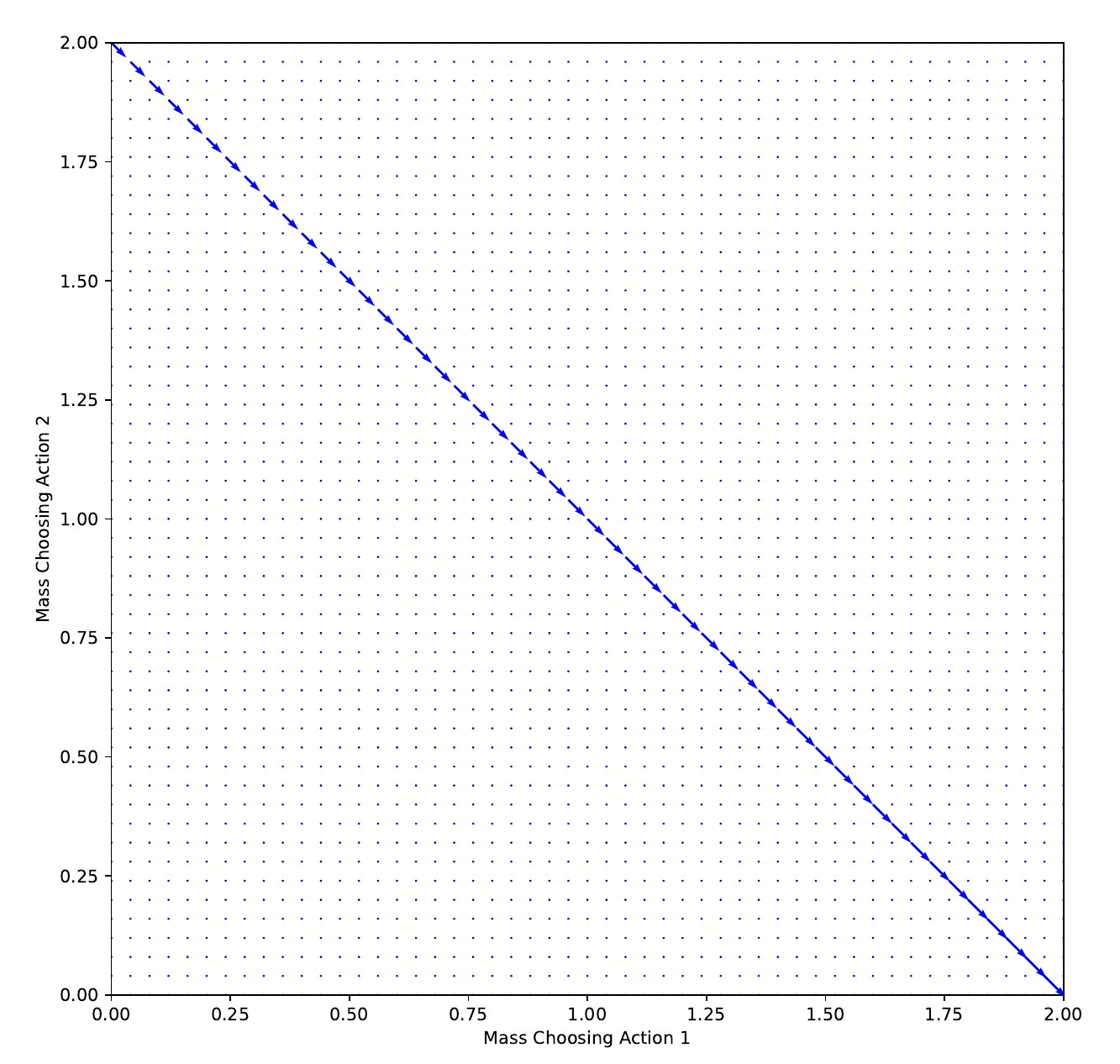}}
  \hspace{0.1\textwidth}
  \subfigure[Dynamic System \eqref{dynamic:discounted} (Discounted)]{
    \label{figure:dynamic:unstable} 
    \includegraphics[width=0.38\textwidth]{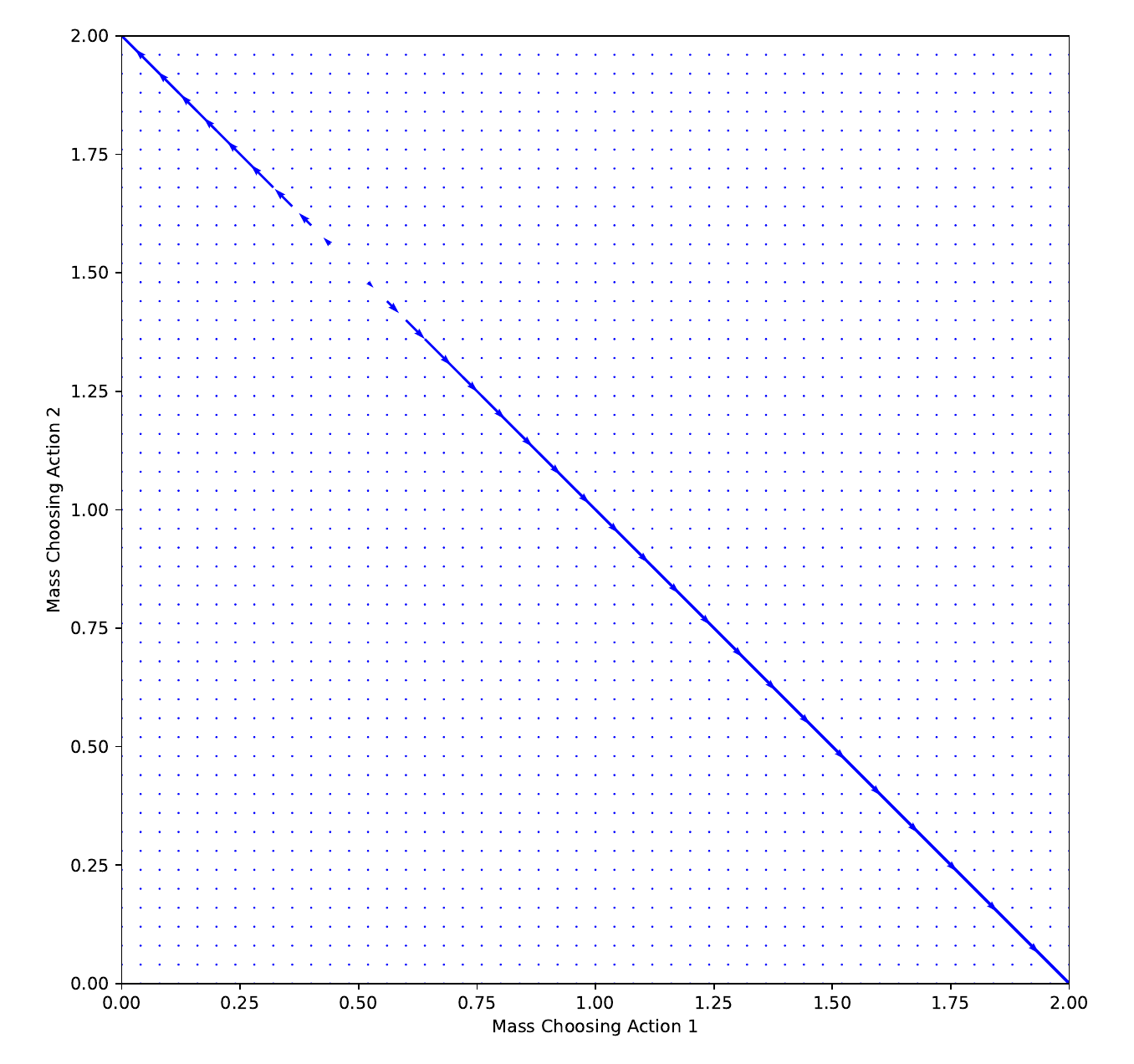}}
  \caption{Plots of the dynamics \eqref{dynamic:average} and \eqref{dynamic:discounted}.  The parameters are $m=2$,  $(t_1,t_2)=(3,0.5)$, $(r_1,r_2)=(e^5,e)$,  $(\gamma_1,\gamma_2)=(2,1)$, $b=1$  for both dynamics  \eqref{dynamic:average} and \eqref{dynamic:discounted}  and $\beta=1$  for \eqref{dynamic:discounted}.
  The direction of the arrow indicates the direction of mass flow: towards action 1 (lower left) or towards action 2 (upper right). 
  The length of the arrows represents the magnitude of the vector $(\dot\mu_1,\dot\mu_2)$ of the projection dynamics, and the rest points are the equilibrium points. 
  Under no discounting, there is a unique and stable equilibrium (the lower right corner) in the left figure. Under discounting, there are three equilibrium points in the right figure.
  We always keep the same stable equilibrium as in the undiscounted case, but introduce one more stable equilibrium (at the upper left corner) and an unstable equilibrium (in between the two).
  }
  \label{figure:dynamic} 
\end{figure*}

We consider a stateless mean field game with two actions, i.e., $A=\{1,2\}$. The respective execution times  are  $t_1,t_2>0$ (constant), and the immediate rewards are $r_1,r_2$. 
We assume the two actions require $\gamma_1$ and $\gamma_2$ units of resources per unit time from a common resource supplied with rate $b>0$.
Similar to the convex optimization problem \eqref{convex_optimization}  , we can compute the sojourn times $(\tau_1,\tau_2)$ by first solving the following optimization problem
\begin{align}
\max &\quad \mu_{1} \log x_{1}  + \mu_{2} \log x_{2} - t_1x_1-t_2x_2  \label{2a1r:convex_optimization} \\
\text{s.t. } &\quad   \gamma_1x_1+\gamma_2x_2\le b,\label{2a1r:resource_constraint}\\    
\text{over }&\quad (x_1,x_2)\in\mathbb{R}^{2}_+,\nonumber
\end{align}
where the waiting time $w$, which is the same for both actions, is the optimal Lagrange multiplier for the resource constraint \eqref{2a1r:resource_constraint}. Then, the sojourn time $\tau_i$ is explicitly given by $\tau_i=t_i+\gamma_iw$ for $i\in A$. Since $\mu_2=m-\mu_1$, $\tau_i$ is a function of $\mu_1$ for $i=1,2$.

By Proposition \ref{prop:stable_game}, this game is equivalent to a population game defined by the payoff vector
\[\bold{F}^\beta(\mu_1)=(F_1^\beta(\mu_1),F_2^\beta(\mu_1))=\left(\frac{r_1}{e^{\beta\tau_1(\mu_1)}-1},\frac{r_2}{e^{\beta\tau_2(\mu_1)}-1}\right)\,.\]
We denote this two-action one-resource game with exponential discounting by $\cFbeta$. $\mu^\dagger_1$ is an equilibrium of $\cFbeta$ if $F_1^\beta(\mu^\dagger_1)=F_2^\beta(\mu^\dagger_1)$.

To compare, we  consider the  two-action one-resource game with average reward criterion $\barcF$, defined by the payoff vector
\[\bar{\bold{F}}(\mu_1)=(\bar{F}_1(\mu_1),\bar{F}_2(\mu_1))=\left(\frac{r_1}{\tau_1(\mu_1)},\frac{r_2}{\tau_2(\mu_1)}\right).\]

We will see that any equilibrium of $\barcF$ (undiscounted) is locally asymptotically stable, while ${\cFbeta}$ (discounted) can have additional equilibria, some of them unstable.

\subsection{Projection Dynamics}

For the game $\cFbeta$ (with exponential discount), we consider the projection dynamics \cite{Zhang1996} defined by
\begin{equation}\label{dynamic:discounted}
    \dot \mu_1 =
    \begin{cases}
        \frac{r_1}{e^{\beta\tau_1(\mu_1)}-1}-\frac{r_2}{e^{\beta\tau_2(\mu_1)}-1}&\text{ if }\mu_1\in(0,1),\\
        \max\{\frac{r_1}{e^{\beta\tau_1(\mu_1)}-1}-\frac{r_2}{e^{\beta\tau_2(\mu_1)}-1},0\}&\text{ if }\mu_1=0,\\
        \min\{\frac{r_1}{e^{\beta\tau_1(\mu_1)}-1}-\frac{r_2}{e^{\beta\tau_2(\mu_1)}-1},0\}&\text{ if }\mu_1=1\,,\\
    \end{cases}
\end{equation}
 and for the game $\barcF$ without discounting
\begin{equation}
    \dot \mu_1 =\begin{cases}
        \frac{r_1}{\tau_1(\mu_1)}-\frac{r_2}{\tau_2(\mu_1)}&\text{ if }\mu_1\in(0,1),\\
        \max\{ \frac{r_1}{\tau_1(\mu_1)}-\frac{r_2}{\tau_2(\mu_1)},0\}&\text{ if }\mu_1=0,\\
        \min\{ \frac{r_1}{\tau_1(\mu_1)}-\frac{r_2}{\tau_2(\mu_1)},0\}&\text{ if }\mu_1=1\,.\\
    \end{cases}\label{dynamic:average}
\end{equation}

The dynamics for $\mu_2$ are omitted since $\mu_1+\mu_2=m$. It is clear that under these dynamics, the agents' collective movement transitions from the action with lower reward to the one with higher reward. Furthermore, they require that the absolute growth rate of the mass choosing one action to be the difference between its payoff and the payoff of the other action. We are mainly concerned with the stability of the Nash equilibria with or without discounting. We first prove that the Nash equilibrium of $\barcF$ is always stable. 

\begin{proposition}\label{prop:stabiltity}
    Every equilibrium of the game $\barcF$ with average reward criterion is locally asymptotically stable under the  dynamics in \eqref{dynamic:average}. 
\end{proposition}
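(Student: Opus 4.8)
The plan is to reduce everything to the scalar autonomous ODE $\dot\mu_1=g(\mu_1)$, where
\[ g(\mu_1):=\bar F_1(\mu_1)-\bar F_2(\mu_1)=\frac{r_1}{\tau_1(\mu_1)}-\frac{r_2}{\tau_2(\mu_1)}, \]
since the conservation $\mu_1+\mu_2=m$ confines the projection dynamics \eqref{dynamic:average} to the one-dimensional segment parametrized by $\mu_1\in[0,m]$. On this segment an interior equilibrium $\mu_1^\dagger$ is exactly a zero of $g$, while $\mu_1=0$ and $\mu_1=m$ are equilibria precisely when the clamped ($\max/\min$) dynamics point outward. For a scalar ODE an interior rest point is locally asymptotically stable as soon as $g'(\mu_1^\dagger)<0$ (equivalently, $g$ crosses zero from positive to negative), so the whole problem reduces to signing $g'$ at each equilibrium.

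The key computation is to differentiate $g$ through its sole dependence on the common waiting time $w$. In the congested regime $\tau_i=t_i+\gamma_i w$, so $g$ depends on $\mu_1$ only through $w(\mu_1)$, and $w$ is pinned down implicitly by the binding resource constraint $\gamma_1\mu_1/\tau_1+\gamma_2(m-\mu_1)/\tau_2=b$. Writing $D:=\gamma_1/\tau_1-\gamma_2/\tau_2$ and $Q:=\gamma_1^2\mu_1/\tau_1^2+\gamma_2^2(m-\mu_1)/\tau_2^2>0$, the implicit function theorem gives $dw/d\mu_1=D/Q$, and the chain rule gives $g'(\mu_1)=\big(\tfrac{r_2\gamma_2}{\tau_2^2}-\tfrac{r_1\gamma_1}{\tau_1^2}\big)\tfrac{D}{Q}$. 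Evaluating at an equilibrium, where $r_1/\tau_1=r_2/\tau_2=:\rho$, the bracket collapses to $\rho(\gamma_2/\tau_2-\gamma_1/\tau_1)=-\rho D$, so
\[ g'(\mu_1^\dagger)=-\frac{\rho\,D^2}{Q}\le 0. \]
Since $\rho>0$ and $Q>0$, this is strictly negative whenever $D\neq 0$, which already delivers local asymptotic stability of every interior, congested-regime equilibrium. I would then dispose of the boundary equilibria directly: at $\mu_1\in\{0,m\}$ the clamping forces $\mu_1$ back to the boundary whenever $g$ points strictly outward, giving asymptotic stability by a one-sided argument; and in the uncongested regime $w\equiv 0$ one has $g\equiv r_1/t_1-r_2/t_2$ constant, which contributes no isolated interior equilibrium for generic parameters.

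The main obstacle is the degenerate case $D=0$ at an interior equilibrium, i.e. $\gamma_1/\tau_1=\gamma_2/\tau_2$ there, equivalent to the non-generic parameter coincidence $\gamma_1/r_1=\gamma_2/r_2$. Here the linearization $g'(\mu_1^\dagger)$ vanishes and (as a short computation using $dw/d\mu_1=D/Q$ shows) the second derivative vanishes as well, so the first-order test is inconclusive. In this case I would argue stability not by linearization but by a direct one-sided examination of the sign of $g$ in a neighborhood, establishing that $g$ still changes sign from positive to negative across $\mu_1^\dagger$; this degenerate case, together with the knife-edge $r_1/t_1=r_2/t_2$ in the uncongested regime and the regime-transition point where $w'$ has a kink, is where the delicate part of the argument lies, whereas the generic congested interior equilibrium is settled cleanly by the displayed negativity of $g'(\mu_1^\dagger)$.
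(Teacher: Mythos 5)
Your core argument for the generic interior congested equilibrium is exactly the paper's: the paper likewise reduces to the sign of the scalar payoff gap $\bar g(\mu_1)=r_1/\tau_1(\mu_1)-r_2/\tau_2(\mu_1)$, applies the implicit function theorem to the binding constraint $\gamma_1\mu_1/\tau_1+\gamma_2(m-\mu_1)/\tau_2=b$ to get $w'(\mu_1)$, and lands on the same identity
\[
\bar g'(\mu_1^\dagger)=-\frac{r_1\tau_2(\mu_1^\dagger)\bigl(\gamma_2\tau_1(\mu_1^\dagger)-\gamma_1\tau_2(\mu_1^\dagger)\bigr)^2}{\gamma_2^2(m-\mu_1^\dagger)\tau_1(\mu_1^\dagger)^3\tau_2(\mu_1^\dagger)+\gamma_1^2\mu_1^\dagger\tau_1(\mu_1^\dagger)\tau_2(\mu_1^\dagger)^3}\le 0,
\]
which is your $-\rho D^2/Q$. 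On that case you and the paper coincide.

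The gap is in the cases you defer, and your proposed fix for one of them is actually false. If $D=0$ at an interior congested equilibrium, then combining $\gamma_1/\tau_1=\gamma_2/\tau_2$ with the equilibrium condition $r_1/\tau_1=r_2/\tau_2$ and $\tau_i=t_i+\gamma_i w$ forces $t_1/\gamma_1=t_2/\gamma_2$ and $r_1/\gamma_1=r_2/\gamma_2$, whence $\bar g(\mu_1)=\tfrac{r_1}{\gamma_1}\bigl(\tfrac{1}{t_1/\gamma_1+w}-\tfrac{1}{t_2/\gamma_2+w}\bigr)\equiv 0$ for every value of $w$: the entire segment is a continuum of equilibria. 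So $\bar g$ does \emph{not} ``change sign from positive to negative across $\mu_1^\dagger$'' as you assert; no direct sign examination will produce that. The paper anticipates precisely this by demanding only the non-strict one-sided inequalities ($\bar g\le 0$ to the right of $\mu_1^\dagger$, $\bar g\ge 0$ to the left) and explicitly remarking that equality yields a continuum of equilibria, i.e., stability is asserted for the equilibrium set rather than via a strict sign crossing; you would need the same weakened criterion for the statement to survive this case. Separately, the regime-transition equilibria, where the resource constraint is exactly tight at $\mu_1^\dagger$, cannot be waved off as a measure-zero nuisance: for a whole range of masses $m$ the equilibrium sits exactly there, and the paper devotes an entire case to it, splitting on the sign of $\gamma_1/t_1-\gamma_2/t_2$ to decide on which side of $\mu_1^\dagger$ the constraint remains tight and verifying the one-sided sign conditions separately with the constant payoff gap $r_1/t_1-r_2/t_2$ on the slack side. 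The same goes for boundary equilibria with $\bar g=0$ rather than strictly outward-pointing. These pieces are where the proof actually lives once the clean $-\rho D^2/Q$ computation is done, and your sketch does not supply them.
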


 The proof is given in Appendix \ref{appendix:proof_prop:stability}.  Proposition \ref{prop:stabiltity} does not hold for the game  $\cFbeta$. The local stability of  $\cFbeta$ depends on the parameters $m$, $(t_1, t_2)$, $(r_1,r_2)$, $(\gamma_1,\gamma_2)$, $b$, and $\beta$. More specifically, we have the following results regarding when $\cFbeta$ can have unstable equilibrium. 
 \begin{proposition}\label{prop:unstable}
     Consider any $m>0$, $\mu^\dagger_1\in(0,m)$ and $(t_1, t_2, r_1,$
     $r_2, \gamma_1,\gamma_2, b, \beta)\in\mathbb{R}_+^8$  satisfying the following conditions
      \begin{equation}
      \frac{\gamma_1}{\gamma_2}>\frac{t_1}{t_2},\ \frac{\gamma_1}{\gamma_2}<\frac{e^{\beta t_1}-1}{e^{\beta t_1}}\quad\text{or}\quad\frac{\gamma_1}{\gamma_2}<\frac{t_1}{t_2},\ \frac{\gamma_1}{\gamma_2}>\frac{e^{\beta t_2}}{e^{\beta t_2}-1},\label{condition:unstable_1}
      \end{equation}
      and
        \begin{equation}
    \frac{r_1}{e^{\beta \tau_1(\mu_1^\dagger)}-1}=\frac{r_2}{e^{\beta \tau_2(\mu_1^\dagger)}-1},\quad b<\gamma_1\frac{
\mu_1^\dagger}{t_1}+\gamma_2\frac
{m-\mu_1^\dagger}{t_2}\,,\label{condition:unstable_2}
      \end{equation}
      where $\tau_1$, $\tau_2$ are functions of $\mu_1$ implicitly defined by \eqref{2a1r:convex_optimization} with parameters $t_1$, $t_2$, $\gamma_1$, $\gamma_2$, and $b$. Then,
 $\mu^\dagger_1$ is an unstable equilibrium of the game $\cFbeta$ under the dynamics \eqref{dynamic:discounted}.
 \end{proposition}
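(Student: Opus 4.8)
The plan is to exploit the fact that, restricted to the interior $\mu_1\in(0,m)$, the projection dynamics \eqref{dynamic:discounted} is a scalar autonomous ODE $\dot\mu_1=\Phi(\mu_1)$ with $\Phi(\mu_1)=\frac{r_1}{e^{\beta\tau_1(\mu_1)}-1}-\frac{r_2}{e^{\beta\tau_2(\mu_1)}-1}$. A rest point $\mu_1^\dagger$ (where $\Phi(\mu_1^\dagger)=0$, which is exactly the first equality in \eqref{condition:unstable_2}) is unstable for such a one-dimensional system precisely when $\Phi'(\mu_1^\dagger)>0$. Thus the whole proof reduces to computing the sign of $\Phi'(\mu_1^\dagger)$ and showing the hypotheses force it positive.

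First I would show that near $\mu_1^\dagger$ the resource constraint \eqref{2a1r:resource_constraint} is binding. The second inequality in \eqref{condition:unstable_2} says that at $w=0$ the demand $\gamma_1\mu_1^\dagger/t_1+\gamma_2(m-\mu_1^\dagger)/t_2$ strictly exceeds the supply $b$; by continuity this strict inequality persists in a neighborhood of $\mu_1^\dagger$, so there the waiting time satisfies $w>0$ and the constraint binds. Hence $\tau_i(\mu_1)=t_i+\gamma_iw(\mu_1)$, where $w$ solves $\gamma_1\mu_1/\tau_1+\gamma_2(m-\mu_1)/\tau_2=b$. The implicit function theorem (the $w$-derivative of this identity is strictly negative) makes $w$ a $C^1$ function, and implicit differentiation gives $\tau_i'=\gamma_iw'$ with
\[
w'(\mu_1)=\frac{\gamma_1/\tau_1-\gamma_2/\tau_2}{\gamma_1^2\mu_1/\tau_1^2+\gamma_2^2(m-\mu_1)/\tau_2^2}.
\]
The denominator is positive, so $\mathrm{sign}(w')=\mathrm{sign}(\gamma_1/\tau_1-\gamma_2/\tau_2)=\mathrm{sign}(\gamma_1/\gamma_2-t_1/t_2)$, the last equality because $\gamma_i/\tau_i=1/(t_i/\gamma_i+w)$.

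Next I would differentiate $\Phi$ and substitute $\tau_i'=\gamma_iw'$ to obtain
\[
\Phi'(\mu_1)=\beta w'\left[\frac{r_2\gamma_2 e^{\beta\tau_2}}{(e^{\beta\tau_2}-1)^2}-\frac{r_1\gamma_1 e^{\beta\tau_1}}{(e^{\beta\tau_1}-1)^2}\right].
\]
Writing $\lambda$ for the common equilibrium value $r_1/(e^{\beta\tau_1}-1)=r_2/(e^{\beta\tau_2}-1)>0$ from \eqref{condition:unstable_2}, so that $r_i=\lambda(e^{\beta\tau_i}-1)$, the bracket collapses to $\lambda B$ with $B=\frac{\gamma_2}{1-e^{-\beta\tau_2}}-\frac{\gamma_1}{1-e^{-\beta\tau_1}}$. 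Hence $\Phi'(\mu_1^\dagger)=\beta\lambda\,w'B$, and instability is equivalent to $w'B>0$.

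Finally I would verify $w'B>0$ under each branch of \eqref{condition:unstable_1}; this is the crux and the one place the delicate hypotheses enter. The difficulty is that \eqref{condition:unstable_1} constrains the \emph{execution} times $t_i$, whereas $B$ involves the endogenous \emph{sojourn} times $\tau_i=t_i+\gamma_iw\ge t_i$. Observing that $B>0$ iff $\gamma_1/\gamma_2<\frac{1-e^{-\beta\tau_1}}{1-e^{-\beta\tau_2}}$, I would use monotonicity of $1-e^{-\beta\tau}$ and the bound $1-e^{-\beta\tau_2}<1$: in the first branch $\gamma_1/\gamma_2>t_1/t_2$ gives $w'>0$, and $\frac{1-e^{-\beta\tau_1}}{1-e^{-\beta\tau_2}}>1-e^{-\beta\tau_1}\ge 1-e^{-\beta t_1}>\gamma_1/\gamma_2$ gives $B>0$; in the second branch $\gamma_1/\gamma_2<t_1/t_2$ gives $w'<0$, and $\frac{1-e^{-\beta\tau_1}}{1-e^{-\beta\tau_2}}<\frac{1}{1-e^{-\beta\tau_2}}\le\frac{1}{1-e^{-\beta t_2}}<\gamma_1/\gamma_2$ gives $B<0$. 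Either way $w'B>0$, so $\Phi'(\mu_1^\dagger)>0$ and $\mu_1^\dagger$ is unstable. The main obstacle is exactly this translation from the $t_i$-inequalities to the sign of $B$; the two elementary bounds $1-e^{-\beta\tau_2}<1$ and $\tau_i\ge t_i$ are what make the stated conditions precisely strong enough.
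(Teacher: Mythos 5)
Your proof is correct and follows essentially the same route as the paper's: reduce instability to $\frac{\partial}{\partial\mu_1}\bigl(\tfrac{r_1}{e^{\beta\tau_1}-1}-\tfrac{r_2}{e^{\beta\tau_2}-1}\bigr)>0$ at $\mu_1^\dagger$, compute this via implicit differentiation of the binding resource constraint, factor using the equilibrium condition, and close with the same two elementary bounds ($e^{\beta\tau_2}/(e^{\beta\tau_2}-1)>1$ and monotonicity in $\tau_1\ge t_1$). The only differences are cosmetic: you treat both branches of \eqref{condition:unstable_1} explicitly where the paper invokes symmetry, and you omit the paper's preliminary verification that the hypothesis set is nonempty, which the statement as written does not require.
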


 The proof is given in Appendix \ref{appendix:proof_prop:unstable}. 
 Proposition \ref{prop:unstable} identifies a nonempty set of parameters for which the game $\cFbeta$ has an unstable equilibrium  under the projection dynamic. In fact, the game $\cFbeta$ may have multiple equilibria,  with some being stable and others unstable. This will be shown through a specific example in the next subsection.

\subsection{Example of Unstable Equilibrium}

In this section, we provide an example of two-action one-resource game for which there is a unique stable game when considering average reward (game $\barcF$), while there are three equilibria with two being stable and one being unstable when considering discounted reward (game $\cFbeta$). The parameter are given as follows, $m=2$, , $(t_1,t_2)=(3,0.5)$, $(r_1,r_2)=(e^5,e)$,  $(\gamma_1,\gamma_2)=(2,1)$, $b=1$  for both games $\barcF$ and $\cFbeta$,
and $\beta=1$  for game $\cFbeta$.

Fig. \ref{figure:dynamic} plot the dynamics \eqref{dynamic:average} and \eqref{dynamic:discounted}, including the dynamics of $\mu_2$, see the explanations in the caption of the figure. 
There is a unique equilibrium $\mu_1^\dagger=m=2$ for the game $\barcF$ as in Fig. \ref{figure:dynamic:stable}. It is stable, since a perturbation to the equilibrium will decay to zero according to the direction of the arrows near the equilibrium point.

We observe in Fig. \ref{figure:dynamic:unstable} that there are three equilibria for game $\cFbeta$. The two extreme equilibria, $\mu_1^\dag=0$ and $\mu^\dag_1=m=2$, are stable, and the interior equilibrium   $\mu^\dagger_1\approx0.5$ is unstable.  $\mu^\dagger_1 = 0$ is a stable equilibrium because \emph{the sojourn time is the primary factor} in determining the relative payoff between the two actions. Agents tend to choose the action with the smaller action time (Action 2), and  small perturbations do not change the ordering of these times. This equilibrium exists \emph{due to discounting}, since the reward is collected after the end of the action.

Similarly,  $\mu^\dagger_1 = m=2$ is a stable equilibrium because the \emph{immediate reward is the primary factor}  in determining the relative payoff between the two actions. Agents tend to choose the action with higher immediate reward (Action 1), and slight perturbations do not change this property.

The interior equilibrium at   $\mu^\dagger_1 \approx 0.5$ is where the effects of time and immediate reward balance out. However, this balance is unstable, indicating that small perturbations can make one of these effects dominant, leading to the corresponding stable equilibrium.

\section{Conclusions}

Overall, our research highlights  for the first time the intricate relationship between time preference, existence of equilibrium, stability of equilibria, and the price of anarchy in MFGs  where action externalities are due to queuing for shared resources, and agents are myopic.

We have focused on a specific class of MFGs where nonatomic  myopic agents compete for congestible resources, and interactions among these agents are continuous and temporal. Our study examines the impact on the stationary equilibria of myopic decision-making with different time preference, i.e., exponential vs power law discounting. We found that the existence, efficiency and stability of equilibria highly depend on the types of discounting  under consideration.
Under normal exponential discounting, a stationary equilibrium always exists and PoA$=\infty$, compared to PoA$=2$ under no discounting, indicating significant inefficiencies resulting from myopic decision making. Under power law discounting, the existence of a stationary equilibrium is not guaranteed, but, if such an equilibrium does exist, PoA$=2$,  similar to the undiscounted case. This result suggests that time inconsistency in discounting may not always be detrimental, as in the case of exponential discounting.
Finally, we also found that  discounting can introduce unstable equilibrium in the case of agents choosing among different tasks whose execution times are correlated. 

There are many  immediate extensions to this research. We propose exploring the PoA results for a general Markov decision process model that includes state transitions with the available actions depending on the state of the system, as our current model does not allow it. 
This extension could enhance the applicability of our PoA analysis. A remaining question on the theory side is to find a constructive proof for the PoA with exponential discounting for a system with multiple states (i.e., a general finite state MDP) as in \cite{cd23}.  We also recommend examining a more general sojourn time model where the assumption of monotonicity, as stated in Assumption \ref{assum_sojourntime}, may not be relevant. 
Furthermore, we would like to analyze the stability properties and the performance of different learning dynamics in the previously mentioned, more general models.

 \newpage

\bibliographystyle{splncs04}
\bibliography{biblio-1}

\appendix

\section{Motivating Example}\label{Motivating Example}

We consider a system where a common pool of agents choose between two independent actions  when they are free as depicted in Fig.~\ref{fig:twoaction}. We refer to this model as \emph{`two-action model'} in what follows.
 In the context of food delivery, this corresponds to a network with two hotspots, where couriers decide between two shopping mall areas. Once a courier becomes idle, they must decide which hotspot to relocate to in order to serve new orders.

There are two loops in Fig.~\ref{fig:twoaction} and each corresponds to a single action $i\in\{1,2\}$.  We first consider a simpler system consisting of only one loop (say the upper loop with action $i=1$) to define the queueing model when there is a single action 
related to a single resource that gets consumed by the action. 
In this case, the time that it takes for an agent to execute the action, i.e., the sojourn time of the action $\tau_i$, is the sum of the waiting time in the resource queue $w_i$ plus the time to traverse the delay element that correspond to the 
intrinsic time of the action $t_i$.
Little's Law $\mu_i=x_i\tau_i$ relates the actual rate $x_i$ of agents performing the action with the number of agents $\mu_i$ that choose the action and the total time $\tau_i$ in the system. 
In equilibrium, the rate $x_i$ can not exceed the supply rate $b_i$ of the resource, i.e., $x_i\leq b_i$.

Let's understand how $\tau_i,x_i$ change as $\mu_i$ increases for any $i\in\{1,2\}$. In the beginning, for small $\mu_i$, there is no queue, the sojourn time equals the delay element $t_i$ and $x_i$ satisfies $\mu_i=x_i t_i$ with $x_i<b_i$. Observe that 
$x_i$ increases linearly with $\mu_i$ until it reaches the supply rate $b_i$.
When $\mu_i\geq b_it_i$, agents queue: a mass $b_it_i$ of them is in the delay element and $\mu_i-b_it_i$ in the queue before entering the delay element. Now the sojourn time satisfies the equation $\mu_i=b_i \tau_i$ where $\tau_i=w_i+t_i$, where $w_i$ is the delay to access the resource. Observe that $\tau_i$ is increasing in $\mu_i$ and agents obtain a reward $r_i$ after exiting the delay element, i.e., after spending a total of $\tau_i$ in the system. 


\begin{figure*}[t]
    \centering
  \includegraphics[width=4in]{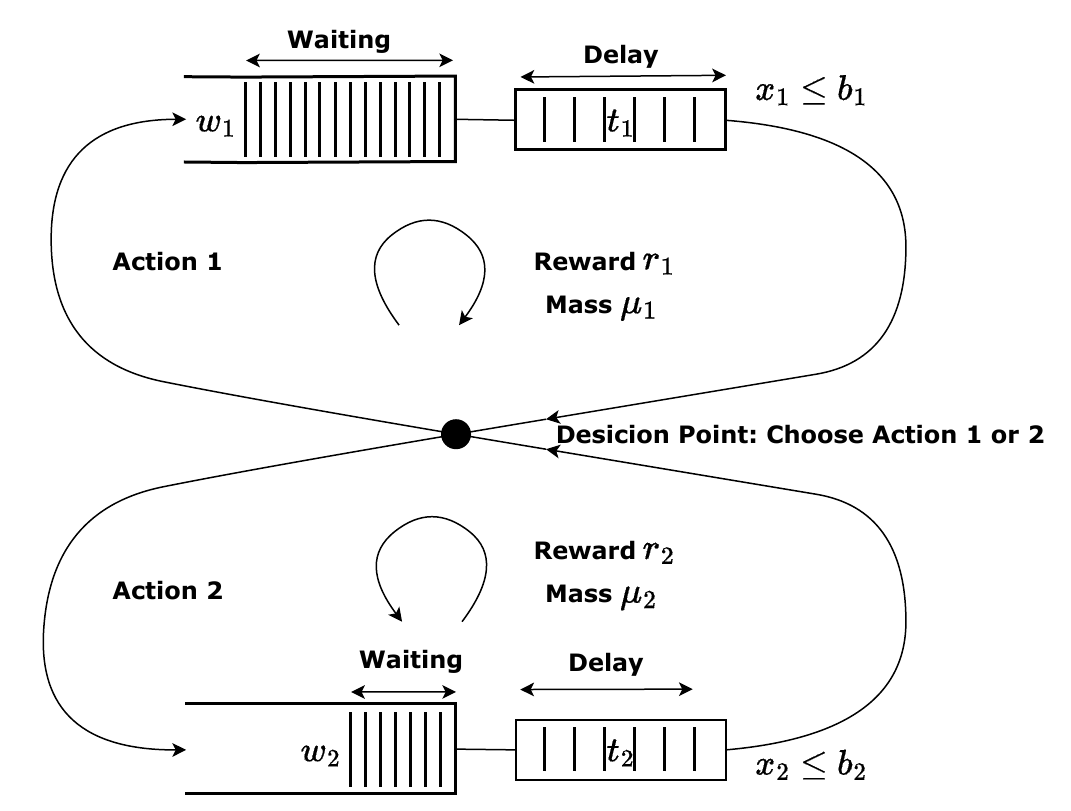}
    \caption{A system where agents choose between two independent actions when they are not busy. 
    For each action $i\in\{1,2\}$, the time that it takes for an agent to execute the action, i.e., the sojourn time of the action $\tau_i$, is the sum of the waiting time in the resource queue $w_i$ plus the time to traverse the delay element that correspond to the 
intrinsic time of the action $t_i$. $\mu_i$  is the number of agents performing action $i$, and it satisfies Little's Law $\mu_i=x_i\tau_i$ where $x_i$ is the rate of agents performing action $i$.  In equilibrium, the rate $x_i$ can not exceed the supply rate $b_i$ of the resource, i.e., $x_i\leq b_i$.  Under no discounting, agents should pursue the action with the largest average reward per unit time $\frac{r_i}{\tau_i}$ unless both actions generate the same average reward.}
    \label{fig:twoaction}
\end{figure*}

 Now we consider the two loops in Fig.~\ref{fig:twoaction}  as a unified system. Assume a total mass $m$ of agents choose between  the two actions and interact continuously. 
A non-myopic agent will choose the action $i\in\{1,2\}$ with the highest reward rate per unit of time, $\frac{r_i}{t_i+w_i}$, where $r_i$ is the reward for performing action $i$,  $t_i$ is the corresponding sojourn time, and $w_i$ is the waiting time in the resource queue, which depends on the mass of agents $\mu_i$ that choose action $i$. The choices made by the agents determine the mass of agents in the queues and in the delay elements depicted in Fig.~\ref{fig:twoaction}. This establishes the dynamics as follows: given any distribution $\mu_1,\mu_2$ of agents ($\mu_1+\mu_2=m$), agents will dynamically determine the action  offering the highest reward per unit time and may adjust their strategies, leading to a shift in the mass distribution of agents. This adjustment will influence the action sojourn times, subsequently affecting the associated rewards and will persist until an equilibrium is attained, where both actions yield identical average rewards.
Under a similar model that allows for more general resource interactions, \cite{cd23} shows the existence of an equilibrium between decisions and waiting times, with a PoA of 2, \emph{when there is no time discounting for the agents}.

Suppose now the agents are myopic, using exponential discounting with factor $e^{-\beta t}$, where $\beta>0$. We would like to answer the following questions.
\begin{itemize}
    \item In view of the dynamic nature of the game, does there exist a stationary equilibrium where drivers use 
 time independent policies?
 \item  How much efficiency is lost due to decentralized myopic behavior, from the perspective of a social planner who values long-term average rewards?
\end{itemize}
 For this two-action example, an agent will choose the action $i$ yielding the highest total discounted reward, $\frac{r_i}{e^{\beta(t_i+w_i)}-1}$. As previously discussed, the challenging aspect of the problem lies in the fact  that agents' choices are highly sensitive to waiting delay, and waiting delay is determined by the aggregated effect of the agents' choices .
Using a fixed-point argument, in Theorem~\ref{thm:stationary_equilibrium} we show that an stationary equilibrium always exists. It can be shown that the PoA is $\infty$ (Corollary~\ref{coro:infinite_poa}). The most inefficient system is actually an extreme case of the simple two-action model discussed here. More specifically, in Figure ~\ref{fig:twoaction}, assume that action 1 offers a small reward after a short sojourn time, while action 2 provides a larger reward but after a longer sojourn time. Due to their myopic nature, agents tend to prioritize action 1 over action 2.
The exponential discounting ensures that time preference remains consistent over time. However, this can lead to an extreme scenario where action 2 becomes infinitely more profitable than action 1 in terms of long-run average reward.

We also consider the case of the `power law' discounting factor $t^{-\alpha}$, where $\alpha > 0$. The existence of stationary equilibrium under exponential discounting relies on the fact that the time preference of the agents are consistent, i.e., they follow a consistent pattern of discounting value over time, which makes it possible for a stationary policy to be optimal. Under power law discounting, inconsistency of time preference may cause a problem. We  would like to answer the following questions.
\begin{itemize}
    \item  Does stationary equilibrium still exist when agents are inconsistent in terms of the ways they discount the reward over time?
 \item  Does the price of anarchy remain infinite, or is it possible that the inconsistent time preferences of agents could enhance efficiency?
\end{itemize}

Even for this simple example, it is challenging to find optimal policies for myopic agents under power law discounting as it is not necessarily time independent. However, we still managed to prove a surprising result that if $\alpha \leq 1$, the equilibrium is the same as in the nonmyopic case, giving a PoA equal to 2 (Theorem~\ref{thm:se_nonexpo}). The case $\alpha > 1$ is more involved because an equilibrium may fail to exist. (That is, it does not {exist} within the class of stationary policies. {In certain situations, a nonstationary policy is better than any stationary one}). Nevertheless it is shown that whenever an equilibrium exists, it is equal to 2.
Thus, whenever equilibria exist in the power law case, the PoA is 2, regardless of the value of $\alpha$ (Proposition~\ref{prop:poa=2}). As we mentioned in the introduction, under power law discounting, the preference for immediate reward is not consistent over time. When evaluating the long-term average reward, it's notable that the efficiency loss is bounded, and even more surprisingly,  the same as the no discounting case.  

Fig.~\ref{fig:twoaction} presents a model where each action requires a single independent resource and hence a single indepenedent queue is formed for each action. This may not be the case in practice. In reality, a single action may require multiple resources and multiple actions may share overlapping resources. 
The underlying queueing network can be quite complex. Nevertheless, we successfully established 
all the theoretical results outlined above  under fairly general assumptions (Assumption \ref{assum_sojourntime}) on the underlying queueing system, or equivalently, the dependence of sojourn time on the agent distribution. However, the simple two-action model presented here remains interesting 
as we will see that we obtain the most inefficient system by considering extreme cases of the two-action model for both exponential discounting and power law discounting.

\section{ Fluid Sojourn Time Model}\label{appendix:fluid_model}
We consider a resource model with independent actions and parallel resources, where  each action $i$ consumes one unit of resource $l_i$ per unit time. The replenishment rate of resource $l_i$ is $b_i$. The resources $l_i$'s,  $i\in A$, are `in parallel', i.e., independent of each other. 
Assume for each action $i$, there is an action execution time for this action to be completed,  which is an increasing function of the rate $x_i$, denoted by $t_i(x_i)$.
Given any mass distribution $\bmu$, because of the resource constraint,  we have
\begin{equation}\label{ineq:upper_bound}
    x_i\le b_i,\quad\forall\>i\in A\,.
\end{equation}
 as  now $x_i(\bmu)$ is  also the consumption rate of resource $l_i$.

The sojourn time $\tau_i$ is comprised by the waiting time to acquire the resource and the 
actual execution time:
\begin{equation}\label{eq:decomposition}
     \tau_i=w_i+t_i(x_i),\quad\forall\>i\in A.
\end{equation}

Now, in a fluid queueing model no queues are formed if the resource constraint is not active, i.e.,
\begin{equation}\label{ineq:strict}
    x_i< b_i\implies w_i=0,\quad\forall\>i\in A.
\end{equation}

It can be shown that for any $\boldsymbol{\mu}\in \mathcal{U}_m(A)$, the conditions \eqref{eq:little_law}, \eqref{eq:decomposition}, \eqref{ineq:upper_bound}, \eqref{ineq:strict} uniquely determine $x_i$ and $w_i$ for any $i\in A$. 

\begin{lemma}\label{lemma:sojourn_time}
Conditions  \eqref{eq:little_law}, \eqref{eq:decomposition}, \eqref{ineq:upper_bound}, \eqref{ineq:strict} define a unique continuous mapping from $\mathcal{U}_m(A)$ to $\mathbb{R}^n\times\mathbb{R}^n\times\mathbb{R}^n$ by
\[\boldsymbol{\mu}\mapsto (\bx(\boldsymbol{\mu}),\bw(\boldsymbol{\mu}),\boldsymbol{\tau}(\boldsymbol{\mu})),\]
where $\bx(\boldsymbol{\mu})=(x_i(\boldsymbol{\mu}))_{i\in A}$, $\bw(\boldsymbol{\mu})=(w_i(\boldsymbol{\mu}))_{i\in A}$, $\boldsymbol{\tau}(\boldsymbol{\mu})=(\tau_i(\boldsymbol{\mu}))_{i\in A}$. Moreover, $\bx(\boldsymbol{\mu})$ is the unique solution to the following convex optimization problem
\begin{align}
\max\quad & \sum_{i\in A} {\mu}_{i} \log(x_{i})   - \sum_{i\in A}\int_0^{x_i}t_i(y)\dif y \label{convex_optimization_A} \\
\text{s.t. } &   x_i\le b_i,\quad\forall\>i\in A,\label{constraint:resource_i_A}\\    
\text{over }& \bx = (x_{i})_{i\in A}\in\mathbb{R}^{n}_+,\nonumber
\end{align}
and $w_i(\boldsymbol{\mu})$'s are the optimal Lagrange multipliers for the resource constraints  \eqref{constraint:resource_i_A}.
\end{lemma}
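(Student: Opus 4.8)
The plan is to recognize the four defining relations as the Karush--Kuhn--Tucker (KKT) conditions of the convex program \eqref{convex_optimization_A}--\eqref{constraint:resource_i_A}, with the waiting times $w_i$ identified with the Lagrange multipliers of the capacity constraints \eqref{constraint:resource_i_A}. I would begin by settling existence and uniqueness of the maximizer $\bx(\bmu)$. The feasible box $\{\bx:0\le x_i\le b_i\}$ is compact and convex, and the objective is upper semicontinuous (with the convention $\mu_i\log x_i=0$ when $\mu_i=0$), so a maximizer exists. On every coordinate with $\mu_i>0$ the term $\mu_i\log x_i$ is strictly concave and drives the objective to $-\infty$ as $x_i\downarrow0$, forcing $x_i>0$; on every coordinate with $\mu_i=0$ the objective reduces to $-\int_0^{x_i}t_i(y)\,\dif y$, which, since $t_i\ge0$ is increasing, is maximized only at $x_i=0$. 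Together these yield strict concavity in the active coordinates, hence a unique optimizer.

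Next I would write down the KKT system and match it term by term. Assigning a multiplier $\lambda_i\ge0$ to each constraint $x_i\le b_i$, stationarity on a coordinate with $\mu_i>0$ reads $\mu_i/x_i=t_i(x_i)+\lambda_i$ (using $\frac{\dif}{\dif x_i}\int_0^{x_i}t_i=t_i(x_i)$, with a subgradient replacing $t_i(x_i)$ should $t_i$ have jumps), complemented by $\lambda_i(x_i-b_i)=0$. Setting $w_i:=\lambda_i$ and $\tau_i:=t_i(x_i)+w_i$ turns stationarity into $\mu_i=x_i\tau_i$, which is Little's law \eqref{eq:little_law} together with the decomposition \eqref{eq:decomposition}; complementary slackness becomes exactly \eqref{ineq:strict}; and primal feasibility is \eqref{ineq:upper_bound}. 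The coordinates with $\mu_i=0$ give $x_i=0$, $w_i=0$, $\tau_i=t_i(0)$, consistent with all four relations. Because the constraints are affine, KKT is both necessary and sufficient for optimality, so a triple $(\bx,\bw,\btau)$ satisfies the four conditions if and only if $\bx$ is the unique optimizer and the $w_i$ are the associated multipliers; since $\lambda_i$ is pinned down once $\bx$ is fixed ($\lambda_i=0$ when $x_i<b_i$, and $\lambda_i=\mu_i/b_i-t_i(b_i)$ when $x_i=b_i$), the solution of the four conditions is unique.

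Finally I would establish continuity of $\bmu\mapsto(\bx,\bw,\btau)$. Since the constraint set is fixed and the objective is jointly continuous in $(\bx,\bmu)$ where the optimum lives, Berge's maximum theorem yields upper hemicontinuity of the $\argmax$ correspondence, and uniqueness renders it single-valued, so $\bmu\mapsto\bx(\bmu)$ is continuous; continuity of $\bw$ and $\btau$ then follows from $w_i=\mu_i/x_i-t_i(x_i)$ and $\tau_i=t_i(x_i)+w_i$ wherever $x_i>0$. The point I expect to be the main obstacle is the degenerate boundary where some $\mu_i\downarrow0$: there strict concavity weakens and $\mu_i/x_i$ formally becomes $0/0$. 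I would control this with a priori bounds from stationarity and complementary slackness, namely $w_i\le m/b_i$ and hence $x_i\ge\mu_i/(t_i(b_i)+m/b_i)$, which keep the optimal $x_i$ in a compact subinterval of $(0,b_i]$ once $\mu_i$ is bounded below, supplying the uniform control needed to apply the maximum theorem off the singular boundary; and since $x_i\to0$ with $x_i<b_i$ eventually as $\mu_i\downarrow0$, one gets $w_i=0$ and $\tau_i\to t_i(0)$, patching continuity across $\{\mu_i=0\}$.
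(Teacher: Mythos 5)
Your proposal is correct and follows essentially the same route as the paper's proof: both identify the four conditions with the KKT system of the convex program (with $w_i$ as the multipliers), derive uniqueness from strict concavity on the coordinates with $\mu_i>0$, and obtain continuity from uniqueness of the maximizer together with continuity of the objective in $\boldsymbol{\mu}$. Your version is in fact more careful than the paper's at the degenerate boundary $\mu_i\downarrow 0$ and in pinning down the multipliers explicitly, but these are refinements of the same argument rather than a different approach.
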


\begin{proof}

First we show that the mapping is well defined.
Given any $\boldsymbol{\mu}\in\mathcal{U}_m(A)$, the optimization problem
\begin{align}
\max\quad & \sum_{i\in A} \mu_{i} \log(x_{i})   - \sum_{i\in A}\int_0^{x_i}t_i(y)\dif y  \nonumber\\ 
\text{s.t. } &   x_i\le b_i,\quad\forall\>i\in A,\nonumber\\    
\text{over }& \bx = (x_{i})_{i\in A}\in\mathbb{R}^{n}_+.\nonumber
\end{align}
has a unique solution as the objective is a strictly concave function maximized over a set 
with compact closure and $x_{i}>0$ unless $\mu_{i} = 0$.  Strong duality holds and the KKT conditions characterizing the optimal solution, with $w_i, i = 1,\ldots,n$ being the optimal Lagrange multipliers, are equivalent to \eqref{eq:little_law}, \eqref{eq:decomposition},  \eqref{ineq:upper_bound}, \eqref{ineq:strict}.

The mapping $\boldsymbol{\mu}\mapsto \bx$ is continuous because the objective is continuous in $\boldsymbol{\mu}$ and $\bx$ is unique. The
continuity of $w$ follows from~\eqref{eq:little_law}, \eqref{eq:decomposition}. 

\end{proof}

\section{Proof of Theorem \ref{thm:stationary_equilibrium}}\label{appendix:proof_thm:stationary_equilibrium}
 For any feasible mass distribution $\boldsymbol{\mu}$, define
    \begin{align}
        v(\boldsymbol{\mu})&=\max_{i\in A}V(i,\boldsymbol{\mu})=\max_{i\in A}\frac{r_i}{e^{\beta \tau_i(\boldsymbol{\mu})}-1},\label{eq:value_function_exponential}
    \end{align}
    Note that $v(\boldsymbol{\mu})$ satisfies the following Bellman equation
    \begin{equation}\label{eq:Bellman}
        v(\boldsymbol{\mu})=\max_{i\in A}e^{-\beta \tau_i(\boldsymbol{\mu})}r_i+e^{-\beta \tau_i(\boldsymbol{\mu})}v(\boldsymbol{\mu}).
    \end{equation}

    We first show that given any agent mass distribution $\boldsymbol{\mu}$, the supremum in \eqref{condition_optimal_strategy} is achieved by a  deterministic stationary strategy.
    Let
    \[V^*(\boldsymbol{\mu})=\sup\limits_{\boldsymbol{\sigma}\in \mathcal{P}(A)^{\mathbb{N}}}\mathbb{E}_{\boldsymbol{\sigma}}\left[\sum_{k=1}^\infty u\bigg(r_{a_k},\sum_{j=1}^k\tau_{a_j(\boldsymbol{\mu})}\bigg)\right].\]
    For any strategy $\boldsymbol{\sigma}\in\mathcal{P}(A)^{\mathbb{N}}$ we have
    \begin{align*}    &V(\boldsymbol{\sigma},\boldsymbol{\mu})\\
&\quad=\mathbb{E}_{\boldsymbol{\sigma}}\left[\sum_{k=1}^\infty r_{a_k}e^{-\beta \sum_{j=1}^k\tau_{a_j}(\boldsymbol{\mu})}\right]\\
    &\quad=\sum_{i=1}^n\sigma_1(i)\Bigg(r_ie^{-\beta\tau_i(\boldsymbol{\mu})}+\mathbb{E}_{\boldsymbol{\sigma}}\left[\sum_{k=2}^\infty r_{a_k}e^{-\beta \sum_{j=1}^k\tau_{a_j}(\boldsymbol{\mu})}\right]\Bigg)\\
    &\quad=\sum_{i=1}^n\sigma_1(i)\Bigg(r_ie^{-\beta\tau_i(\boldsymbol{\mu})}+e^{-\beta\tau_i(\boldsymbol{\mu})}\mathbb{E}_{\boldsymbol{\sigma}}\left[\sum_{k=2}^\infty r_{a_k}e^{-\beta \sum_{j=2}^k\tau_{a_j}(\boldsymbol{\mu})}\right]\Bigg)\\
    &\quad\le \sum_{i=1}^n\sigma_1(i)\bigg(r_ie^{-\beta\tau_i(\boldsymbol{\mu})}+e^{-\beta\tau_i(\boldsymbol{\mu})}V^*(\boldsymbol{\mu})\bigg)\\
    &\quad\le \max_{i\in A}\{r_ie^{-\beta\tau_i(\boldsymbol{\mu})}+e^{-\beta\tau_i(\boldsymbol{\mu})}V^*(\boldsymbol{\mu})\}.
    \end{align*}
    Since $\boldsymbol{\sigma}$ is an arbitrary strategy, it follows that
    \[V^*(\boldsymbol{\mu})\le \max_{i\in A}\{r_ie^{-\beta\tau_i(\boldsymbol{\mu})}+e^{-\beta\tau_i(\boldsymbol{\mu})}V^*(\boldsymbol{\mu})\}.\]
    Thus, for some $j\in A$ we have
    \[V^*(\boldsymbol{\mu})\le\frac{r_j}{e^{\beta\tau_j(\boldsymbol{\mu})}-1}\le v(\boldsymbol{\mu})=\max_{i\in A}V(i,\boldsymbol{\mu})\le V^*(\boldsymbol{\mu}).\]
    Therefore,
    \[V^*(\boldsymbol{\mu})=v(\boldsymbol{\mu})=\max_iV(i,\boldsymbol{\mu}).\]
    In other words, the optimal expected total discounted reward can be obtained by a deterministic stationary  strategy.

Now, define
        \begin{align}
          A^\ast(\boldsymbol{\mu})&=\argmax_{i\in A}\frac{r_i}{e^{\beta \tau_i(\boldsymbol{\mu})}-1},\label{eq:optimal_action_set}\\
         D(\boldsymbol{\mu})&=\{
         e_i\mid i\in A^\ast(\boldsymbol{\mu})\},\\
          \text{conv}(D(\boldsymbol{\mu}))&=\{\sigma\in\mathcal{P}(A)\mid \sigma_i=0,\>\forall\>i\notin 
         A^\ast(\boldsymbol{\mu})\},\\
          \text{mconv}(D(\boldsymbol{\mu}))&=\{\boldsymbol{\mu}^\prime\in\mathcal{U}_m(A)\mid \boldsymbol{\mu}^\prime_i=0,\>\forall\>i\notin 
         A^\ast(\boldsymbol{\mu})\}.
    \end{align}
$D(\boldsymbol{\mu})$ is the set of optimal deterministic stationary  strategies. $\text{conv}($
$D(\boldsymbol{\mu}))$ is the set of optimal stationary  strategies. $ \text{mconv}(D(\boldsymbol{\mu}))$ is the set of stationary mass distributions associated with the set of optimal stationary strategies. 

Let $\phi$ be a set-valued map defined as  follows
\begin{equation}
 \phi(\boldsymbol{\mu})=\text{mconv}(D(\boldsymbol{\mu})).\label{bestresponse}   
\end{equation}
The existence of stationary equilibrium is equivalent to the existence of a fixed point of the map $\phi$. We show the latter by Kakutani's fixed point theorem.

Note that $\phi(\boldsymbol{\mu})$ is obviously nonempty and convex for any feasible mass distribution $\boldsymbol{\mu}\in \mathcal{U}_m(A)$. Then, we will show that the map $\phi$ is closed. We only need to show for any sequence $(\bz^k)_{k\in\mathbb{N}}\in \mathcal{U}_m(A)^\mathbb{N}$ and $(\by^k)_{k\in\mathbb{N}}\in \phi(\bz^k)$ with $\lim_{n\rightarrow\infty}\bz^k=\bz$ and  $\lim_{k\rightarrow\infty}\by^k=\by$, we have $\by\in\phi(\bz)$.  For any $k\in\mathbb{N}$, let $\sigma^k$ be the strategy compatible with the distribution $\by^k$ as in Definition \ref{def:stationary_ne}, that is,
\[\sigma^k(i)=\frac{\frac{y^k_i}{\tau_i(\by^k)}}{\sum_{j=1}^{n}\frac{ y^k_j}{\tau_j(\by^k)}},\quad\forall\>i\in A.\]
Since $\sigma^k(i)>0$ if and only if $y^k_i>0$ for any $i\in A$, $\sigma^k$ is an optimal strategy for the agent given distribution $\bz^k$, or equivalently, $\sigma^k\in\text{conv}(D(\bz^k))$ for any $k\in\mathbb{N}$.
Now define $\sigma^\by$ as 
\[\sigma^\by(i)=\frac{\frac{y_i}{\tau_i(\by)}}{\sum_{j=1}^{n}\frac{ y_j}{\tau_j(\by)}},\quad\forall\>i\in A.\]
Similarly, the distribution $\by$ is compatible with the strategy $\sigma^\by$ and $\sigma^\by(i)>0$ if and only if $y_i>0$ for any $i\in A$. Since the function $\tau_j$ is assumed to be continuous,  $\lim_{k\rightarrow\infty}\sigma^k=\sigma^\by$. Note that $V(\sigma,\boldsymbol{\mu})$ is continuous in both $\sigma$ and $\boldsymbol{\mu}$. Additionally, $V^*(\boldsymbol{\mu})$ is continuous in $\boldsymbol{\mu}$.  Then,
\[V(\sigma^\by,\bz)=\lim_{k\rightarrow\infty}V(\sigma^k,\bz^k)=\lim_{k\rightarrow\infty}V^*(\bz^k)=V^*(\bz).\]
Note that for any $i\in A$, $\sigma^\by(i)>0$ only if $V(i,\bz)=V^*(\bz)$ since
\[V(\sigma^\by,\bz)=\sum_{i=1}^n\sigma^\by(i)V(i,\bz).\]
It follows that $\sigma^\by\in\text{conv}(D(\bz))$ and hence $y\in\phi(\bz)$.

We have shown that $\phi$ is nonempty, convex and closed. It follows from Kakutani's fixed point theorem that $\phi$ has at least one fixed point and the first part of the theorem is proven. 

To prove the second part of the theorem, we only need to note that  if 
 $\bmu^\dagger$ is an equilibrium, $i\in A^\dagger$ is the optimal deterministic strategy against $\bmu^\dagger$. Then, for any $i\in A^\dagger$,
 $$V(i,\bmu)=\frac{r_i}{e^{\beta \tau_i(\boldsymbol{\mu})}-1}$$ 
 is the optimal value, i.e., $V(i,\bmu)\ge V(j,\bmu)$ for any $i\in A^\dagger$ and $j\in A$.

\section{ The Constant Action Execution Time Model under Exponential Discounting}\label{appendix:se_example_1}

We consider the constant action execution time model.
Specifically, we consider game instances $\mathcal{G}$ with $\btau$ given by the fluid queueing model introduced in Appendix \ref{appendix:fluid_model} where the execution time $t_i$'s are constants. In fact, $\btau$ can be explicitly computed by,
\[\btau(\bmu)=\left(\max\left\{t_i,\frac{\mu_i}{b_i}\right\}\right)_{i\in A}.\]
Without loss of generality, we make the following assumption.
\begin{assumption}\label{assum:constant_action_time}
  Assume the action execution time $t_i$ is a constant  for any $i\in A$. Let $V_i$ be defined as,
\[V_i=\frac{r_i}{e^{\beta t_i}-1}.\]
We assume that $V_1> V_2>\cdots>V_n$.  
\end{assumption}

As shown in Fig. \ref{fig:example}, there are $n$ actions with constant execution times. We define $A^\dagger$ as the set of actions each of which is taken by a positive mass of agents at equilibrium.   $A^\dagger$ is always a prefix of the action indices in decreasing order of $V_i$.

As the mass $m$ increases, the size of $A^\dagger$ also increases. If the total mass is small, 
all agents will opt for action 1. As $m$ increases, a queue will form for action 1  increasing its sojourn time. If the mass of agents continues to increase, after a certain point the waiting for action 1 becomes large enough so that the new mass of agents will start choosing action 2 because its rate of revenue becomes equal to action 1, and so on. The detailed equilibrium results are as follows.

\begin{figure}
    \centering
  \includegraphics[width=0.48\textwidth]{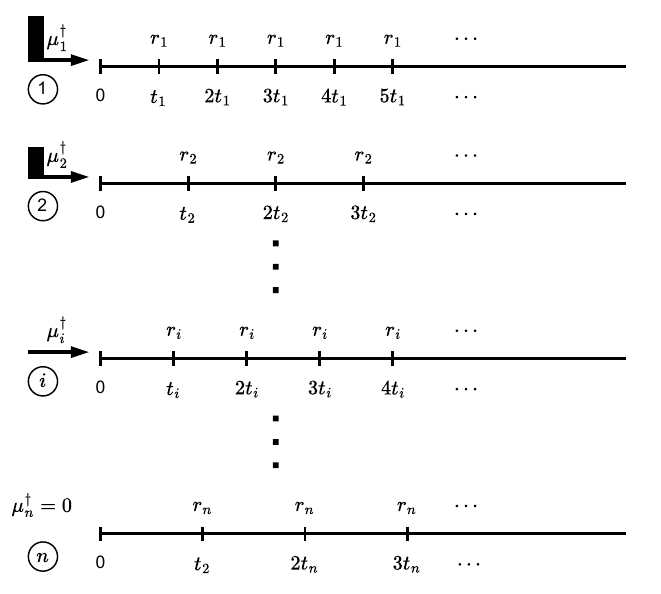}
    \caption{$n$ actions with constant execution times. The figure displays mass distribution $\boldsymbol{\mu}^\dagger$ at equilibrium. If Assumption \ref{assum:constant_action_time} holds, the agents will opt for the first $i$ actions in the equilibrium, i.e., only $\mu^\dagger_1,\cdots,\mu^\dagger_i$ are positive, and  queues must have formed for actions $1,\ldots,i-1$, and possibly for $i$.   The vertical thick line on the left shows the size of the queue formed for each action. Actions with 
    higher values of $V_i$ are chosen first, creating longer queues that reduce their attractiveness in terms of reward rates}.
    \label{fig:example}
\end{figure}

Let $f_i$ denote the reciprocal of $V_i$, i.e.,
\[f_i=\frac{e^{\beta t_i}-1}{r_i}.\]
Then, $f_1<f_2<\cdots<f_n$, since we assume that $V_1> V_2>\cdots>V_n$.
Since the action execution times are constants, we only need to know the mass rates and waiting times at the equilibrium to obtain the corresponding mass distribution and strategies for the agents.  Under Assumption \ref{assum:constant_action_time}, the equilibrium results are given as follows. 

\begin{enumerate}
    \item[] Case (1): $0<\beta m\le b_1\log(r_1f_2+1) $. The agents all choose action 1. The mass rates and waiting times are given by
    \begin{align*}
        &x_1=\min\left\{b_1,\frac{m}{t_1}\right\},\>x_k=0,\>\forall\> 2\le k\le n;\\
        &w_1=\max\left\{0,\frac{m}{b}-t_1\right\},\>w_k=0,\>\forall\> 2\le k\le n.
    \end{align*}
    \item[] Case (i) ($2\le i\le n-1$): $\sum_{1\le k\le i-1}b_{k}\log(r_{k}f_i+1)<\beta m\le \sum_{1\le k\le i}b_k\log(r_k f_{i+1}+1) $. The set of actions chosen by the agents at the equilibrium is $A^\dagger=\{1,2,\cdots,i\}$. 
     If 
\[\sum_{1\le k\le i-1}b_{k}\log(r_{k}f_i+1)<\beta m\le\sum_{1\le k\le i-1}b_{k}\log(r_{k}f_i+1)+\beta b_it_i\,,\]
we have at the equilibrium
\[x_k=
\begin{cases}
b_k,&1\le k\le i-1,\\
\tfrac{1}{t_i}(m-\frac{1}{\beta}\sum_{1\le j\le i-1}b_{j}\log(r_{j}f_i+1)),&k=i,\\
0,&i+1\le k\le n,\\
\end{cases}
\]
\[w_k=
\begin{cases}
\tfrac{1}{\beta}\log(r_kf_i+1)-t_k,&1\le k\le i-1,\\
0,&i\le k\le n.\\
\end{cases}
\]
 If 
\[\sum_{1\le k\le i-1}b_{k}\log(r_{k}f_i+1)+\beta b_it_i<\beta m\le\sum_{1\le k\le i}b_k\log(r_k f_{i+1}+1) ,\]
we have at the equilibrium
\[x_k=
\begin{cases}
b_k,&1\le k\le i,\\
0,&i+1\le k\le n,
\end{cases}
\]
\[w_k=
\begin{cases}
\tfrac{1}{\beta}\log(r_k\psi_i+1)-t_k,&1\le k\le i,\\
0,&i+1\le k\le n,
\end{cases}
\]
where $\psi_i$ satisfies the following equation,
\[\sum_{1\le k\le i}b_k\log(r_k \psi_i+1)=\beta m.\]
\item[] Case (n): $\beta m> \sum_{1\le k\le n-1}b_k\log(r_k f_{n}+1) $. The set of actions chosen by the agents at the equilibrium is $A^\dagger=\{1,2,\cdots,n\}$.  If 
\[ \sum_{1\le k\le n-1}b_k\log(r_k f_{n}+1)<\beta m\le\sum_{1\le k\le n-1}b_k\log(r_k f_{n}+1)+\beta b_nt_n ,\]
we have at the equilibrium
\[x_k=
\begin{cases}
b_k,&1\le k\le n-1,\\
\tfrac{1}{t_n}(m-\frac{1}{\beta}\sum_{1\le j\le n-1}b_{j}\log(r_{j}f_n+1)),&k=n,
\end{cases}
\]

\[w_k=
\begin{cases}
\tfrac{1}{\beta}\log(r_kf_n+1)-t_k,&1\le k\le n-1,\\
0,&k= n.\\
\end{cases}
\]
 If 
\[ \beta m\ge\sum_{1\le k\le n-1}b_k\log(r_k f_{n}+1)+\beta b_nt_n ,\]
we have at the equilibrium:
$x_k=
b_k,\>\forall\>1\le k\le n$,
and
\[w_k=
\tfrac{1}{\beta}\log(r_k\psi_n+1)-t_k,1\le k\le n,
\]
where $\psi_n$ satisfies the following equation,
\[\sum_{1\le k\le n}b_k\log(r_k \psi_n+1)=\beta m.\]
\end{enumerate}

 \section{Proof of Proposition \ref{prop:poa_m+1}}\label{appendix:proof_prop:poa_m+1}

A useful fact which is needed in the proof is that $g_\beta(x):=\frac{e^{\beta x}-1}{\beta x}$ is an increasing function of $x$ and $g_\beta(x)\rightarrow 1$ as  $x\rightarrow0$. Utilizing this fact and Assumption \ref{assum_sojourntime}, we obtain the following lemma, which will be used in the proof.
\begin{lemma}
\label{lem:M_increasing}
$\chi(\mathcal{G}(m,A,\boldsymbol{r},\boldsymbol{\tau},\beta))$ is nondecreasing in $m$.
\end{lemma}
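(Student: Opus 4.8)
The plan is to reduce the statement to a monotonicity property of the sojourn times alone, and then to establish that property by inflating the mass on a single action. First I would record that, by \eqref{parameter_mi} and \eqref{parameter_m}, $\chi(\mathcal{G})=\max_{i\in A}\sup_{\boldsymbol{\mu}\in\mathcal{U}_m(A)}g_\beta(\tau_i(\boldsymbol{\mu}))$, where $g_\beta(x)=\frac{e^{\beta x}-1}{\beta x}$ is the increasing function recalled just before the lemma. Since $\boldsymbol{\tau}$ is continuous (Assumption~\ref{assum_sojourntime}) and $\mathcal{U}_m(A)$ is compact, each inner supremum is attained, and because $g_\beta$ is increasing and continuous we may pull it outside the supremum: writing $T_i(m):=\max_{\boldsymbol{\mu}\in\mathcal{U}_m(A)}\tau_i(\boldsymbol{\mu})$, we get $\chi(\mathcal{G})=g_\beta\!\left(\max_{i\in A}T_i(m)\right)$. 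Thus the lemma follows once I show that $\max_{i\in A}T_i(m)$ is nondecreasing in $m$, which in turn follows if each $T_i(m)$ is nondecreasing.

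The key step is the claim: for $m<m'$ and any $\boldsymbol{\mu}\in\mathcal{U}_m(A)$, the distribution $\boldsymbol{\mu}':=\boldsymbol{\mu}+(m'-m)e_i\in\mathcal{U}_{m'}(A)$ obtained by assigning all the additional mass to action $i$ satisfies $\tau_i(\boldsymbol{\mu}')\ge\tau_i(\boldsymbol{\mu})$. I would prove this by contradiction through Assumption~\ref{assum_sojourntime}. Suppose $\tau_i(\boldsymbol{\mu}')<\tau_i(\boldsymbol{\mu})$. Since $\mu_i'=\mu_i+(m'-m)>\mu_i$ while the denominator has strictly decreased, \eqref{eq:little_law} gives
\[
x_i(\boldsymbol{\mu}')=\frac{\mu_i'}{\tau_i(\boldsymbol{\mu}')}>\frac{\mu_i}{\tau_i(\boldsymbol{\mu})}=x_i(\boldsymbol{\mu}).
\]
But $x_i(\boldsymbol{\mu})<x_i(\boldsymbol{\mu}')$ forces $\tau_i(\boldsymbol{\mu})\le\tau_i(\boldsymbol{\mu}')$ by rate monotonicity, contradicting the hypothesis. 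The degenerate case $\mu_i=0$ is covered too, since then $x_i(\boldsymbol{\mu})=0<x_i(\boldsymbol{\mu}')$ and the same conclusion applies.

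To finish, I would take $\boldsymbol{\mu}^\star\in\mathcal{U}_m(A)$ attaining $T_i(m)$ and apply the claim to it: the inflated distribution lies in $\mathcal{U}_{m'}(A)$ and has $\tau_i$ no smaller than $\tau_i(\boldsymbol{\mu}^\star)=T_i(m)$, whence $T_i(m')\ge T_i(m)$. This holds for every $i$, so $\max_{i\in A}T_i(\cdot)$ is nondecreasing, and composing with the increasing function $g_\beta$ yields that $\chi(\mathcal{G})$ is nondecreasing in $m$. The computation here is short; the only point requiring care, and the one I would flag as the main obstacle, is the bookkeeping of the rate-monotonicity hypothesis: Assumption~\ref{assum_sojourntime} is phrased in terms of the \emph{rate} $x_i$ rather than the mass, so the argument must be routed through $x_i$, and the strict increase of $x_i$ must be obtained jointly from the larger numerator and the hypothetically smaller denominator.
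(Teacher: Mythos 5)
Your proof is correct, and its engine is the same as the paper's: both arguments establish that increasing the total mass cannot decrease a sojourn time, by combining Little's law \eqref{eq:little_law} with the contrapositive of the rate monotonicity in Assumption~\ref{assum_sojourntime}, and then push this through the increasing map $x\mapsto\frac{e^{\beta x}-1}{\beta x}$. The one substantive difference is the choice of comparison distribution: the paper pairs $\boldsymbol{\mu}\in\mathcal{U}_m(A)$ with the uniformly scaled $\mu_i'=\frac{m'}{m}\mu_i$ and argues $\tau_i(\boldsymbol{\mu})\le\tau_i(\boldsymbol{\mu}')$ for all $i$ at once, deriving the contradiction from ``$\mu_i'>\mu_i$ for all $i$'' --- a step that is vacuous for coordinates with $\mu_i=0$, where the scaled mass does not actually increase. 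Your single-coordinate inflation $\boldsymbol{\mu}'=\boldsymbol{\mu}+(m'-m)e_i$ needs the inequality only for the inflated action, guarantees a strict mass increase there, and handles the $\mu_i=0$ case explicitly, so it is marginally more robust; the price is the extra (harmless) layer of introducing the attained maxima $T_i(m)$ via compactness, which the paper's per-distribution comparison avoids. Both routes are valid and of essentially the same length.
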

\begin{proof}
For any $m, m',\boldsymbol{\mu} \in \mathcal{U}_m(A)$ with $m' > m$, define $\boldsymbol{\mu}' \in \mathcal{U}_{m'}(A)$ by
$\mu_i' = \frac{m'}{m} \mu_i$ for all $i\in A$.
We claim that $\tau_i(\boldsymbol{\mu}) \leq \tau_i(\boldsymbol{\mu}')$ for all $i$. If not, then
$\tau_i(\boldsymbol{\mu}') < \tau_i(\boldsymbol{\mu})$ for some $i$, which implies $x_i(\boldsymbol{\mu}') \leq x_i(\boldsymbol{\mu})$ by Assumption~\ref{assum_sojourntime}. But then by Little's law,
$\mu_i' = x_i(\boldsymbol{\mu}') \tau_i(\boldsymbol{\mu}') \leq x_i(\boldsymbol{\mu}) \tau_i(\boldsymbol{\mu}) = \mu_i$ which is a contradiction since $\mu_i' > \mu_i$ for all $i\in A$.

Thus, 
\[
\frac{e^{\beta \tau_i(\boldsymbol{\mu})}-1}{\beta \tau_i(\boldsymbol{\mu})} \leq
\frac{e^{\beta \tau_i(\boldsymbol{\mu}')}-1}{\beta \tau_i(\boldsymbol{\mu}')},
\]
for all $i$, which implies $\chi(\mathcal{G}(m,A,\boldsymbol{r},\boldsymbol{\tau},\beta)) \leq \chi(\mathcal{G}(m',A,\boldsymbol{r},\boldsymbol{\tau},\beta))$.
\end{proof}

Let $\boldsymbol{\mu}^\dagger=(\mu^\dagger_i)_{i\in A}$ be the mass distribution of a stationary equilibrium and $\boldsymbol{\mu}^*=(\boldsymbol{\mu}^*_i)_{i\in A}$
be a socially optimal mass distribution. 

Define the set of actions which are used at least as frequently in the equilibrium than in the social optimum,
\[X=\left\{i\in A\mid x_{i}(\boldsymbol{\mu}^\dagger) \geq {x}_{i}(\boldsymbol{\mu}^*) \right\}.\]

First, notice that
\begin{equation}
\label{eq:ins1}
\sum_{i\in X}r_{i}x_i(\boldsymbol{\mu}^*)
\le\sum_{i\in X}r_{i}x_i(\boldsymbol{\mu}^\dagger)
\le\sum_{i\in A}r_{i}x_i(\boldsymbol{\mu}^\dagger)
=SW(\boldsymbol{\mu}^\dagger).
\end{equation}

For $i\notin X$,
\begin{equation}
\label{eq:pf_tmp0}
\sum_{i\notin X} r_i x_i(\boldsymbol{\mu}^*)
  \leq \sum_{i\notin X} \frac{r_i}{e^{\beta \tau_i(\boldsymbol{\mu}^*)}-1} \frac{e^{\beta \tau_i(\boldsymbol{\mu}^*)}-1}{\tau_i(\boldsymbol{\mu}^*)}\mu_i^*,
\end{equation}
and
\begin{equation}
\label{eq:pf_tmp1}
   \frac{r_i}{e^{\beta \tau_i(\boldsymbol{\mu}^*)}-1} \leq \frac{r_i}{e^{\beta \tau_i(\boldsymbol{\mu}^\dagger)}-1} 
   \leq \frac{r_l}{e^{\beta \tau_l(\boldsymbol{\mu}^\dagger)}-1},
\end{equation}
where the first inequality follows from Assumption~\ref{assum_sojourntime} for all $i\notin X$, and the
second is due to the equilibrium condition~\eqref{thm:stationary_equilibrium_eq} in Theorem~\ref{thm:stationary_equilibrium} for any $l\in A^\dagger$.
Also, 
\begin{equation}
\label{eq:pf_tmp2}
    \frac{e^{\beta \tau_i(\boldsymbol{\mu}^*)}-1}{\tau_i(\boldsymbol{\mu}^*)} \leq \beta M\,,
\end{equation}
by the definition of $\chi(\mathcal{G}(m,A,\boldsymbol{r},\boldsymbol{\tau},\beta))$ and Lemma~\ref{lem:M_increasing}.
Using~\eqref{eq:pf_tmp1} and~\eqref{eq:pf_tmp2} in~\eqref{eq:pf_tmp0} yields
\begin{equation}
\label{eq:pf_tmp3}
    \sum_{i\notin X} r_i x_i(\boldsymbol{\mu}^*) \leq
       \sum_{i\notin X} \frac{r_l}{e^{\beta \tau_l(\boldsymbol{\mu}^\dagger)}-1} \beta M \mu_i^* \leq
       \frac{r_l}{\tau_l(\boldsymbol{\mu}^\dagger)} M m
\end{equation}
for every $l\in A^\dagger$. Choosing a $l\in A^\dagger$ with the smallest ratio $r_l / \tau_l(\boldsymbol{\mu}^\dagger)$ gives 
the further bound
\begin{equation}
\label{eq:pf_tmp4}
    \frac{r_l}{\tau_l(\boldsymbol{\mu}^\dagger)} M m \leq M\sum_{i\in A} \frac{r_i}{\tau_i(\boldsymbol{\mu}^\dagger)} \mu^\dagger_i
    = M \ \text{SW}(\boldsymbol{\mu}^\dagger)
\end{equation}
Combining~\eqref{eq:ins1}, \eqref{eq:pf_tmp3} and \eqref{eq:pf_tmp4} yields
\begin{equation*}
    \text{SW}(\boldsymbol{\mu}^*) = \sum_{i\in X} r_i x_i(\boldsymbol{\mu}^*) + \sum_{i\notin X} r_i x_i(\boldsymbol{\mu}^*) 
    \leq (M + 1) \ \text{SW}(\boldsymbol{\mu}^\dagger)\,.
\end{equation*}
Therefore, $\text{PoA}\le M+1$.

For the lower bound, we give a sequence of instances  for which the ratio inside the supremum in \eqref{eq:PoA_definition} approaches $M+1$. Consider the constant action execution time model with $n=2$ and execution times $t_1$, $t_2$. Assume the resource supply rate $b_2$ is infinite, that is, $w_2$ is always equal to 0.  Let
\[M=\frac{e^{\beta t_2}-1}{\beta t_2}\ge \frac{e^{\beta t_1}-1}{\beta t_1}.\]
Let
\[\frac{r_1}{e^{\beta(t_1+w_1)}-1}=\frac{r_2}{e^{\beta t_2}-1},\]
and
\[b_1(t_1+w_1)=m.\]
Then, at the equilibrium all agents choose action 1. The social welfare will only increase if the 
mass $b_1 w_1 = m - b_1 t_1$ of agents waiting for action $1$ take action $2$ instead.
This gives the lower bound,
\begin{align*}
    \text{PoA}&\ge\frac{r_1b_1+r_2\frac{m-b_1t_1}{t_2}}{r_1b_1}=1+\frac{r_2(m-b_1t_1)}{t_2r_1b_1}\\
    &=1-\frac{\frac{r_2}{t_2}}{\frac{r_1}{t_1}}+\frac{\frac{r_2}{t_2}}{\frac{r_1}{t_1+w_1}}\frac{\frac{r_1}{e^{\beta(t_1+w_1)}-1}}{\frac{r_2}{e^{\beta t_2}-1}}\\
    &=1-\frac{\frac{e^{\beta t_2}-1}{\beta t_2}}{\frac{e^{\beta(t_1+w_1)}-1}{\beta t_1}}+\frac{\frac{e^{\beta t_2}-1}{\beta t_2}}{\frac{e^{\beta(t_1+w_1)}-1}{\beta (t_1+w_1)}}=1+\frac{\frac{e^{\beta t_2}-1}{\beta t_2}}{\frac{e^{\beta(t_1+w_1)}-1}{\beta (t_1+w_1)}\frac{t_1+w_1}{w_1}}\>.
\end{align*}
We let $w_1$ approach zero and $t_1=o(w_1)$. This can be done by letting $m$ be fixed, $b_1=\frac{m}{t_1+\sqrt{t_1}}$, and $t_1\rightarrow0$. Alternatively, we can let $b_1$ be fixed, $m=b_1(t_1+\sqrt{t_1})$,   and $t_1\rightarrow0$.  Eitherway, we have  $t_1+w_1\rightarrow0$ and  $\frac{t_1+w_1}{w_1}\rightarrow1$. Then, it follows that,
\[\frac{e^{\beta(t_1+w_1)}-1}{\beta(t_1+w_1)}\frac{t_1+w_1}{w_1}\rightarrow1.\]
Therefore, 
\[\text{PoA}\ge 1+\frac{\frac{e^{\beta t_2}-1}{\beta t_2}}{\frac{e^{\beta(t_1+w_1)}-1}{\beta (t_1+w_1)}\frac{t_1+w_1}{w_1}}\rightarrow 1+ \frac{e^{\beta t_2}-1}{\beta t_2}=M+1.\]

\section{Proof of Corollary \ref{coro:infinite_poa}}\label{appendix:proof_coro:infinite_poa}
Here we allow the parameter $M$ to vary.   We will give a sequence of instances with PoA approaching infinity. Similar to the sequence of instances given in the proof of Proposition \ref{prop:poa_m+1}, we still
   assume $n=2$ and let
\[M=\frac{e^{\beta t_2}-1}{\beta t_2}\ge \frac{e^{\beta t_1}-1}{\beta t_1}.\]
Similarly, let
\[\frac{r_1}{e^{\beta(t_1+w_1)}-1}=\frac{r_2}{e^{\beta t_2}-1},\]
and
\[b_1(t_1+w_1)=m.\]
Then we have at the equilibrium all agents choose action 1. The PoA is computed similarly as follows,
\begin{align*}
    \text{PoA}&\ge\frac{r_1b_1+r_2\frac{m-b_1t_1}{t_2}}{r_1b_1}=1+\frac{r_2(m-b_1t_1)}{t_2r_1b_1}\\
    &=1-\frac{\frac{r_2}{t_2}}{\frac{r_1}{t_1}}+\frac{\frac{r_2}{t_2}}{\frac{r_1}{t_1+w_1}}\frac{\frac{r_1}{e^{\beta(t_1+w_1)}-1}}{\frac{r_2}{e^{\beta t_2}-1}}\\
    &=1-\frac{\frac{e^{\beta t_2}-1}{\beta t_2}}{\frac{e^{\beta(t_1+w_1)}-1}{\beta t_1}}+\frac{\frac{e^{\beta t_2}-1}{\beta t_2}}{\frac{e^{\beta(t_1+w_1)}-1}{\beta(t_1+w_1)}}=1+\frac{\frac{e^{\beta t_2}-1}{\beta t_2}}{\frac{e^{\beta(t_1+w_1)}-1}{\beta w_1}}.
\end{align*}
   
   Consider the following setup of parameters. 
Let $m$, $b_1$, and $t_1$ be fixed. Then, $w_1=\frac{m}{b_1}-t_1$ is also fixed. Let $t_2\rightarrow\infty$,  it follows that
\[M=\frac{e^{\beta t_2}-1}{\beta t_2}\rightarrow\infty,\]
since $\frac{e^{\beta t_2}-1}{\beta t_2}\rightarrow\infty$ as $t_2\rightarrow\infty$. 
Thus,  as we have  claimed,
\[\text{PoA}\ge1+\frac{\frac{e^{\beta t_2}-1}{\beta t_2}}{\frac{e^{\beta(t_1+w_1)}-1}{\beta w_1}}\rightarrow\infty.\]

\section{Proof of Proposition \ref{prop:switchingornot}}\label{appendix:proof_prop:switchingornot}

We first prove Proposition \ref{prop:switchingornot}.1 by giving a game instance $\mathcal{G}$ where a  nonstatioanry (switching) strategy can be more profitable than any deterministic stationary strategy of always choosing a single action.
Consider the constant action execution time model with $n=2$. Assume $b_1=b_2=\infty$.
 Since $\alpha>1$, the series
$\sum\limits_{n=1}^\infty n^{-\alpha}$
converges to $\zeta(\alpha)$, the Riemann zeta function evaluated at $\alpha$. For every $\epsilon>0$, there exists an integer $N$ such that 
\[0<N^{-\alpha}<\zeta(\alpha)- \sum\limits_{n=1}^{N-1}n^{-\alpha}<\epsilon.\]
Assume $t_2=Nt_1$ and $r_2=N^{\alpha}r_1-r_1$. Then,
\[
V(1, \boldsymbol{\tau})=r_1{t_1^{-\alpha}}\zeta(\alpha)
>\left(1-{N^{-\alpha}}\right)r_1{t_1^{-\alpha}}\zeta(\alpha)
=r_2{t_2^{-\alpha}}\zeta(\alpha) = V(2, \boldsymbol{\tau}).\]
Thus, the stationary strategy of always choosing action 1 is the optimal among  stationary strategies.

The following nonstatioanry (switching) strategy can result in a strictly better payoff: the agent chooses action 1 until time $T=Nt_1$ and chooses action 2 thereafter. The total utility under this strategy is
\begin{align*}
&\sum\limits_{n=1}^N r_1(nt_1)^{-\alpha}+\sum\limits_{n=2}^\infty r_2(nt_2)^{-\alpha}\\
    &\quad=r_1t_1^{-\alpha}\sum\limits_{n=1}^N n^{-\alpha}+(1-N^{-\alpha})r_1t_1^{-\alpha}\sum\limits_{n=2}^\infty n^{-\alpha}\\
    &\quad>r_1t_1^{-\alpha}\big[\zeta(\alpha)-\epsilon+(1-\epsilon)(\zeta(\alpha)-1)\big]\\
    &\quad=r_1t_1^{-\alpha}\big[(2-\epsilon)\zeta(\alpha)-1\big],
    \end{align*}
which is strictly larger than $V(1, \boldsymbol{\tau}) = r_1t_1^{-\alpha}\zeta(\alpha)$ if 
$\epsilon<1-{\zeta(\alpha)}^{-1}$.

We next prove Proposition \ref{prop:switchingornot}.2. 
We will only give a proof for the case when there are two actions. When there are more actions, the proof is similar.   Given $\boldsymbol{\mu}$, the sojourn time associated with the two actions $\tau_1$ and $\tau_2$ are then determined. Without loss of generality, we assume
\[\frac{r_1}{\tau_1(\bmu)}\ge\frac{r_2}{\tau_2(\bmu)}.\]
We will show that the deterministic stationary strategy of always choosing action 1 is the best response against $\bmu$.

We first define a sequence of positive constants: $\delta_k=\tau_1$ or $\tau_2$ for $k\ge1$.
Any switching strategy can be represented by the sequence $(\delta_k)_{k\in\mathbb{N}}$ where $\delta_k$ can be viewed as the $k$th choice of the strategy. Specifically, for the $k$th choice,  $\delta_k=\tau_1$ indicates the strategy under consideration chooses action 1 and $\delta_k=\tau_2$  indicates the strategy under consideration chooses action 2. Given $(\delta_k)_{k\in\mathbb{N}}$, define $\eta_k$, $k\in\mathbb{N}$,  as follows
\[\eta_k=\begin{cases}
    r_1,&\delta_k=\tau_1\\
    r_2,&\delta_k=\tau_2
\end{cases}\]
Let \[\theta_n=\sum_{k=1}^n\delta_k.\]
For any large enough $T$, there is a $N$ such that
\[\theta_N\le T<\theta_{N+1}.\]
To prove the desired result, we need the following facts which follow from 
elementary facts.
When $0\leq\alpha<1$,
\begin{equation}\label{ineq_1-alpha}
    \sum_{k=1}^{N}\delta_k \theta_k^{-\alpha}\sim T^{-\alpha+1},
\end{equation}
and when $\alpha=1$,
\begin{equation}\label{ineq_log}
   \sum_{k=1}^{N}\delta_k\theta_k^{-1}\sim \log T.
\end{equation}

Now assume $0\leq\alpha<1$; the discussion for $\alpha=1$ is similar. For any switching strategy represented by $(\delta_k)_{k\in\mathbb{N}}$ and $(\eta_k)_{k\in\mathbb{N}}$, the total utility is
\[\sum_{k=1}^{N}\eta_k \theta_k^{-\alpha}.\]
Note that $\frac{\eta_k}{\delta_k}=\frac{r_1}{\tau_1}$ or $\frac{r_2}{\tau_2}$. Hence,
\[\sum_{k=1}^{N}\eta_k \theta_k^{-\alpha}=\sum_{k=1}^{N}\frac{\eta_k}{\delta_k}\delta_k \theta_k^{-\alpha}\le \frac{r_1}{\tau_1}\sum_{k=1}^{N}\delta_k \theta_k^{-\alpha}\sim\frac{r_1}{\tau_1}T^{-\alpha+1}.\]
The reward of always choosing action 1 is $V_1(T)\sim\frac{r_1}{\tau_1}T^{-\alpha+1}$ by \eqref{eq:reward_VT}. Thus, always choosing action 1 cannot be asymptotically worse than any nonstationary  (switching) strategy. 

Now if there are $n>2$ actions and multiple actions have the largest reward rate, i.e.,  
\[\frac{r_1}{\tau_1(\bmu)}=\cdots=\frac{r_j}{\tau_j(\bmu)}>\frac{r_{j+1}}{\tau_{j+1}(\bmu)}\geq \cdots\geq \frac{r_{n}}{\tau_n(\bmu)}.\]
From the proof for the two-action case, it is clear that any randomized stationary strategy which randomizes over the set of actions $\{1,\cdots,j\}$ and any deterministic stationary strategy which always chooses one action from $\{1,\cdots,j\}$ are  best responses against the agent mass distribution $\bmu$.


\section{The Constant Execution Time Model under Power Law Discounting with $0\le \alpha\le 1$}\label{appendix:se_example_2}
We will show how to compute the mean field equilibrium for the constant action execution time model. We assume
 \[\frac{r_1}{t_1}>\frac{r_2}{t_2}>\cdots>\frac{r_n}{t_n}.\]
Then, the equilibrium results are similar to Appendix \ref{appendix:se_example_1} as shown in Fig. \ref{fig:example}.  When the mass of agents grows larger, so does the size of $A^\dagger$, i.e.,  the set of actions each of which is chosen by a positive mass of agents at equilibrium. The detailed equilibrium results are as follows.

 Let ${w_k^l}$ be the solution to the following equation
 \[\frac{r_k}{t_k+{w}_k^l}=\frac{r_{l}}{t_{l}},\quad\forall k=1,\cdots,n-1,\ k<l\le n,\]
or equivalently,
\[{w}_k^l=\frac{t_{l}r_k}{r_{l}}-t_k,\quad\forall k=1,\cdots,n-1,\ k<l\le n.\]
The equilibrium results are summarized in the following.
\begin{enumerate}
    \item[]Case (1): $0\le m\le b_1(t_1+w_1^2)$. The agents all choose action 1. The equilibrium is characterized by
\[x_1=\min\left(\frac{m}{t_1},b_1\right),\quad w_1=\max\left(0,\frac{m}{b_1}-t_1\right).\]
     \item[]Case (i): $\sum\limits_{1\le k\le i-1}b_k(t_k+w_k^i)<m\le \sum\limits_{1\le k\le i}b_k(t_k+w_k^{i+1})$, where $2\le i\le n-1$. The set of actions chosen by the agents at the equilibrium is $A^\dagger=\{1,\cdots,i\}$. When $\sum\limits_{1\le k\le i-1}b_k(t_k+w_k^i)<m\le \sum\limits_{1\le k\le i-1}b_k(t_k+w_k^i)+b_it_i$, the equilibrium mass rates are given by
\[x_k=
\begin{cases}
b_k,&1\le k\le i-1,\\
\displaystyle \frac{m-\sum\limits_{1\le k\le i-1}b_k(t_k+w_k^i)}{t_i},&k=i,\\
0,&i+1\le k\le n,\\
\end{cases}
\]
and waiting times are given  by
\[w_k=
\begin{cases}
w_k^{i},&1\le k\le i-1,\\
0,&i\le k\le n.
\end{cases}
\]
When $\sum\limits_{1\le k\le i-1}b_k(t_k+w_k^i)+b_it_i<m\le \sum\limits_{1\le k\le i}b_k(t_k+w_k^{i+1})$, the equilibrium mass rated are given by
\[x_k=
\begin{cases}
b_k,&1\le k\le i,\\
0,&i+1\le k\le n,
\end{cases}
\]
and waiting times are given  by
\[w_k=
\begin{cases}
\frac{m}{ \sum\limits_{k=1}^ir_kb_k}
r_k-t_k,&1\le k\le i,\\
0,&i+1\le k\le n.
\end{cases}
\]
    \item[]Case (n): $m>\sum\limits_{1\le k\le n-1}b_k(t_k+w_k^{n})$. The set of actions chosen by the agents at the equilibrium is $A^\dagger=\{1,\cdots,n\}$.
When $\sum\limits_{1\le k\le n-1}b_k(t_k+w_k^n)<m\le \sum\limits_{1\le k\le n-1}b_k(t_k+w_k^n)+b_nt_n$,  the equilibrium mass rates are given by
\[x_k=
\begin{cases}
b_k,&1\le k\le n-1,\\
\frac{m-\sum\limits_{1\le k\le n-1}b_k(t_k+w_k^n)}{t_n},&k=n,
\end{cases}
\]
and the waiting times are given  by
\[w_k=
\begin{cases}
w_k^{n},&1\le k\le n-1,\\
0,&k=n.
\end{cases}
\]
When $m>\sum\limits_{1\le k\le n-1}b_k(t_k+w_k^n)+b_nt_n$, the equilibrium mass rates and waiting times are given by
\[x_k=b_k,\quad w_k=\frac{m}{ \sum\limits_{k=1}^nr_kb_k}r_k-t_k,\quad\forall\> 1\le k\le n.
\]
\end{enumerate}

\section{Regime of High Discounting ($\alpha>1$) and Proof of Proposition \ref{coro:se_nonexistence}}\label{appendix:proof_coro:se_nonexistence}

If $\alpha>1$, it is difficult to predict the strategic behavior of the agents since their preferences may shift from one action to another depending on when this action is executed.
Even if a stationary equilibrium does not exist,  it is always better for an agent to follow a deterministic strategy after some initial time period.
The following technical result will  allow us to show that the PoA is 2 whenever a stationary equilibrium exists.  

\begin{proposition}\label{prop:alpha_largerthan1}
   Consider the game $\mathcal{G}(m,A,\boldsymbol{r},\boldsymbol{\tau},\alpha)$ with $\alpha>1$ and any agent mass distribution $\boldsymbol{\mu}\in\mathcal{U}_m(A)$ for which
    \[\frac{r_1}{\tau_1(\boldsymbol{\mu})}>\frac{r_2}{\tau_2(\boldsymbol{\mu})}\ge\cdots
\ge\frac{r_n}{\tau_n(\boldsymbol{\mu})}\,.\]
Then,
     the (deterministic) strategy of always choosing the action associated with the largest reward rate is the unique best response against $\bmu$ after some initial time period.
     Furthermore, if $\boldsymbol{\mu}$ satisfies
    \[\frac{r_1}{\tau_1(\boldsymbol{\mu})}=\frac{r_2}{\tau_2(\boldsymbol{\mu})}=\cdots
=\frac{r_n}{\tau_n(\boldsymbol{\mu})},\]
and 
\[{\tau_1(\boldsymbol{\mu})}<{\tau_2(\boldsymbol{\mu})}\leq\cdots\leq{\tau_n(\boldsymbol{\mu})},\]
the strategy of always choosing the action associated with the smallest sojourn time is the unique best response against $\bmu$  after some initial time period.
\end{proposition}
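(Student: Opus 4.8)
The plan is to recast the agent's problem, for the fixed mass distribution $\boldsymbol{\mu}$, as a deterministic optimal-control problem whose state is the current absolute time, and to read the claim off the value function of this problem. Abbreviate $\tau_i=\tau_i(\boldsymbol{\mu})$ and write $\rho_i=r_i/\tau_i$ for the reward rate of action $i$. For a deterministic action sequence $a_1,a_2,\dots$ (randomized strategies may be ignored, since a deterministic one is always at least as good) put $\Theta_k=\sum_{j\le k}\tau_{a_j}$, so that beginning deliberation at absolute time $s$ yields total utility $G_\sigma(s)=\sum_{k\ge1}r_{a_k}(s+\Theta_k)^{-\alpha}$, a convergent series because $\alpha>1$ and $\Theta_k\ge k\tau_{\min}$. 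I would define $W(s)=\sup_\sigma G_\sigma(s)$ and first record its structure: each $G_\sigma$ is finite, strictly decreasing and convex in $s$, hence $W$ is finite, decreasing and convex, and by the principle of optimality (the sum is absolutely convergent, so the first reward splits off) $W$ satisfies the Bellman equation $W(s)=\max_{i\in A}\{\,r_i(s+\tau_i)^{-\alpha}+W(s+\tau_i)\,\}$ with the maximum attained. In this language the statement to prove is: there is $s_0$ such that for all $s\ge s_0$ the unique maximizer in the Bellman equation is $i=1$.

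Next I would pin $W$ down to leading order by a sandwich. The always-action-1 payoff $V_1(s)=r_1\sum_{k\ge1}(s+k\tau_1)^{-\alpha}$ is a lower bound, while $r_{a_k}=\rho_{a_k}(\Theta_k-\Theta_{k-1})\le\rho_1(\Theta_k-\Theta_{k-1})$ and monotonicity of $x^{-\alpha}$ give the upper bound $W(s)\le\rho_1\int_s^\infty x^{-\alpha}\,dx=\frac{\rho_1}{\alpha-1}s^{1-\alpha}$; both are $\frac{\rho_1}{\alpha-1}s^{1-\alpha}(1+o(1))$. Substituting this leading form into the Bellman bracket and using $(s+\tau_i)^{1-\alpha}=s^{1-\alpha}+(1-\alpha)\tau_i s^{-\alpha}+O(s^{-\alpha-1})$ and $r_i(s+\tau_i)^{-\alpha}=r_i s^{-\alpha}+O(s^{-\alpha-1})$, the bracket for action $i$ becomes $\frac{\rho_1}{\alpha-1}s^{1-\alpha}+\tau_i(\rho_i-\rho_1)s^{-\alpha}$ plus the correction coming from $W$. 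Since $\tau_i(\rho_i-\rho_1)<0$ for $i\ne1$ and equals $0$ for $i=1$, action $1$ wins at order $s^{-\alpha}$ provided the $W$-corrections do not overturn this; concretely it suffices to show $W(s+\tau_1)-W(s+\tau_i)=\rho_1(\tau_i-\tau_1)s^{-\alpha}+o(s^{-\alpha})$.

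This second-order control of $W$ is the crux and the main obstacle, because the competing actions are separated only at order $s^{-\alpha}$ while convexity alone controls $|W'|$ only up to a multiplicative constant. I would obtain it by bootstrapping the Bellman equation: writing $W(s)=\frac{\rho_1}{\alpha-1}s^{1-\alpha}+R(s)$ with $R=o(s^{1-\alpha})$ turns it into $R(s)=\max_i\{\tau_i(\rho_i-\rho_1)s^{-\alpha}+R(s+\tau_i)\}+O(s^{-\alpha-1})$; taking $i=1$ as a lower bound and $\tau_i(\rho_i-\rho_1)\le0$ as an upper bound sandwiches $R(s)$ between $R(s+\tau_1)$ and $\max_i R(s+\tau_i)$ up to $O(s^{-\alpha-1})$, and iterating along the grid $s,s+\tau_1,\dots$ together with $R\to0$ yields $R(s)=O(s^{-\alpha})$. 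A further pass, comparing with the exact Euler--Maclaurin expansion $V_1(s)=\frac{\rho_1}{\alpha-1}s^{1-\alpha}-\frac{r_1}{2}s^{-\alpha}+O(s^{-\alpha-1})$, upgrades this to a genuine power expansion $R(s)=c_0s^{-\alpha}+o(s^{-\alpha})$. The value of $c_0$ is irrelevant: for any such $R$ one has $R(s+\tau_i)-R(s+\tau_1)=c_0[(s+\tau_i)^{-\alpha}-(s+\tau_1)^{-\alpha}]+o(s^{-\alpha})=o(s^{-\alpha})$, exactly leaving the positive gap $\tau_i(\rho_1-\rho_i)s^{-\alpha}$ intact. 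Hence for all large $s$ action $1$ is the strict, unique Bellman maximizer, which is the first assertion.

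For the second assertion all reward rates coincide, $\rho_i\equiv\rho$, so the payoff from $s$ collapses to $\rho\sum_k(\Theta_k-\Theta_{k-1})(s+\Theta_k)^{-\alpha}$, a right-endpoint Riemann sum for $\rho\int_s^\infty x^{-\alpha}\,dx$ whose deficit, by strict convexity of $x^{-\alpha}$, is smaller for finer partitions; this is the conceptual reason the smallest admissible step $\tau_1$ should be optimal. Quantitatively, the first-order handicap $\tau_i(\rho_i-\rho_1)$ now vanishes for every $i$, and carrying the same Bellman expansion one order further produces the distinguishing handicap $-\frac{\rho\alpha}{2}\tau_i^{2}s^{-\alpha-1}$, which is least negative precisely for the smallest $\tau_i=\tau_1$. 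Repeating the bootstrap one order deeper (now controlling $R$ to order $s^{-\alpha-1}$) then shows that the action of smallest sojourn time is the unique Bellman maximizer for all large $s$, giving the claim. The entire difficulty is concentrated in the bootstrap of the third paragraph: obtaining $W$ one order, and two orders for the second part, beyond its leading asymptotic, since that is the finest scale at which the candidate actions differ.
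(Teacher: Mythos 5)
Your overall strategy---dynamic programming on the absolute time $s$ and an asymptotic expansion of the value function $W$---is genuinely different from the paper's proof, which never touches the value function. The paper represents the cumulative reward of an arbitrary strategy started at a large time $T$ as the integral of a step function lying below the curve $y=\frac{r_1}{\tau_1}x^{-\alpha}$ and dominates it interval by interval: the elementary inequality $\frac{r_1}{\tau_1}(t+\tau_1)^{-\alpha}>\frac{r_i}{\tau_i}t^{-\alpha}$ (valid for all large $t$ because $r_1/\tau_1>r_i/\tau_i$ strictly) handles every execution of an action $i\neq 1$, and convexity of $x^{-\alpha}$ handles off-grid executions of action $1$; the second assertion is obtained from the same convexity comparison of Riemann sums. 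That route needs no asymptotic control of $W$ at all.

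Your proof, by contrast, has a genuine gap at exactly the step you identify as the crux. Writing $\rho_i=r_i/\tau_i(\boldsymbol{\mu})$, the competing entries of the Bellman bracket differ only by $(\rho_1-\rho_i)\tau_i s^{-\alpha}$, so you must show $R(s+\tau_1)-R(s+\tau_i)$ is strictly smaller than that gap. Your first bootstrap pass plausibly yields $R(s)=O(s^{-\alpha})$, but that is precisely the order at which the actions are separated, so by itself it decides nothing. The ``further pass'' that is supposed to upgrade this to $R(s)=c_0s^{-\alpha}+o(s^{-\alpha})$ is only asserted: the comparison with the Euler--Maclaurin expansion of $V_1$ gives the one-sided sandwich $-\tfrac{r_1}{2}s^{-\alpha}+O(s^{-\alpha-1})\le R(s)\le 0$, and the width $\tfrac{r_1}{2}s^{-\alpha}$ of this window can easily exceed the separating gap $(\rho_1-\rho_2)\tau_2 s^{-\alpha}$ (take $\rho_2$ close to $\rho_1$). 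Establishing that $s^{\alpha}R(s)$ actually converges is a nontrivial Tauberian-type claim, and the matching upper bound $W(s)\le V_1(s)+o(s^{-\alpha})$ is essentially equivalent to the proposition itself, so the bootstrap as sketched is circular. The second assertion is in worse shape still, since there you would need the expansion of $R$ one further order, to $s^{-\alpha-1}$. As written the argument does not close; you would need either a genuine proof that $s^{\alpha}R(s)$ converges (and, for part two, a second-order expansion), or a direct pathwise domination of step functions as in the paper.
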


\begin{proof}
 Note that there is a large enough $T$ such that if $t\ge T$, we have
\begin{equation}
    \frac{r_1}{\tau_1}(t+\tau_1)^{-\alpha}>\frac{r_i}{\tau_i}t^{-\alpha},\quad\forall\>i\in\{2,\dots,n\}.
    \label{1:decreasing}
\end{equation}

We will see that after time $T$, choosing action 1 each time the agent faces a choice will give the best possible reward. 

\begin{figure}
    \centering
    \includegraphics[width=0.48\textwidth]{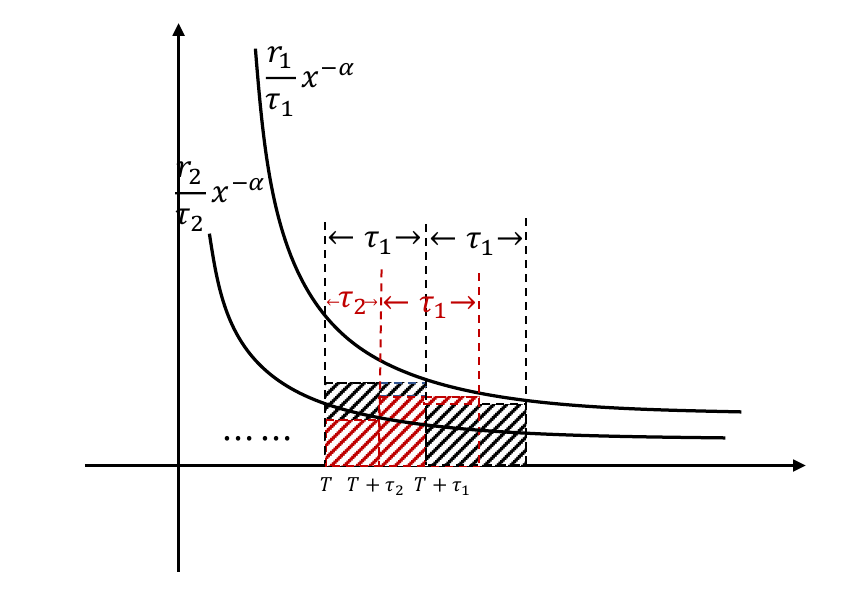}
    \caption{Viewing a discrete reward process as a continuous reward process from the corresponding step functions below the curves $y=\frac{r_1}{\tau_1}x^{-\alpha}$ and  $y=\frac{r_2}{\tau_2}x^{-\alpha}$. This process is similar to approximating the integral of the function $y=\frac{r_i}{\tau_i}x^{-\alpha}$ by the sum of areas of the rectangles below the curve. }
    \label{fig:dominant}
\end{figure}

Without loss of generality, we assume that at time $T$ the agent just completes some action and faces a new choice. Choosing action 1 thereafter will result in the following reward,
\[\sum_{k=1}^\infty r_1(T+k\tau_1)^{-\alpha}=\sum_{k=1}^\infty \tau_1\frac{r_1}{\tau_1}(T+k\tau_1)^{-\alpha}.\]
We observe that this is exactly the integral of a step function from $T$ to infinity and the step function is below the curve $y=\frac{r_1}{\tau_1}x^{-\alpha}$ as shown in Fig~\ref{fig:dominant}.  We denote the step function by $S_{T}^1$, where $S_T^1=\frac{r_1}{\tau_1}(T+k\tau_1)^{-\alpha}$ in the interval $\big(T+(k-1)\tau_1,T+k\tau_1\big]$ for any $k\in\mathbb{N}$.
Although the reward process in this scenario is discrete, considering the total reward as an integral of a step function allows us to treat it as a continuous process. This perspective enables us to discuss the reward accumulated over a specific time period, even if multiple actions were completed within that time period or some actions remained incomplete at the end of the period. 

Consider an arbitrary strategy (stationary or nonstationary) that the agent adopts after time $T$.
The total reward generated since time $T$ under this strategy is also an integral of some step function $S^\prime_T$ below the curve $y=\frac{r_1}{\tau_1}x^{-\alpha}$. The strategy used by the agent can be inferred by the step function. Hence, with a slight abuse of notation, we use the same notation of the step function  to denote the strategy under consideration (instead of $\boldsymbol{\sigma}$, which is a sequence of probability measures as in Section \ref{sec:model}).
Assume under the strategy $S^\prime_T$, the agent executes action $i\not=1$ starting from time $t\ge T$ until the time $t+\tau_i$ where $t+\tau_i$ is in the interval $[T+(k_t-1)\tau_1,T+k_t\tau_1]$. Then, $\forall\>x\in[T+(k_t-1)\tau_1,T+k_t\tau_1]$,
\[
    S^1_T(x)\ge \frac{r_1}{\tau_1}(T+k_t\tau_1)^{-\alpha}>  \frac{r_i}{\tau_i}(T+(k_t-1)\tau_1)^{-\alpha}\ge \frac{r_i}{\tau_i}(t+\tau_i)^{-\alpha},
\]
where the second inequality follows from \eqref{1:decreasing}.

Hence, executing action $i$ from time $t$ to time $t+\tau_i$ generates a reward strictly less than that of the always-choose-action-1 strategy (i.e., $S^1_T$) at the same time period. This holds true for any time period after time $T$ when the agent chooses an action other than action 1.  

Assume under strategy $S^\prime_T$,  the agent executes action 1 from time $t> T$ to time $t+\tau_1$ where $t$ is in the interval $(T+(k_t-1)\tau_1,T+k_t\tau_1)$.  Note that if $t=T+k_t\tau_1$ for some integer $k_t$, the reward generated from $t$ to $t+\tau_1$ will be the same for both strategies  $S_T^\prime$ and $S_T^1$.  During the time period $[t,t+\tau_1]$, the reward generated by $S^\prime_T$, which is denoted by $R([t,t+\tau_1],S^\prime_T)$, is
\[R([t,t+\tau_1],S^\prime_T)=\frac{r_1}{\tau_1}(t+\tau_1)^{-\alpha}\tau_1\,.\]
During the same time period, the accumulated reward under strategy $S^1_T$, which is denoted by $R([t,t+\tau_1],S^1_T)$, is
\[
\begin{aligned}
    &R([t,t+\tau_1],S^1_T)\\
  &\quad=\frac{r_1}{\tau_1}(T+k_t\tau_1)^{-\alpha}(T+k_t\tau_1-t)\\
    &\quad\quad+\frac{r_1}{\tau_1}(T+(k_t+1)\tau_1)^{-\alpha}(t+\tau_1-T-k_t\tau_1).
\end{aligned}
\]
In fact, under the strategy $S^1_T$, the agent only receives the reward once exactly at time $T+k_t\tau_1$. But as we mentioned before, we consider an equivalent  continuous reward process which allows for the above calculation. We can show the following inequality
\[R([t,t+\tau_1],S^1_T)>R([t,t+\tau_1],S^\prime_T)\]
by the convexity of the function $x^{-\alpha}$.
In fact, this follows from the inequality
\begin{equation}\label{2:convex}
    \frac{(T+k_t\tau_1)^{-\alpha}-(t+\tau_1)^{-\alpha}}{t+\tau_1-T-k_t\tau_1}>\frac{(t+\tau_1)^{-\alpha}-(T+(k_t+1)\tau_1)^{-\alpha}}{T+k_t\tau_1-t}.
\end{equation}

In summary, for an arbitrary strategy $S_T^\prime$, if it executes action $i\not=1$ after time $T$, it generates a reward strictly smaller than the always-choose-action-1 strategy during the same time period. If it executes action $i=1$, the reward generated cannot be larger than the always-choose-action-1 strategy during the same time period.
Thus, the always-choose-action-1 strategy is the agent's best response after  time $T$ for any mass distribution over the action space $A$ such that 
\[\frac{r_1}{\tau_1(\boldsymbol{\mu})}>\frac{r_2}{\tau_2(\boldsymbol{\mu})}\geq\cdots\geq\frac{r_n}{\tau_n(\boldsymbol{\mu})}.\]

Now suppose the stationary mass distribution $\boldsymbol{\mu}$ is such that
    \[\frac{r_1}{\tau_1(\boldsymbol{\mu})}=\frac{r_2}{\tau_2(\boldsymbol{\mu})}=\ldots
=\frac{r_n}{\tau_n(\boldsymbol{\mu})},\] 
and  \[\tau_1(\boldsymbol{\mu})<\tau_i(\boldsymbol{\mu}),\quad\forall\,i\not=1.\]
we prove that always-choose-action-1 is the unique best response against $\bmu$ regardless of the time.  We only need to show that 
for any $t\ge 0$, if an arbitrary strategy $S^\prime$ involves executing action $i\not=1$ from time $t$ to time $t+\tau_i$, it generates a reward $R([t,t+\tau_i],S^\prime)$ strictly smaller than the always-choose-action-1 strategy in the same time period.

First, we note that
any strategy corresponds to a step function below the same curve $y=\frac{r_i}{\tau_i}x^{-\alpha}$ where $i$ is an arbitrary number in  $A$.
Since $\tau_i>\tau_1$, there must be a smallest integer $k_t\in\mathbb{N}$ such that $k_t\tau_1\in(t,t+\tau_i]$. If $k_t\tau_1=t+\tau_i$, it is clear that always-choose-action-1 strategy generates a strictly greater reward than $R([t,t+\tau_i],S^\prime)$ in the time interval $[t,t+\tau_i]$. Now we assume $k_t\tau_1<t+\tau_i$.
Note  that the inequality \eqref{2:convex}  follows from the fact that 
the utility function $u(r,t)$ is convex in time $t$. 
Hence, a similar inequality can be established to prove that for time interval $[t+\tau_i-\tau_1,t+\tau_i]$, and as a result always-choose-action-1 strategy  generates a reward greater than $S^\prime$. For time interval $[t,t+\tau_i-\tau_1]$, it is also clear that always-choose-action-1 strategy generates a reward strictly greater than $S^\prime$.
Thus, always-choose-action-1 is the unique best response given the stationary mass distribution $\boldsymbol{\mu}$.
\end{proof}

 Proposition \ref{prop:alpha_largerthan1} presumes a given agent mass distribution and focuses solely on the agents' strategies as time approaches infinity, which may never manifest in reality. Nevertheless, Proposition \ref{prop:alpha_largerthan1} plays a crucial role in comprehending the behavior of agents if a stationary equilibrium does indeed exist.
According to Proposition \ref{prop:switchingornot}, the best response against certain agent mass distribution  could be non-stationary (time-dependent).  Nevertheless, this simply suggests that the best response map \eqref{bestresponse} may not be well defined and  does not imply the non-existence of a stationary equilibrium. In fact, it is Proposition \ref{prop:alpha_largerthan1} that enables the demonstration of the absence of a stationary equilibrium for some instances of Game $\mathcal{G}$. Now we give the proof of Proposition \ref{coro:se_nonexistence}.

\begin{proof}
Fix an arbitrary $\alpha>1$. Assume a stationary equilibrium exists and w.l.o.g., it is a stationary mass distribution $\boldsymbol{\mu}$ such that, 
\[\frac{r_1}{\tau_1(\boldsymbol{\mu})}=\frac{r_2}{\tau_2(\boldsymbol{\mu})}=\cdots=\frac{r_k}{\tau_k(\boldsymbol{\mu})}>\frac{r_{k+1}}{\tau_{k+1}(\boldsymbol{\mu})}\ge\cdots\ge\frac{r_n}{\tau_n(\boldsymbol{\mu})}.\]
By Definition~\ref{def:stationary_ne}, a stationary optimal strategy exists for the agents. Hence, the  best response against $\bmu$ as in Proposition \ref{prop:alpha_largerthan1}, which is optimal when time goes to infinity, must be optimal from the beginning.  
It follows that given $\bmu$, it must be optimal for the agents to choose the action $i\in\{1,\dots,k\}$ only if $\tau_i(\boldsymbol{\mu})$ is the smallest among all sojourn times.

Consider the constant action execution time model with $n=2$. The reward rates are $\frac{r_1}{t_1}$ and $\frac{r_2}{t_2}$ where $\frac{r_1}{t_1}>\frac{r_2}{t_2}$ and $r_1\not= r_2$. We further assume that the resource supply rate is $b_1$ for action 1 and the agent mass $m$ is large enough. Now consider the following three categories of candidates for a stationary equilibrium.

First, all agents choose action 1. Since $m=b_1(t_1+w_1)$ is large enough and the supply rate $b_1$ is finite,
\[\frac{r_1}{t_1+w_1}<\frac{r_2}{t_2},\]
so this cannot be an equilibrium.

Second, all agents choose action 2. This can not be an equilibrium, since
\[\frac{r_1}{t_1}>\frac{r_2}{t_2}>\frac{r_2}{t_2+w_2}.\]

Third, assume positive mass of agents over both actions. If this is an equilibrium, it must follow that
\[\frac{r_1}{t_1+w_1}=\frac{r_2}{t_2+w_2},\]
and 
\[t_1+w_1=t_2+w_2,\]
which implies $r_1=r_2$. Since $r_1\not=r_2$, this configuration cannot be an equilibrium.

After exploring all potential candidates for a stationary equilibrium, none of them meet the necessary conditions required for a stationary equilibrium as in Definition~\ref{def:stationary_ne}. Consequently, it can be concluded that there is no stationary equilibrium.
\end{proof}

\section{Proof of Proposition \ref{prop:poa=2}}\label{appendix:proof_prop:poa=2}

By Theorem \ref{thm:se_nonexpo}, when $0\le\alpha\le1$, for any  stationary equilibrium $\boldsymbol{\mu}^\dagger$, it is optimal for the agents to choose the action with the largest possible reward rate, i.e.,
$\max_{i\in A}\frac{r_i}{\tau_i(\boldsymbol{\mu}^\dagger)}$. 
 This is also true when $\alpha>1$ and a stationary equilibrium $\boldsymbol{\mu}^\dagger$ exists  by Proposition \ref{prop:alpha_largerthan1}.
Thus, we do not need to treat the two cases of power law discounting separately. We give one argument for both cases as follows. 

Let $\boldsymbol{\mu}^\dagger=(\mu^\dagger_i)_{i\in A}$ be the mass distribution of a stationary equilibrium and $\boldsymbol{\mu}^*=(\boldsymbol{\mu}^*_i)_{i\in A}$
be the socially optimal mass distribution. 



Similar to the proof of Proposition \ref{prop:poa_m+1}, define
 the set of actions which are used at least as frequently in the equilibrium than in the social optimum,
\[X=\left\{i\in A\mid x_{i}(\boldsymbol{\mu}^\dagger) \geq {x}_{i}(\boldsymbol{\mu}^*) \right\}.\]

First, notice that
\[
\sum_{i\in X}r_{i}x_i(\boldsymbol{\mu}^*)
\le\sum_{i\in X}r_{i}x_i(\boldsymbol{\mu}^\dagger)
\le\sum_{i\in A}r_{i}x_i(\boldsymbol{\mu}^\dagger)
=SW(\boldsymbol{\mu}^\dagger).
\]

For those actions $i\notin X$, it follows from the monotone property of $\tau_i$ function (Assumption \ref{assum_sojourntime}) and $x_i(\boldsymbol{\mu}^\dagger)<x_i(\boldsymbol{\mu}^*)$ that,
\begin{equation}
 \tau_i(\boldsymbol{\mu}^*)\ge\tau_i(\boldsymbol{\mu}^\dagger).\label{ineq:cond1}   
\end{equation}
By the equilibrium condition \eqref{eq:se_nonexpo_cond} in Theorem \ref{thm:se_nonexpo}, for any $i,k\in A^\dagger$ and $j\in A/A^\dagger$
\begin{equation}
\frac{r_i}{\tau_i(\boldsymbol{\mu}^\dagger)}=\frac{r_k}{\tau_k(\boldsymbol{\mu}^\dagger)}\geq \frac{r_j}{\tau_j(\boldsymbol{\mu}^\dagger)}.\label{ineq:cond2}  
\end{equation}
Using \eqref{ineq:cond1} and \eqref{ineq:cond2} yields 
\[
\begin{aligned}
    &\sum_{i\notin X}r_{i}x_i(\boldsymbol{\mu}^*)\\
    &\quad=\sum_{i\notin X}\frac{r_{i}}{\tau_{i}({\boldsymbol{\mu}}^*)}\boldsymbol{\mu}^*
_{i}\le
\sum_{i\notin X} \frac{r_{i}}{\tau_{i}(\boldsymbol{\mu}^\dagger)}\boldsymbol{\mu}^*_i 
\le\frac{r_{l}}{\tau_{l}(\boldsymbol{\mu}^\dagger)} \sum_{i\notin X} \boldsymbol{\mu}^*_i\le\frac{r_{l}}{\tau_{l}(\boldsymbol{\mu}^\dagger)} m\\
&\quad=\sum_{i\in A} \frac{r_{i}}{\tau_{i}(\boldsymbol{\mu}^\dagger)} \mu^\dagger_i\leq \text{SW}(\boldsymbol{\mu}^\dagger),
\end{aligned}
 \]
where $l\in A^\dagger$.  

Now we have
\[\text{SW}(\boldsymbol{\mu}^*)=\sum_{i\in X}r_{i}x_i(\boldsymbol{\mu}^*)+\sum_{i\notin X}r_{i}x_i(\boldsymbol{\mu}^*)=2{\text{SW}}(\boldsymbol{\mu}^\dagger).\]
Therefore, PoA$\le2$.

For the lower bound, we give a sequence of instances  for which the ratio inside the supremum in \eqref{eq:PoA_definition} approaches $2$.   These instances are within regimes of low discounting ($\alpha\in[0,1]$).
We consider  the constant action execution time model with $n=2$, i.e., the two-action model. Assume
 $r_1$, $r_2$, $t_2$, $m$, and $b_1$ are fixed.  Assume the resource supply rate $b_2$ is infinite, that is, $w_2$ is always equal to 0.  
Let $r_1<r_2$,
\[\frac{r_1}{t_1+w_1}=\frac{r_2}{t_2},\]
and
\[b_1(t_1+w_1)=m.\]
At the equilibrium, all agents choose action 1 and the resulting average reward is $r_1b_1$.
 The social welfare is optimized if the 
mass $b_1 w_1 = m - b_1 t_1$ of agents waiting for action $1$ take action $2$ instead.
The optimal average reward is given by
\[r_1b_1+r_2\frac{m-b_1t_1}{t_2}=r_1b_1+\frac{r_1b_1}{m}(m-b_1t_1)=2r_1b_1-\frac{b_1^2r_1t_1}{m}.\]
Thus, PoA is
\[\text{PoA}=2-\frac{b_1}{m}t
_1\rightarrow2,\]
as $t_1\rightarrow 0$.

\section{Proof of Proposition \ref{prop:stabiltity}}\label{appendix:proof_prop:stability}
Let $\mu^\dagger_1$ denote an equilibrium of the game $\barcF$. To prove $\mu^\dagger_1$ is locally asymptotically stable under dynamic \eqref{dynamic:average}, it suffices to show the following:
\begin{itemize}
    \item[(a)] When $\mu^\dagger_1\in[0,m)$, if $\mu_1>0$ and $\mu_1\rightarrow(\mu^\dagger_1)^+$, 
    \[\frac{r_1}{\tau_1(\mu_1)}-\frac{r_2}{\tau_2(\mu_1)}\le0.\]
      \item[(b)] When $\mu^\dagger_1\in(0,m]$, if $\mu_1>0$ and $\mu_1\rightarrow(\mu^\dagger_1)^-$, 
    \[\frac{r_1}{\tau_1(\mu_1)}-\frac{r_2}{\tau_2(\mu_1)}\ge0.\]
\end{itemize}
Note that in (a) and (b), it is not necessary to require the inequality to be strict since the equality holds implies that there is a continuum of equilibria near $\mu_1^\dagger$ by continuity.
We will prove (a) and (b) through a case by case analysis.

First, assume $\gamma_1\frac{\mu^\dagger_1}{t_1}+\gamma_2\frac{m-\mu^\dagger_1}{t_2}<b$, i.e., the constraint \eqref{2a1r:resource_constraint} is not tight at the equilibrium point $\mu^\dagger_1$.
Then,  
 \[\frac{r_1}{\tau_1(\mu^\dagger_1)}-\frac{r_2}{\tau_2(\mu^\dagger_1)}=\frac{r_1}{t_1}-\frac{r_2}{t_2}=\begin{cases}
     \le 0&\text{ if }\mu^\dagger_1=0\\
       =0&\text{ if }\mu^\dagger_1\in(0,m)\\
         \ge 0&\text{ if }\mu^\dagger_1=m
 \end{cases}\]

If $\mu_1>0$ and $\mu_1\rightarrow(\mu^\dagger_1)^+$ where $\mu^\dagger_1\in[0,m)$, the constraint \eqref{2a1r:resource_constraint} is not tight at $\mu_1$ by continuity. It follows that (a) holds since
 \[\frac{r_1}{\tau_1(\mu_1)}-\frac{r_2}{\tau_2(\mu_1)}=\frac{r_1}{t_1}-\frac{r_2}{t_2}=\begin{cases}
     \le 0&\text{ if }\mu^\dagger_1=0\\
       =0&\text{ if }\mu^\dagger_1\in(0,m)
 \end{cases}\]

 If $\mu_1>0$ and $\mu_1\rightarrow(\mu^\dagger_1)^-$ where $\mu^\dagger_1\in(0,m]$, the constraint \eqref{2a1r:resource_constraint} is also not tight at $\mu_1$ by continuity. It follows that  (b) holds since
 \[\frac{r_1}{\tau_1(\mu_1)}-\frac{r_2}{\tau_2(\mu_1)}=\frac{r_1}{t_1}-\frac{r_2}{t_2}=\begin{cases}
    =0&\text{ if }\mu^\dagger_1\in(0,m)\\
         \ge 0&\text{ if }\mu^\dagger_1=m
 \end{cases}\]

Second, assume $\gamma_1\frac{\mu^\dagger_1}{t_1}+\gamma_2\frac{m-\mu^\dagger_1}{t_2}=b$, i.e., the constraint \eqref{2a1r:resource_constraint} is  tight at the equilibrium point $\mu^\dagger_1$.
Then,  
 \[\frac{r_1}{\tau_1(\mu^\dagger_1)}-\frac{r_2}{\tau_2(\mu^\dagger_1)}=\frac{r_1}{t_1}-\frac{r_2}{t_2}=\begin{cases}
     \le 0&\text{ if }\mu^\dagger_1=0\\
       =0&\text{ if }\mu^\dagger_1\in(0,m)\\
         \ge 0&\text{ if }\mu^\dagger_1=m
 \end{cases}\]
Now we further assume $\frac{\gamma_1}{t_1}-\frac{\gamma_2}{t_2}>0$.

If $\mu_1>0$ and $\mu_1\rightarrow(\mu^\dagger_1)^+$ where $\mu^\dagger_1\in[0,m)$, the constraint \eqref{2a1r:resource_constraint} is still tight at $\mu_1$ since $\frac{\gamma_1}{t_1}-\frac{\gamma_2}{t_2}>0$. It follows that (a)  holds since
 \[
 \begin{aligned}
      \frac{r_1}{\tau_1(\mu_1)}-\frac{r_2}{\tau_2(\mu_1)}&=\frac{r_1}{t_1+\gamma_1w(\mu_1)}-\frac{r_2}{t_2+\gamma_2w(\mu_1)}\\
      &=\frac{r_1t_2-r_2t_1+w(\mu_1)\big(r_1\gamma_2-r_2\gamma_1\big)}{\big(t_1+\gamma_1w(\mu_1)\big)\big(t_2+\gamma_2w(\mu_1)\big)}\le0,
 \end{aligned}
 \]
where the last inequality holds because $\frac{\gamma_1}{t_1}-\frac{\gamma_2}{t_2}>0$ and $\frac{r_1}{t_1}-\frac{r_2}{t_2}\le0$ as $\mu^\dagger_1\in[0,m)$.

 If $\mu_1>0$ and $\mu_1\rightarrow(\mu^\dagger_1)^-$ where $\mu^\dagger_1\in(0,m]$, the constraint \eqref{2a1r:resource_constraint} is not tight at $\mu_1$ since $\frac{\gamma_1}{t_1}-\frac{\gamma_2}{t_2}>0$. It follows that (b)  holds since
 \[\frac{r_1}{\tau_1(\mu_1)}-\frac{r_2}{\tau_2(\mu_1)}=\frac{r_1}{t_1}-\frac{r_2}{t_2}=\begin{cases}
    =0&\text{ if }\mu^\dagger_1\in(0,m)\\
         \ge 0&\text{ if }\mu^\dagger_1=m
 \end{cases}\]

 When $\frac{\gamma_1}{t_1}-\frac{\gamma_2}{t_2}<0$,  (a) and (b)  hold by symmetry (e.g., exchange the roles of action 1 and action 2).

 When $\frac{\gamma_1}{t_1}-\frac{\gamma_2}{t_2}=0$, the equation $\gamma_1\frac{\mu_1}{t_1}+\gamma_2\frac{m-\mu_1}{t_2}=b$ always holds for $\mu_1\in[0,m]$. By a similar analysis as the first case when the constraint \eqref{2a1r:resource_constraint} is not tight, (a) and (b) still hold.

Third, assume $\gamma_1\frac{\mu^\dagger_1}{t_1}+\gamma_2\frac{m-\mu^\dagger_1}{t_2}>b$, i.e., the constraint \eqref{2a1r:resource_constraint} is   tight at the equilibrium point $\mu^\dagger_1$.
Then,  
 \[
 \begin{aligned}
      \frac{r_1}{\tau_1(\mu^\dagger_1)}-\frac{r_2}{\tau_2(\mu^\dagger_1)}&=\frac{r_1}{t_1+\gamma_1w(\mu^\dagger_1)}-\frac{r_2}{t_2+\gamma_2w(\mu^\dagger_1)}\\
      &=\begin{cases}
     \le 0&\text{ if }\mu^\dagger_1=0\\
       =0&\text{ if }\mu^\dagger_1\in(0,m)\\
         \ge 0&\text{ if }\mu^\dagger_1=m
 \end{cases}
 \end{aligned}
\]
By continuity, the constraint \eqref{2a1r:resource_constraint} is always  tight in  a small neighborhood of the equilibrium point $\mu^\dagger_1$. For $\mu_1>0$ and $\mu_1\rightarrow(\mu^\dagger_1)^+$ or $\mu_1\rightarrow(\mu^\dagger_1)^-$, we have  
\[\bar g(\mu_1):=\frac{r_1}{\tau_1(\mu_1)}-\frac{r_2}{\tau_2(\mu_1)}=\frac{r_1}{t_1+\gamma_1w(\mu_1)}-\frac{r_2}{t_2+\gamma_2w(\mu_1)},
\]
where $w(\mu_1)\ge0$ is the solution to the following equation  regarding $w$,
\[\gamma_1\frac{\mu_1}{t_1+\gamma_1w}+\gamma_2\frac{m-\mu_1}{t_2+\gamma_2w}=b.\]
By Implicit Function Theorem and some algebra, the derivative of $\bar g$ with respect to $\mu_1$ is given by,
\[
\begin{aligned}
    &\frac{\partial \bar g}{\partial \mu_1}(\mu_1)\\
    &=-\frac{\big(\gamma_2\tau_1(\mu_1)-\gamma_1\tau_2(\mu_1)\big)\big(\gamma_2\tau_1(\mu_1)r_2\tau_1(\mu_1)-\gamma_1\tau_2(\mu_1)r_1\tau_2(\mu_1)\big)}{\gamma_2^2(m-\mu_1)(\tau_1(\mu_1))^3\tau_2(\mu_1)+\gamma_2^2\mu_1\tau_1(\mu_1)(\tau_2(\mu_1))^3}.
\end{aligned}
\]

Now assume $\mu^\dagger_1=0$ and $\bar g(\mu^\dagger_1)<0$, for $\mu_1\rightarrow(\mu^\dagger_1)^+$, $\bar g(\mu^\dagger_1)<0$ by continuity. Hence, (a) is true.
Assume $\mu^\dagger_1=m$ and $\bar g(\mu^\dagger_1)>0$, for $\mu_1\rightarrow(\mu^\dagger_1)^-$, $\bar g(\mu^\dagger_1)>0$ by continuity. Hence, (b) is true.

Thus, we can assume $\bar g(\mu^\dagger_1)=0$ at the equilibrium point $\mu^\dagger_1$. Then it suffices to show that $\bar g(\mu_1)$ is decreasing in $\mu_1$ in a small neighborhood of $\mu^\dagger_1$. This is because if  $\mu_1\rightarrow(\mu^\dagger_1)^+$, $\bar g(\mu_1)\le \bar g(\mu_1^\dagger)=0$, (a) is true and if  $\mu_1\rightarrow(\mu^\dagger_1)^-$, $\bar g(\mu_1)\ge \bar g(\mu_1^\dagger)=0$, (b) is true. $\bar g(\mu_1)$ is indeed decreasing in $\mu_1$ in a small neighborhood of $\mu^\dagger_1$. Since $\bar g(\mu^\dagger_1)=0$, i.e., $r_2\tau_1(\mu_1^\dagger)=r_1\tau_2(\mu_1^\dagger)$, it follows that, 
\[\frac{\partial \bar g}{\partial \mu_1}(\mu_1^\dagger)=-\frac{r_1\tau_2(\mu_1^\dagger)\big(\gamma_2\tau_1(\mu_1^\dagger)-\gamma_1\tau_2(\mu_1^\dagger)\big)^2}{\gamma_2^2(m-\mu_1^\dagger)(\tau_1(\mu_1^\dagger))^3\tau_2(\mu_1^\dagger)+\gamma_1^2\mu_1^\dagger\tau_1(\mu_1^\dagger)(\tau_2(\mu_1^\dagger))^3}\,,\]
which is negative.
We have shown that (a) and (b) are always true. Therefore, the set of equilibria of game $\barcF$ is locally asymptotically stable.

\section{Proof of Proposition \ref{prop:unstable}}\label{appendix:proof_prop:unstable}

We will only prove for one case of \eqref{condition:unstable_1}, i.e., 
\begin{equation}
    \frac{\gamma_1}{\gamma_2}>\frac{t_1}{t_2},\ \frac{\gamma_1}{\gamma_2}<\frac{e^{\beta t_1}-1}{e^{\beta t_1}}.\label{condition:unstable_1a}
\end{equation}
The other case is similar by symmetry, e.g., we can exchange the roles of action 1 and action 2.

Fix any $m>0$ and $\mu_1^\dagger\in(0,m)$. We first show that there exist $(t_1, t_2, r_1,r_2, \gamma_1,\gamma_2, b, \beta)$ $\in\mathbb{R}_+^8$  satisfying conditions \eqref{condition:unstable_1} and \eqref{condition:unstable_2}. It is clear that there exist $(t_1, t_2,\gamma_1,\gamma_2, \beta)$ $\in\mathbb{R}_+^5$ satisfying  \eqref{condition:unstable_1}. Fix one $(t_1, t_2,\gamma_1,\gamma_2, \beta)$ $\in\mathbb{R}_+^5$ satisfying  \eqref{condition:unstable_1}, there must 
exist $b>0$ such that,
\begin{equation}
    b<\gamma_1\frac{
\mu_1^\dagger}{t_1}+\gamma_2\frac
{m-\mu_1^\dagger}{t_2}.\label{condition:unstable_2_b}
\end{equation}
Fix one such $b$, and since $m,b,t_1,t_2,\gamma_1,\gamma_2$ are fixed, $\tau_1$ and $\tau_2$ are well defined functions of $\mu_1$ by Lemma \ref{lemma:sojourn_time}. In fact, by \eqref{condition:unstable_2_b}, the resource constraint \eqref{2a1r:resource_constraint} is tight. $\tau_1(\mu_1^\dagger)$ and $\tau_2(\mu_1^\dagger)$  are given by $\tau_1(\mu_1^\dagger)=t_1+\gamma_1w(\mu_1^\dagger)$
and $\tau_2(\mu_1^\dagger)=t_2+\gamma_2w(\mu_1^\dagger)$ respectively where $w(\mu_1^\dagger)\ge0$ is the is the solution to the following equation regarding $w$,
\begin{equation}
    \gamma_1\frac{\mu_1^\dagger}{t_1+\gamma_1w}+\gamma_2\frac{m-\mu_1^\dagger}{t_2+\gamma_2w}=b.\label{eq:w}
\end{equation}

Now we can choose $(r_1,r_2)\in\mathbb{R}_+^2$ such that
 \begin{equation}
     \frac{r_1}{e^{\beta \tau_1(\mu_1^\dagger)}-1}=\frac{r_2}{e^{\beta \tau_2(\mu_1^\dagger)}-1}.\label{condition:unstable_2_r}
 \end{equation}
Hence, for any $m>0$ and $\mu_1^\dagger\in(0,m)$, there exist $(t_1, t_2, r_1,r_2, \gamma_1,\gamma_2, $
$b, \beta)$ $\in\mathbb{R}_+^8$  satisfying conditions \eqref{condition:unstable_1} and \eqref{condition:unstable_2}.

Now fix $m>0$, $\mu_1^\dagger\in(0,m)$ and $(t_1, t_2, r_1,r_2, \gamma_1,\gamma_2, b, \beta)$ $\in\mathbb{R}_+^8$ satisfying \eqref{condition:unstable_1} and \eqref{condition:unstable_2}. We will show that $\mu_1^\dagger$ is an unstable equilibrium of $\cFbeta$.
Define a function $g^\beta$ as,
\[g^\beta(\mu_1)=\frac{r_1}{e^{\beta \tau_1(\mu_1)}-1}-\frac{r_2}{e^{\beta \tau_2(\mu_1)}-1}.\]
By \eqref{condition:unstable_2_r}, $g^\beta(\mu_1^\dagger)=0$ and $\mu_1^\dagger$ is an equilibrium. To prove  $\mu_1^\dagger$ is unstable, we only need to show $g^\beta(\mu_1)$ is strictly increasing in $\mu_1$ in a small neighborhood of $\mu_1^\dagger$. 
Assume this is true. If $\mu_1>\mu^\dagger_1$ and
$\mu_1$ is close enough to $\mu^\dagger_1$, $g^\beta(\mu_1)> g^\beta(\mu_1^\dagger)=0$, the mass flow will move towards action 1 under dynamic \eqref{dynamic:discounted} and $\mu_1$ will continue to increase, which indicates that $\mu_1^\dagger$ is not stable. If $\mu_1<\mu^\dagger_1$ and $\mu_1$  
 is close enough to $\mu^\dagger_1$, $ g^\beta(\mu_1)< g^\beta(\mu_1^\dagger)=0$,  the mass flow will move towards action 2 under dynamic \eqref{dynamic:discounted} and $\mu_1$ will continue to decrease, which also indicates that $\mu_1^\dagger$ is not stable.

It suffices to prove that $\frac{\partial g^\beta}{\partial \mu_1}(\mu_1^\dagger)>0$. By applying Implicit Function Theorem to $w(\mu_1)$, which is the unique positive solution to equation \eqref{eq:w}, we have 
\[
\begin{aligned}
    \frac{\partial g^\beta}{\partial \mu_1}(\mu_1^\dagger)&=\frac{\tau_1(\mu_1^\dagger)\tau_2(\mu_1^\dagger)\big(\gamma_1\tau_2(\mu_1^\dagger)-\gamma_2\tau_1(\mu_1^\dagger)\big)}{\gamma_2^2(m-\mu_1^\dagger)(\tau_1(\mu_1^\dagger))^2+\gamma_2^2\mu_1^\dagger(\tau_2(\mu_1^\dagger))^2}\cdot\\
    &\quad\left(\frac{\gamma_2r_2\beta e^{\beta\tau_2(\mu_1^\dagger)}}{(e^{\beta\tau_2(\mu_1^\dagger)}-1)^2}-\frac{\gamma_1r_1\beta e^{\beta\tau_1(\mu_1^\dagger)}}{(e^{\beta\tau_1(\mu_1^\dagger)}-1)^2}\right)\\
&=\frac{\beta\tau_1(\mu_1^\dagger)\tau_2(\mu_1^\dagger)\big(\gamma_1\tau_2(\mu_1^\dagger)-\gamma_2\tau_1(\mu_1^\dagger)\big)}{\gamma_2^2(m-\mu_1^\dagger)(\tau_1(\mu_1^\dagger))^2+\gamma_2^2\mu_1^\dagger(\tau_2(\mu_1^\dagger))^2}\cdot\frac{r_1}{e^{\beta\tau_1(\mu_1^\dagger)}-1}\cdot\\
&\quad\left(\frac{\gamma_2 e^{\beta\tau_2(\mu_1^\dagger)}}{(e^{\beta\tau_2(\mu_1^\dagger)}-1)}-\frac{\gamma_1 e^{\beta\tau_1(\mu_1^\dagger)}}{(e^{\beta\tau_1(\mu_1^\dagger)}-1)}\right)>0\,,
\end{aligned}
\]
where the second equality follows from \eqref{condition:unstable_2_r} and the last inequality follows from \eqref{condition:unstable_1a} since
\[\gamma_1\tau_2(\mu_1^\dagger)-\gamma_2\tau_1(\mu_1^\dagger)=\gamma_1(t_2+\gamma_2w(\mu_1^\dagger))-\gamma_2(t_1+\gamma_2w(\mu_1^\dagger))=\gamma_1t_2-\gamma_2t_1>0,\]
and
\[\frac{\gamma_2 e^{\beta\tau_2(\mu_1^\dagger)}}{(e^{\beta\tau_2(\mu_1^\dagger)}-1)}-\frac{\gamma_1 e^{\beta\tau_1(\mu_1^\dagger)}}{(e^{\beta\tau_1(\mu_1^\dagger)}-1)}>\gamma_2-\gamma_1\frac{e^{\beta t_1}}{e^{\beta t_1}-1}>0.\]
Therefore, $\mu_1^\dagger$ is an unstable equilibrium of $\cFbeta$.

\end{document}